\newtheorem{theorem}{Theorem}[section]
\newtheorem{lemma}{Lemma}[section]
\newtheorem{assumption}{Assumption}
\newtheorem{corollary}[lemma]{Corollary}
\newtheorem{definition}{Definition}
\newtheorem{observation}[lemma]{Observation}
\newtheorem{conjecture}[theorem]{Conjecture}
\title{The Publication Choice Problem}
\author{
    Haichuan Wang\thanks{The work was done when Haichuan Wang was an undergraduate student at University of Chicago.}\\
    Harvard Univrsity\\
\texttt{haichuan\_wang@g.harvard.edu}\\
    \and
    Yifan Wu\thanks{The work was done when Yifan Wu was a PhD student at Northwestern University, supported by NSF ECCS 2216970, under the IDEAL Summer Research Exchange Program. }\\
    Microsoft Research\\
    \texttt{yifan.wu2357@gmail.com}\\
    \and
    Haifeng Xu\thanks{Haifeng Xu acknowledges support from the AI2050 program at Schmidt Sciences (Grant G-24-66104) and the Army Research Office Award
W911NF-23-1-0030. }\\
    University of Chicago\\
     \texttt{haifengxu@uchicago.edu}
}
 \newcommand{\model}{Publication Choice Problem}
\DeclareMathOperator*{\argmax}{arg\,max}
\newcommand{\venue}{v}
\newcommand{\act}{a}
\newcommand{\cost}{c}
\newcommand{\typespace}{\Theta}
\newcommand{\confavg}{\venue}
\newcommand{\ratiocost}{b}
\newcommand{\util}{u}
\newcommand{\spfrac}{\Omega}
\newcommand{\spcostratio}{r}
\newcommand{\vspcostratio}{\bm{\spcostratio}}
\newcommand{\type}{\theta}
\newcommand{\vtype}{\bm{\type}}
\newcommand{\vcost}{\bm{\cost}}
\newcommand{\vvenue}{\bm{\venue}}
\newcommand{\vact}{\bm{\act}}
\newcommand{\vmu}{\bm{\den}}
\newcommand{\vden}{\bm{\den}}
\newcommand{\uvec}{\bm{1}}
\newcommand{\den}{\mu}
\newcommand{\spadv}{\gamma}
\begin{document}

\maketitle

% Main content
\begin{abstract}

Researchers strategically choose where to submit their work in order to maximize its impact, and these publication decisions in turn determine venues’ impact factors. To analyze how individual publication choices both respond to and shape venue impact, we introduce a game-theoretic framework—coined the \emph{\model}—that captures this two‐way interplay. We show the existence of a pure-strategy equilibrium in the \model{} and its uniqueness under binary researcher types. Our characterizations of the equilibrium properties offer insights about what publication behaviors better indicate a researcher's impact level.
%and reveal how the disproportionate scaling of high-impact and low-impact researchers can result in the fluctuation in the impact of publication venues. 
Through equilibrium analysis,  we further investigate how labeling papers with ``spotlight'' affects the impact factor of venues in the research community. Our analysis shows that competitive venue labeling top papers with ``spotlight'' may decrease overall impact of other venues in the community, while less competitive venues with ``spotlight'' labeling have a opposite impact.

%The full version of this paper can be found at \url{https://haichuan23.github.io/files/pub-choice-full.pdf}, whereas 

\end{abstract}

\section{Introduction}
How to choose publication venues is a strategic choice of researchers because they derive rewards from publications—particularly in prestigious venues—which are central to a research career. As a result, researchers are often rational about their publication strategies in response to venue impacts. In turn, the average impacts of publication venues are also subject to the strategic behaviors of researchers and can co-evolve over time with such behaviors. For instance, the rising popularity of machine learning conferences has been associated with a perceived decline in the \emph{average} impact of publications. Consequently, researchers might instead choose smaller conferences that are considered more selective and more impactful.

Inspired by the feedback loop in the job market signaling game \citep{spence1978job}, we propose the \model{} and analyze its dynamics and equilibrium. 
The game consists of a continuum of researchers (agents) and a set of publication venues, defined as follows.  
\begin{itemize}
    \item Researchers (agents). Each researcher in our model is a principal investigator (PI), parameterized by a type representing impact level and endowed with a uniform time budget. The type captures the researcher's productivity, research taste, etc., which we assume is determined by a researcher's past actions and does not vary over the time considered in our model. The uniform time budget assumption reflects that PIs generally have a similar amount of time available for research. Students help convert the PI's time into publications. A large group of students does not increase the PI's time budget but reduces her publication costs.
    \item Publication venues. Each venue has a venue impact and an intrinsic competitiveness level that reflects on the cost of a publication for each researcher type on the venue. The venue impact is defined as the weighted average of researcher types who publish papers on the venue. The publication cost captures the publication venue's acceptance rate, location, preference over topics, etc., and does not vary over the time considered in our model.
\end{itemize} 
Researchers strategize to gain utility from publishing in high-impact venues. The utility of the researcher can result from the recognition of her paper by a venue where high-impact type researchers publish. 
Each researcher solves a utility maximization problem, with the actions being the number of publications in each venue, and subject to the constraint of total time budget. For example, consider an \model{} with two venues, venue $1$ and $2$. The researcher observes  impacts of the two venues as $\confavg_{1} = 0.2$ and $\confavg_{2} = 0.7$. Assuming the researcher uses up her time budget, she can   choose to publish either (A)  $3$ papers on venue $1$ and $1$ paper on venue $2$, or (B) $1$ paper on venue $1$ and $2$ papers on venue $2$. The utility gained from strategy (A) is lower than that of strategy (B), leading the researcher to prefer the latter. 
\begin{table}[htbp]
\centering
\label{tab:two-act}
\begin{tabular}{|c|c|c|c|}
\hline
 &\multicolumn{2}{c|}{\# publications}  & \multirow{3}{*}{Utility} \\
\cline{2-3}
& venue $1$ & venue $2$ & 
\\
\cline{1-3}
Impact & 0.2 & 0.7 & \\
\hline
Action 1 & 3 & 1 & $3 \cdot 0.2 + 0.7 = 1.3$ \\
\hline
Action 2 & 1 &  2 & $2 \cdot 0.7 + 0.2 = 1.6$ \\
\hline
\end{tabular}
\caption{The utility of two actions. Each action is a vector of the number of publications (\# publications) on each venue.}
\vspace{-3mm}
\end{table}

Upon observing venues' impacts from the last round, the researchers modify their publication strategies. In each round in our model, researchers simultaneously choose the number of papers to publish in each venue upon observing the impact factor of all the venues. After all researchers publish their papers, the impact of each venue is updated to the average type of researchers who publish in that venue. These updates are then observed by the researchers in the next round.
An equilibrium is reached if venue impacts do not change over rounds, i.e., when venue impacts align with researcher actions. 

By modeling and analyzing the \model{}, we are able to reveal the effect of researchers' strategic publication behaviors on the publication venues' impact. %We offer insights to publication venue organizers. 
We summarize our main results as below.
\begin{description}
%\item[Publication Cost] Under our model, the relative cost of publi
    \item[Equilibrium Existence] A pure-strategy equilibrium always exists (\Cref{prop:eq-existence}). Moreover, under the binary-type setting (researchers are either high-impact or low-impact), there exists a unique pure-strategy equilibrium (\Cref{thm:2p-uniqueness}).
    \item[Indicators of researcher's impact level] The total number of publications is not monotone in researcher's impact (\Cref{observation:low-more-total-pub}). However, the number of publications on the best venue is monotone in researcher's impact (\Cref{thm:top-venue-monotone}).
    \item[Spotlight Labeling] When a venue labels some papers as ``spotlight'' papers, there exists a threshold effect on the venue impact (\Cref{thm:sp-impact-compare}): a less competitive venue improves the impact of all venues by setting up spotlight labeling, while a more competitive venue decreases the impact of all venues. In fact, the spotlight papers divert the impact in a research community from regular venues to the special spotlight papers, thus reducing overall impacts. % in a  more competitive venue attract too much impact in the research community, thus 
\end{description}

The rest of the paper proceeds as follows. In \Cref{sec: Model and Notations}, we introduce the model of \model. In  \Cref{sec:analysis}, we characterize equilibrium properties. \Cref{sec:best-response} analyzes the best-response dynamic. \Cref{subsec:cost assumption} introduces our key assumption on publication costs, provides justification for it, and offers sanity checks of the model. \Cref{sec:eq analysis} shows the equilibrium properties, including the existence of an equilibrium and the signaling effect of publication numbers in researcher types. We focus on the binary-type setting in \Cref{sec:binary}. We are able to show the equilibrium uniquely exists in \Cref{subsubsec:binary-type}. With the equilibrium uniqueness, in \Cref{sec: spotlight-labeling}, we focus on the venue organizer's spotlight design problem and examine how switching to spotlight labeling may change the equilibrium outcome. 
In Section \ref{sec:future work}, we summarize our contribution and discuss future work. 

\subsection{Related Work}
\paragraph{The Science of Science. } Our paper contributes to the ``science of science'' literature, which studies the science of research publication such as  quantifying research impacts \cite{wang2013quantifying, frank2019evolution, perry2016count}, designing better review systems \citep{su2021truthful,zhang2024eliciting,stelmakh2021peerreview4all,lipton2019troubling, aziz2023group, boehmer2022combating, fromm2021argument, payan2022will} and understanding incentives/competition in academia \citep{zhang2022system,heckman2020publishing,ductor2020influence, manzoor2021uncovering, stelmakh2021catch}. There exists an extensive literature that models strategic behavior in research publication and attempts to understand  and improve peer review processes \citep{zhang2022system, shah2021systemic,wu2023isotonic,zhang2024eliciting,stelmakh2021peerreview4all,lipton2019troubling,wright2015mechanical, jecmen2020mitigating, meir2021market}. Our work also studies strategic behavior, but our focus is mainly on the impact on the research community with multiple publication venues. Along this line, the most relevant to us is  the concurrent work of \citet{ductor2020influence}, which similarly studies the evolution of publication venues in a research community. However, the goal of \citet{ductor2020influence} is completely different from ours, hence their detailed modeling. Motivated by publication conventions in economics, \citet{ductor2020influence} study the evolution of a research community with a single general-purpose publication venue and several field venues. 
However, motivated by the publication patterns in machine learning (or AI, generally), our work does not focus specifically  on the impact of a general-purpose venue\footnote{ML conferences include diverse topics, making research fitness  less important for AI researchers.}. 
Instead, our model considers the impact tiers of different venues, which applies to publication patterns in a general research field, such as sub-fields in computer science and economics. Despite similarities in some model components (e.g., the two-side model structure and   venue quality modeling), on modeling choice, \citet{ductor2020influence} models discrete submission strategy of general-purpose or field venues, with multiple equilibria arising from this discrete strategy space. Our work models the submission numbers to different venues as a vector in a continuous submission space. We are thus able to characterize the closed-form best response.  
 Besides, the data sets and technical results are   different.

\paragraph{Large Population Games. } Our \model{} is a game with a large population, which has been studied in multiple research fields including mean field games \cite{lasry2007mean,aumann1975values, lauriere2022scalable, perrin2020fictitious} and (non-atomic) congestion games \cite{milchtaich1996congestion, friedman1996dynamics, blonski1999anonymous, roughgarden2004bounding}.  Conceptually, of particular relevance is the congestion games where too many players picking a certain action will render that action  bad due to congestion. Our \model \ bears some similarity but differs fundamentally in at least two key aspects, which renders techniques in congestion games (e.g.\ potential function) inapplicable to our problem. First, the utility from each venue does not depend on the total number of players but rather on their average impact.
Second and more importantly, \model{} allows heterogeneous player types while congestion game does not. 
\section{A Model of Publication Choice }
\label{sec: Model and Notations}

\textbf{Conventional Notations.}  Throughout the paper, we use the following conventional mathematical notations. For any matrix $\vact \in \mathbb{R}^{m \times n}$, we use $\vact_{i} \in \mathbb{R}^n$ to denote  $i$-th row vector and $\vact_{:, j} \in \mathbb{R}^m$ to denote the $j$-th column vector. The ``$\cdot$'' is used for inner product of vectors, whereas ``$\odot$'' is for the Hadamard (entry-wise) product, i.e., $\bm{a}\odot \bm{b} = (a_1b_1, a_2b_2,\cdots)$. For notational convenience, we often write inner product as $\bm{a}\cdot \bm{b} $ without explicitly using the transpose notation.

\noindent We consider a game with a unit of continuum researchers and finitely many publication venues, denoted by set $V = \{1, \cdots, k\}$. 
Any researcher is characterized by a \textit{type} $\theta \in \mathbb{R}$, the researcher's impact factor, determined by her publication history, research taste, productivity, etc. 
Let $\Theta \subset \mathbb{R}$ denote the set of all possible researcher types, which is assumed to be discrete. A generic researcher type is denoted as $\theta_i \in \Theta$ and has mass $\mu(\theta_i)$. We sort the parameters $\theta_1 < \theta_2 < \cdots < \theta_n$ increasingly.   We assume a researcher's type $\type_i$ does not change over the short term considered in our \model{}, an assumption justified by a simulation in which we initialize the venues' impact at 50 random starting points. In all cases, the simple best-response dynamics converge rapidly—within only a few iterations—to the equilibrium of the game (\Cref{fig:fast_convergence}). The full experimental setup is described in Appendix~\ref{appdx:fast-convergence}.

% since we show that simple best response dynamics converges very quickly in a few rounds to an equilibrium of the game, as empirically shown in \Cref{appdx:fast-convergence}. 
\begin{table}[htbp]
\vspace{-2mm}
    \centering
    \begin{tabular}{c|c|c|c|c}
     \# Rounds to Convergence    & 4 & 5 & 6 & 7 \\
     \hline
      Proportion   & 2\% & 8\% & 86\% & 4\%\\  
      \hline
    \end{tabular}
    \vspace{-1mm}
    \caption{Rounds before converging to equilibrium. }
    \label{fig:fast_convergence}
    \vspace{-2mm}
\end{table}

% \begin{figure}[htbp]
% \centering
% \includegraphics[width=\columnwidth]{fast_convergence.pdf}
% \vspace{-3mm}
% \caption{Rounds before converging to equilibrium with random starting points. }
% \label{fig:fast_convergence}
% \end{figure}

We consider symmetric strategies in the sense that any two researchers of the same type select the same publication strategy. Let  vector $\bm{a}_i \in \mathbb{R}^k$ denote type $\theta_i$'s strategy profile, where each entry 
$\act_{i, j}$ is the number of publications that a type $\type_i$ researcher publishes on venue $j$.  We write the strategy of all researcher types as matrix $\vact$. 

In our model, a venue impact is tied to the types of researchers who publish on the venue. Before reading each paper in detail, the community tends to pay attention to venues which high-impact researchers publish on. Thus, under strategy profile $\vact$, the venue impact $\confavg_j$ is modeled as the weighted average type $\confavg_{j}= \frac{(\vact_{:, j}\odot\vden) \cdot \vtype}{(\vact_{:,j}\odot\vden)\cdot \uvec}$ of researchers who publish on venue $j$, where researcher types are weighted by their number of publications on venue $j$.

We model the cost for researcher $i$ to publish one paper on venue $j$ as $\cost_{i, j}$. 
Venues are sorted by their competitiveness in publication. 
Naturally, a more competitive venue is assumed to have a higher publication cost for every type of researchers.\footnote{We do not model paper acceptance or rejection in our model.  When the acceptance/rejection decision is less random, the cost is naturally deterministic. When the acceptance/rejection decision shows a higher degree of inconsistency and arbitrariness, such as in NeurIPS \citep{cortes2021inconsistency}, the cost captures the expected cost for one publication. As long as the researcher pays enough cost to surpass the venue's basic quality requirement, she can exploit the randomness in acceptance by submitting the same paper to similar venues until the paper gets accepted.} 
Thus, $\cost_{i, j}$ increases in venue index $j$ for each researcher of type $i$. 

A \model{} hence is specified by a tuple $(V, \typespace, \den, \vcost)$, with meanings summarized below.  
\begin{table}[htbp]
    \centering
    \begin{tabular}{c|c}
    \hline
$\vtype=(\type_i)_i  $  & researcher types (impact factors)\\
\hline
$\vden=(\den_i)_i$ &  density on each type $i$ in the community\\
    \hline
$\vact=(\act_{i, j})_{i, j}$         & type $i$'s  num of publications on venue $j$ \\
         \hline
  $\vcost = (\cost_{i, j})_{i, j}$       & type $i$'s cost of publication   on venue $j$\\
         \hline
$\vvenue=(\venue_j)_j$       &  average impact of each venue $j$ \\
         \hline
    \end{tabular}
    \vspace{-1mm}
    \caption{Notations for an \model{}.}
    \label{tab:notations}
    \vspace{-2mm}
\end{table}

 The researcher derives utility from publishing on venues with higher impacts.  %A researcher is typically known for her most impactful works, so the utility from a publication is marginally increasing in the venue impact. Given venue impacts $\vvenue$, a publication gains utility $\vvenue^\beta$. 
Formally, fixing venue impacts $\vvenue$, the researcher of type $\type_i$ publishes $\vact_i=(\act_{i, j})_j$ on each venue $j$ and  gains utility 
$
    \util_i(\vact_i, \vvenue)= ( 
\vact^{\alpha}_i \cdot \vvenue^{\beta})^{\frac{1}{\beta}}.$
Following an axiomatic characterization about   how to count research impact due to \citet{perry2016count},  we assume the utility function takes the form of a $\beta$-norm. Specifically, \citet{perry2016count} proves that  if a researcher's impact function over publications  satisfies monotonicity, independence, depth  relevance, and scale invariance, then the   only function format is the $\beta$-norm on the vector of citation numbers.  Parameters $\alpha, \beta$ decide the marginal utility gained from a publication and are homogenous across researchers. In the utility function, $\vact_i^\alpha$ captures how a researcher normalizes and counts publications, while $\vvenue^{\beta}$ calculates the utility gained from each publication count. 
We make the following natural assumptions on parameters $\alpha, \beta$: 
\begin{itemize}
    \item $\alpha \in (0,1)$, meaning that the researcher normalizes publication counts on the same venue in a marginally decreasing way.
    \item $\beta>1$, meaning that the utility $\vvenue^{\beta}$ from one publication count is marginally increasing in the venue impact. The reason for this is that a researcher is known for and typically cares more about her most impactful works. 
\end{itemize}

\paragraph{Best Responses.} Upon observing venue impacts $\vvenue$, researchers strategize to maximize the utility from publication. The publication game in real life evolves dynamically: the researchers observe the venue impacts from the previous round and decide their publication strategies for the current round; the venue impacts are updated from researchers' strategies and are observed again; a new loop of researcher best response starts. We define the best response for each researcher: fixing the venue impacts, each researcher  $i$  optimizes her publication utilities, subject to a total cost budget constraint, in Program \eqref{prog:best-response}. We normalize the time budget of all researchers to $1$. Note that this normalization also works for PIs with a large research lab, since a larger lab does not increase the PI's budget but lowers publication cost.\footnote{Program \eqref{prog:best-response} admits the same optimal solution as when the objective is the utility $(\vact^{\alpha}_i \cdot \vvenue^{\beta})^{\frac{1}{\beta}}$.}
\begin{align}
\max_{\vact_i}  \qquad
&\vact^{\alpha}_i \cdot \vvenue^{\beta}%(\vact^{\alpha}_i \cdot \vvenue^{\beta})^{\frac{1}{\beta}}\\
\qquad \text{s.t. }\, \vact_i \cdot \vcost_i \leq 1
\label{prog:best-response}
\end{align}

\vspace{-2mm}
\paragraph{The Equilibrium. } We study the following natural   equilibrium concept with a continuum of researchers, adapted from \citep{MV-93}. With a continuum of researchers, the venue impacts are unaffected by any single researcher’s strategy. The equilibrium condition requires that when all researchers are best responding, their strategies are consistent with the venue impacts. An equilibrium can be viewed as a fixed point in the dynamic loop consisting of best responses. 

\begin{definition}[Equilibrium with a Continuum of Researchers] 
For any \model{} $(V, \typespace, \den, \vcost)$ %with a continuum $\vtype$ of agents and cost matrix $\vcost$ of publications
, an action profile $\vact = \{ \act_{i, j} \}_{i,j}$ and venue impacts $\vvenue$ are a pure-strategy equilibrium if they satisfy the following two conditions:
\begin{itemize}
    \item \textbf{[Best Response]} The strategy $\vact^*_{i}$ of  any type-$\theta_i$ researcher  is a best response, i.e.\ solves Program \eqref{prog:best-response}. 
    % \begin{align}
    %     \vact^*_{i} = \argmax_{\vact_i \geq 0:\,  \vact_i \cdot \vcost_i \leq 1} ( \vact_{i})^{\alpha} \cdot \vvenue^{ \beta}. \label{eq:equilibrium-ic} 
    % \end{align}
    \item \textbf{[Consistency of Venue Impact]} The venue impacts are consistent with researchers' strategies:  
    \begin{align}
         \venue_j = \frac{(\vact^*_{:, j} \odot \vmu) \cdot  \vtype}{(\vact^*_{:,j}\odot \vmu) \cdot \uvec}, \, \, \,  \text{ for any venue } j 
     \label{eq:equilibrium-venue-consistency}
    \end{align}
\end{itemize}
\end{definition}

\section{Properties of  \model s}
\label{sec:analysis}

We analyze the equilibrium properties of 
the    \model{}. In \Cref{sec:best-response}, we derive the closed form of the best response problem for each researcher. 
In \Cref{sec:eq analysis}, we derive properties of \model{} at equilibrium. 

\subsection{ Characterizing Researchers' Best Responses}
\label{sec:best-response}

We start by characterizing a researcher's best response to the venue impacts as a solution to Program \ref{prog:best-response}. As shown in \Cref{prop:best-response}, 
 in the best response strategy of each researcher, the published amount of  papers is proportional to 1) the utility 
$\venue_{j}^{\frac{\beta}{1-\alpha}}$ of publication count on each venue, and 2) the marginal cost $(\cost_{i,j})^{\frac{1}{\alpha-1}}$ of one normalized publication count $\act_{i, j}^{\alpha}$ in utility.

\begin{restatable}{lemma}{propbestresponse}\label{prop:best-response}
    Let  $\vvenue$ be the venue impact vector. Then the best response of any  researcher of type $\type_i$ can be characterized in closed-form as follows:  
\begin{align*}
    \act_{i,j} &= \frac{(\cost_{i,j})^{\frac{1}{\alpha-1}} \cdot \venue_{j}^{\frac{\beta}{1-\alpha}}}{\vcost_{i}^{\frac{\alpha}{\alpha-1}} \cdot \vvenue^{\frac{\beta}{1-\alpha}}}
\end{align*}
\end{restatable}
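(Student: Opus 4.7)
Program~\eqref{prog:best-response} is a concave maximization over a convex feasible set: since $\alpha\in(0,1)$ and $\venue_j^{\beta}\ge 0$, every summand $a_{i,j}^{\alpha}\venue_j^{\beta}$ is concave in $a_{i,j}$, while the budget inequality $\vact_i\cdot\vcost_i\le 1$ and the implicit nonnegativity $\vact_i\ge 0$ are linear. Slater's condition holds trivially, so the Karush--Kuhn--Tucker conditions are both necessary and sufficient for optimality. My plan is therefore to write down the KKT conditions, express each $\act_{i,j}$ in terms of the unique Lagrange multiplier attached to the budget constraint, and then pin that multiplier down from the budget itself.

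First I would argue that the budget constraint binds at any optimum: the objective is strictly increasing in each $\act_{i,j}$ whenever $\venue_j>0$, so slack in the budget could be converted into extra utility. Similarly, the FOC expression below will be strictly positive whenever $\cost_{i,j}>0$ and $\venue_j>0$, which means the nonnegativity constraints are inactive and can be dropped. Next I would form the Lagrangian
\begin{align*}
L(\vact_i,\lambda) \;=\; \sum_j \act_{i,j}^{\alpha}\,\venue_j^{\beta} \;-\; \lambda\Bigl(\sum_j \act_{i,j}\,\cost_{i,j}-1\Bigr),
\end{align*}
and set $\partial L/\partial \act_{i,j}=0$, giving the stationarity condition $\alpha\,\act_{i,j}^{\alpha-1}\venue_j^{\beta}=\lambda\,\cost_{i,j}$. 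Solving for $\act_{i,j}$ (note that $\alpha-1<0$, so the exponents flip sign in the expected way) yields
\begin{align*}
\act_{i,j} \;=\; \Bigl(\tfrac{\alpha}{\lambda}\Bigr)^{\frac{1}{1-\alpha}}\;(\cost_{i,j})^{\frac{1}{\alpha-1}}\;\venue_j^{\frac{\beta}{1-\alpha}}.
\end{align*}

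Finally I would substitute this expression into the binding budget equation $\sum_j \act_{i,j}\,\cost_{i,j}=1$. The prefactor $(\alpha/\lambda)^{1/(1-\alpha)}$ pulls out of the sum, and since $\tfrac{1}{\alpha-1}+1=\tfrac{\alpha}{\alpha-1}$, what remains inside is exactly $\vcost_i^{\frac{\alpha}{\alpha-1}}\cdot\vvenue^{\frac{\beta}{1-\alpha}}$. Solving gives $(\alpha/\lambda)^{1/(1-\alpha)}=\bigl(\vcost_i^{\frac{\alpha}{\alpha-1}}\cdot\vvenue^{\frac{\beta}{1-\alpha}}\bigr)^{-1}$, and plugging back in produces precisely the closed form claimed in \Cref{prop:best-response}.

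The computation is a routine Lagrangian exercise, so I do not anticipate a genuine obstacle; the one item worth being careful about is the sign bookkeeping on the exponents $\tfrac{1}{\alpha-1}$ versus $\tfrac{1}{1-\alpha}$, and the implicit nondegeneracy assumption that each $\cost_{i,j}>0$ and at least one $\venue_j>0$ (otherwise the problem trivializes). Concavity plus linearity is what makes the single KKT stationary point automatically the global optimum, which is the key structural fact underlying the argument.
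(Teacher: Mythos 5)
Your proposal is correct and follows essentially the same route as the paper's proof: form the Lagrangian for Program~\eqref{prog:best-response}, use the first-order condition to express $\act_{i,j}$ in terms of the budget multiplier, and recover the multiplier from the binding budget constraint. Your added remarks on concavity, KKT sufficiency, and why the budget binds are sound refinements of the same argument rather than a different approach.
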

The proof of \Cref{prop:best-response} derives the closed form solution for Program \ref{prog:best-response} and is deferred to \Cref{appdx:best-response}.

\subsection{Natural Properties and Model Sanity Check}
\label{subsec:cost assumption}
As a warm-up, this subsection exhibits a few natural properties of the equilibrium hence also serves as a sanity check for the validity of our model above before we dive into more evolved analysis afterwards.

Before analyzing the equilibrium of the \model{}, we introduce the following key assumption, the \textit{Monotone Cost Ratio} (MCR). Specifically,  we assume the relative cost between a low type and a high type increases at higher/better venues. 
 
\begin{assumption}[Monotone Cost Ratio (MCR)]\label{assumption:increasing_cost_ratio} 
   The cost ratio between a low type and a high type increases with the venue index $j$, i.e.\ for all types $\type_i<\type_{i'}$ of researcher, and all venues $j<j'$, $  \frac{\cost_{i, j}}{\cost_{i', j}} < \frac{\cost_{i, j'}}{\cost_{i', j'}}$.
\end{assumption} 

Intuitively, the MCR assumption means that high-type researchers gain more \emph{relative} advantage on more selective publication venues. This is a widely adopted assumption  in principal-agent problem \citep{advance_mirco_theory} and mechanism design \citep{bergemann2002information} to describe the advantage of a more skilled agent. In these economic applications, MCR is assumed to capture the intuition that a higher or more qualified agent type will have increasing advantage in generating higher-quality outcomes/products (sometimes also known as \emph{Monotone Likelihood Ratio}, or MLR,  when exerted cost leads to ordered probabilistic outcomes). Notably, MLR is a widely adopted assumption in economic applications (see, e.g., the textbook \citep{advance_mirco_theory}) and even motivated much statistical study on testing MLR properties \citep{karlin1956theory}.

The following \Cref{prop:separable-cost} shows that the MCR property may be an intrinsic reason that  different publication venues often end up having different average impact in reality --- if the ratio $\frac{\cost_{i, j}}{\cost_{i', j}}$ was the same  on different venue $j$, then  all venues will have the same average impact at \emph{any} equilibrium of the game.

\begin{restatable}{proposition}{propseparablecost}
    \label{prop:separable-cost}
    Suppose the cost ratio $\frac{\cost_{i, j}}{\cost_{i', j}} = c(i; i')$ is a constant that is independent of the venue   $v_j$. Then the \model{} admits   a unique equilibrium in which all venues have the same average impact.  
\end{restatable}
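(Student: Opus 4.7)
The plan is to exploit the rank-one structure that the hypothesis forces on the cost matrix. Since $\cost_{i,j}/\cost_{i',j}$ does not depend on $j$, choosing type $\type_1$ as a reference yields a multiplicative decomposition $\cost_{i,j} = \phi(i)\,\psi(j)$, with $\phi(i) := \cost_{i,j}/\cost_{1,j}$ (well-defined by hypothesis) and $\psi(j) := \cost_{1,j}$. I would then plug this decomposition into the closed-form best response of Lemma~\ref{prop:best-response}, substitute into the consistency condition, and show that the resulting right-hand side is independent of $j$.

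Concretely, writing Lemma~\ref{prop:best-response} as $\act_{i,j} = \cost_{i,j}^{1/(\alpha-1)}\,\venue_j^{\beta/(1-\alpha)}/S_i$ with normalizer $S_i := \sum_k \cost_{i,k}^{\alpha/(\alpha-1)}\,\venue_k^{\beta/(1-\alpha)}$, the factorization gives $\cost_{i,j}^{1/(\alpha-1)} = \phi(i)^{1/(\alpha-1)}\,\psi(j)^{1/(\alpha-1)}$ and $S_i = \phi(i)^{\alpha/(\alpha-1)} \cdot T$, where $T := \sum_k \psi(k)^{\alpha/(\alpha-1)}\,\venue_k^{\beta/(1-\alpha)}$ carries no dependence on $i$. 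Using $(1-\alpha)/(\alpha-1) = -1$, one obtains $\act_{i,j} = \phi(i)^{-1} \cdot \psi(j)^{1/(\alpha-1)}\,\venue_j^{\beta/(1-\alpha)}/T$, so the only $i$-dependence of the best response sits in the single factor $\phi(i)^{-1}$. Substituting into the consistency equation, the common $j$-factor $\psi(j)^{1/(\alpha-1)}\,\venue_j^{\beta/(1-\alpha)}/T$ appears identically in every summand of both numerator and denominator and cancels, leaving
\[ \venue_j \;=\; \frac{\sum_i \phi(i)^{-1}\,\den_i\,\type_i}{\sum_i \phi(i)^{-1}\,\den_i} \;=:\; \venue^{\star}, \]
a value independent of $j$. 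Hence at any equilibrium every venue has the same impact $\venue^{\star}$.

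The quantity $\venue^{\star}$ is manifestly uniquely determined by the primitives, and once $\venue_j = \venue^{\star}$ for all $j$ is fixed, Lemma~\ref{prop:best-response} pins down $\vact^{*}$ uniquely. This simultaneously yields existence (the pair $(\vact^{*},\,\venue^{\star}\cdot\uvec)$ satisfies both best-response and consistency by construction) and uniqueness. I do not anticipate a deep obstacle: the essential step is recognizing the multiplicative factorization of the cost matrix, after which the argument reduces to collecting terms. The only bookkeeping point is to note that $\act_{i,j} > 0$ for every $(i,j)$ along the way, which ensures the consistency ratio is well-defined throughout the cancellation.
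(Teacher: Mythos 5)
Your proposal is correct and follows essentially the same route as the paper: factor the rank-one cost matrix as $\cost_{i,j}=\phi(i)\psi(j)$, substitute into the closed-form best response of Lemma~\ref{prop:best-response}, and observe that the venue-dependent factor cancels in the consistency ratio, leaving a $j$-independent impact $\venue^{\star}=\sum_i \phi(i)^{-1}\den_i\type_i / \sum_i \phi(i)^{-1}\den_i$. Your treatment of uniqueness (the common impact is pinned down by primitives, which then pins down the actions) is in fact spelled out slightly more carefully than in the paper's own argument.
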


The proof of \Cref{prop:separable-cost} is deferred to \Cref{appdx:separable-cost}.

Before proceeding to the main equilibrium analysis, we highlight several natural properties that a reasonable model of publication choice should satisfy. These results serve as a sanity check for our modeling assumptions; formal statements and proofs are deferred to Appendix B:

\begin{itemize}
    \item \textbf{Monotone venue impacts} Under \Cref{assumption:increasing_cost_ratio}, more competitive venues receive higher impact when all researchers best respond to the observed venue impacts (See \Cref{prop:Quality is Monotone in difficulty} in \Cref{appdx:Quality is Monotone in difficulty}).

     \item \textbf{Scale invariance of costs} Scaling the entire cost matrix $\vcost$ by the same positive factor leaves the equilibrium impact unchanged (See \Cref{observation:importance-of-relative-cost} in \Cref{appdx:uniform-scale}). 
     
    % \item \textbf{Scale invariance of population size} If the number of researchers of different types is scaled by the same factor $m > 0$, the equilibrium impact of venues stays the same (See     \Cref{prop:u-scale-eq}  in \Cref{appdx:uniform-scale}). \haichuan{Maybe this thing can be removed to save space since the proof is basically density doesn't change}

    \item \textbf{Asymmetric growth of types} In the binary-type setting, increasing the density of high-type researchers raises the equilibrium impact of all venues; increasing the density of low-type researchers lowers it (See \Cref{appdx:scale-eq}).
    
\end{itemize}

% The following proposition shows that  the MCR assumption of increasing cost ratio guarantees that venue impacts will be increasing in their competitiveness and thus also publication costs.   
% %Notably, while \Cref{prop:separable-cost} and \Cref{prop:Quality is Monotone in difficulty} may appear natural, we view them as useful sanity check for the validty of our model. 
% The proof of \Cref{prop:Quality is Monotone in difficulty} is deferred to \Cref{appdx:Quality is Monotone in difficulty}.  

% \begin{restatable}{proposition}{monotonevenueimpact}\label{prop:Quality is Monotone in difficulty}
%     Under Assumption \ref{assumption:increasing_cost_ratio}, venue impact $\venue_j$ monotonically increases in the venue index $j$ after all researchers best respond to the venue impact in the previous round. 
% \end{restatable}

\subsection{Equilibrium Existence and Their   Properties}

\label{sec:eq analysis}

%In this section, we derive several properties that researchers would expect from the equilibrium in \pc.
We now turn to prove the general existence  of an equilibrium and its various   properties. Moreover, we show that the number of publications on the top venue is monotone in researcher impact type. However, the total number of publications across all venues is not necessarily monotone --- that is,  there exist  simple instances where a researcher with lower impact has a larger number of publications in total.  

We start our study by showing the existence of equilibria.  

\begin{restatable}{proposition}{eqexistence}\label{prop:eq-existence}
   Every \model{} admits a pure-strategy equilibrium. 
\end{restatable}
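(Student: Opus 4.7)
The plan is to invoke Brouwer's fixed-point theorem on the map that sends a vector of venue impacts to the induced venue impacts after all researchers best respond. Since every venue impact is a convex combination of the types, the natural domain is the compact convex set $K = [\type_1,\type_n]^{\nvenue}$. For expositional cleanness I would assume $\type_1>0$ (impact factors are naturally positive); the general case can be reduced to this by perturbing types to $\type_i+\varepsilon$, obtaining an equilibrium for each $\varepsilon>0$, and passing $\varepsilon\to 0$ using compactness of $K$ to extract a convergent subsequence.

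Given $\vvenue\in K$, \Cref{prop:best-response} furnishes a unique best response $\act_{i,j}(\vvenue)$ in closed form. I would then define $F:K\to K$ by
\[
F(\vvenue)_j \;=\; \frac{(\vact_{:,j}(\vvenue)\odot\vmu)\cdot\vtype}{(\vact_{:,j}(\vvenue)\odot\vmu)\cdot\uvec}.
\]
Each coordinate is a convex combination of the types, so it lies in $[\type_1,\type_n]$ and $F$ maps $K$ to itself. A fixed point of $F$ is, by construction, a pure-strategy equilibrium: the best-response condition holds because I seeded $\vact$ with the maximizer of Program \eqref{prog:best-response}, and the consistency condition \eqref{eq:equilibrium-venue-consistency} is exactly $F(\vvenue^*)=\vvenue^*$.

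The key technical step is continuity of $F$. Reading off the closed form in \Cref{prop:best-response}, $\act_{i,j}(\vvenue)$ factors as $\venue_j^{\beta/(1-\alpha)}\cdot g_i(\vvenue)$, where $g_i$ depends only on $\vvenue$ and the cost row $\vcost_i$ but not on the index $j$. The prefactor $\venue_j^{\beta/(1-\alpha)}$ appears identically in the numerator and denominator of $F(\vvenue)_j$ and cancels, leaving a ratio of continuous functions whose denominator is a strictly positive linear combination of positive terms on $K$. Hence $F$ is continuous throughout $K$, and Brouwer delivers the desired fixed point.

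The step I expect to require the most care is precisely this cancellation: it is what keeps $F$ well-defined on the faces of $K$ where some $\venue_j$ is small, since without it both the numerator and the denominator of $F(\vvenue)_j$ would threaten to collapse together. A secondary subtlety—degenerate or non-positive types—can be handled cleanly by the perturbation argument sketched in the first paragraph, since the fixed-point set is closed in $\vvenue$ and the perturbed best responses converge uniformly on $K$ as $\varepsilon\to 0$.
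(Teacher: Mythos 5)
Your proposal is correct and takes essentially the same route as the paper: both apply Brouwer's fixed-point theorem to the map sending venue impacts to the best-response-induced impacts, on the compact convex box $[\type_1,\type_n]^{\nvenue}$. The only difference is that you spell out the continuity of this map (via the cancellation of the $\venue_j^{\beta/(1-\alpha)}$ factor) and the positivity of types, details the paper's proof simply asserts.
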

The proof of \Cref{prop:eq-existence} is deferred to \Cref{appdx:eq-existence}.
% \begin{proof}
% We prove the proposition via the Brouwer fixed-point theorem. Define
% \begin{align}
%     f(\vvenue) = \bigg( \frac{(\vact_{:, 1}\odot\vden) \cdot \vtype}{(\vact_{:,1}\odot\vden)\cdot \uvec}, \cdots, \frac{(\vact_{:, k}\odot\vden) \cdot \vtype}{(\vact_{:,k}\odot\vden) \cdot \uvec} \bigg)
% \end{align}

% where $\vact_{:, j}$ is researchers' best-response strategy on venue $j$ after observing $\vvenue$.
% Intuitively, the function $f$ updates the venues' impact factors after collecting researchers' publication strategies. Consider the function on the space where all researchers have the highest impact level, $f$ on $\{\typespace|0 \leq \typespace_j \leq \max\{\type_1, \cdots, \type_n\}\}$. The space is closed and bounded in $R^n$, and thus it is compact. Moreover, the space is also convex. $f$ is continuous. Therefore, by Brouwer's fixed-point theorem, there exists a fixed point where $f(\vvenue) = \vvenue$. Because researchers' publication strategy reaffirms the impact level of all venues, \pc \ has reached the equilibrium where all agents are best responding.
% \end{proof}

Our main finding of this subsection is the following property which shows that the number of publications in the most competitive venue is indeed monotonically increasing in a researcher's type and, moreover,  this property holds in a much stronger sense than merely at equilibrium.  That is, it holds for any $\vact^*_{i} = \argmax_{\vact_i \geq 0:\,  \vact_i \cdot \vcost_i \leq 1} ( \vact_{i})^{\alpha} \cdot \vvenue^{ \beta}   $ that is a best response to some venue impact vector $\vvenue$, regardless $\vvenue$ is an equilibrium vector or not. This in some sense justifies why some of the popular ranking systems (e.g., the \textit{csrankings.org}) choose to rank institutes' impact based a selected set of top venues.   Recall that \Cref{prop:Quality is Monotone in difficulty} says the most competitive venue coincides with the venue with the highest impact. The proof of \Cref{thm:top-venue-monotone} is deferred to \Cref{appdx:top-venue-monotone}.

\begin{restatable}{theorem}{topvenuemonotone}\label{thm:top-venue-monotone}
 Under Assumption \ref{assumption:increasing_cost_ratio}, consider any venue impacts $\vvenue$ and let $\vact^*_{i}$ %= \argmax_{\vact_i \geq 0:\,  \vact_i \cdot \vcost_i \leq 1} ( \vact_{i})^{\alpha} \cdot \vvenue^{ \beta}   $ 
 be researcher type $i$' best responses to $\vvenue$. Then $\act^*_{i,k} > \act^*_{i',k}$ for any two researcher types $i$ and $i'$ with $\type_i > \type_{i'}$ ($k$ is the most competitive venue).  
\end{restatable}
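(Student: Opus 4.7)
The plan is to invoke the closed-form best response from \Cref{prop:best-response} and then lift a relative-concentration inequality to the claimed absolute one via the binding budget constraint. Set $p:=1/(1-\alpha)>0$ and $q:=\beta/(1-\alpha)>0$, so that the formula becomes $\act^*_{i,j}=\cost_{i,j}^{-p}\,\venue_j^{q}\big/\sum_{\ell}\cost_{i,\ell}^{-\alpha p}\,\venue_\ell^{q}$. Fix any two types $\type_i>\type_{i'}$. Because the $\venue$ factors cancel, the type-vs-venue cross ratio satisfies, for each $j<k$,
\[
\frac{\act^*_{i,j}/\act^*_{i,k}}{\act^*_{i',j}/\act^*_{i',k}}\;=\;\left(\frac{\cost_{i,j}/\cost_{i,k}}{\cost_{i',j}/\cost_{i',k}}\right)^{-p}.
\]
By MCR (\Cref{assumption:increasing_cost_ratio}) the parenthesized cross ratio strictly exceeds $1$; since $-p<0$, the whole expression is strictly less than $1$. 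Cross multiplying gives $\act^*_{i,k}/\act^*_{i',k}>\act^*_{i,j}/\act^*_{i',j}$ for every $j<k$, i.e.\ in ratio terms the high type is strictly most over-represented on the top venue.

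The second step promotes this relative statement to the absolute inequality $\act^*_{i,k}>\act^*_{i',k}$ by using the binding unit-budget identity $\sum_j\cost_{i,j}\act^*_{i,j}=\sum_j\cost_{i',j}\act^*_{i',j}=1$. The plan is to argue by contradiction: if $\act^*_{i,k}\le\act^*_{i',k}$, the relative inequality above forces $\act^*_{i,j}<\act^*_{i',j}$ strictly for every $j<k$. Combined with the natural model convention (matching the paper's interpretation of MCR as the higher type's cost advantage) that $\cost_{i,j}\le\cost_{i',j}$ for every $j$, one obtains $\cost_{i,j}\act^*_{i,j}\le\cost_{i',j}\act^*_{i',j}$ termwise with a strict gap for some $j<k$; summing then contradicts the common budget of $1$, giving $\act^*_{i,k}>\act^*_{i',k}$.

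The main obstacle is precisely this lift from the relative to the absolute comparison, since bare MCR only controls cross-venue cost \emph{ratios} and does not by itself entail a termwise comparison of the budget shares $\cost_{i,j}\act^*_{i,j}$. A backup route I would keep ready is the one-parameter interpolation $\cost_j(t):=\cost_{i',j}^{1-t}\cost_{i,j}^{t}$ for $t\in[0,1]$, which preserves MCR; differentiating the closed form of $\act_k(t)$ yields
\[
\tfrac{d}{dt}\log\act_k(t)\;=\;p\Bigl[\log(\cost_{i',k}/\cost_{i,k})\;-\;\alpha\,\mathbb{E}_{s(t)}\!\bigl[\log(\cost_{i',j}/\cost_{i,j})\bigr]\Bigr],
\]
and MCR makes $\log(\cost_{i',j}/\cost_{i,j})$ strictly increasing in $j$, so its $s(t)$-expectation is strictly dominated by its value at $j=k$. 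The bracket is thereby bounded below by $(1-\alpha)\log(\cost_{i',k}/\cost_{i,k})$, which is strictly positive under the same cost-efficiency convention, producing strict monotonicity of $\act_k(t)$ in $t$ and hence $\act^*_{i,k}>\act^*_{i',k}$.
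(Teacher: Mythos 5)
Your proof is correct, but it takes a genuinely different route from the paper's. The paper argues directly: it cross-multiplies the two closed-form expressions for $\act^*_{i,k}$ and $\act^*_{i',k}$, factors each summand as $(\cost_{i,k}\cost_{i',l})^{\frac{1}{\alpha-1}}\cost_{i',l}$ versus $(\cost_{i',k}\cost_{i,l})^{\frac{1}{\alpha-1}}\cost_{i,l}$, and concludes by a termwise comparison using MCR together with $\cost_{i,l}<\cost_{i',l}$. You instead split the argument into a relative-concentration step (the ratio $\act^*_{i,j}/\act^*_{i',j}$ is strictly maximized at $j=k$, which follows from MCR alone) and a lift to the absolute inequality via the binding budget constraints $\vact^*_i\cdot\vcost_i=\vact^*_{i'}\cdot\vcost_{i'}=1$; the contradiction argument in your second step is sound. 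Both proofs rest on the same closed form from \Cref{prop:best-response} and, crucially, both need the additional hypothesis that the higher type has weakly lower per-venue cost ($\cost_{i,j}\le\cost_{i',j}$); this is not among the paper's formally stated assumptions, but the paper's own proof invokes it in exactly the same way (``$\cost_{i',l}>\cost_{i,l}$ because $\type_{i'}<\type_i$''), so you are on equal footing, and your decomposition has the merit of making explicit that MCR alone only yields the relative statement — indeed, without the cost-ordering-in-type hypothesis the absolute claim can fail (e.g.\ a high type with uniformly enormous costs satisfies MCR yet publishes less everywhere). Your interpolation backup is a third viable route but is not needed.
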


% \paragraph{The effect of uniformly scaling-up the community size} Given the fast-growing AI/ML community today, a natural question one may have is how the venue impact would change as the number of researchers grows in the field.  \Cref{prop:u-scale-eq} shows that if each researcher type scales linearly by the same factor, the venue equilibrium impacts remain the same. We discuss the effect of non-uniform scaling for binary-type setting in \Cref{thm:scale-eq}. 

% \begin{observation}\label{prop:u-scale-eq}
%     The equilibrium impact of venues remains the same when the number of researchers of different types is simultaneously scaled by the same factor $m > 0$. 
% \end{observation}
% \begin{proof}
%     After scaling all types by the same factor, the density of each type does not change. Hence, the equilibrium impact of venues remains the same.
% \end{proof}

% Scaling all entries in $\vcost$ uniformly is the same as scaling all researchers' budgets, and is also equivalent to scaling their community size proportionally, which does not affect equilibrium outcome by \Cref{prop:u-scale-eq}. In addition, the following \Cref{observation:importance-of-relative-cost} reveals that the relative cost, rather than the absolute cost, determines the equilibrium outcome of \model{}.

% \begin{restatable}{observation}{importance-of-relative-cost}\label{observation:importance-of-relative-cost}
% The equilibrium impact of venues remains the same when the cost matrix $\vcost$ is scaled by the same factor.
% \end{restatable}

\paragraph{Is the number of publications monotone in impact?}
In the academic realm, publication count is commonly used as one of the proxies for researchers' impact.\footnote{Whether this is a right metric is out of the scope of this paper's research, though we do observe the increasing use of the \textit{csrankings.org} website, which  counts the number of publications at (only) a selected set of top venues,  as a proxy for different institutes' impact in different fields.   } 
This prompts an examination of the extent to which the quantity of publications is related to a researcher's type. The following example show that the total number of publications is generally not monotone in a researcher's impact. 

\begin{restatable}{proposition}{lowmoretotalpub}\label{observation:low-more-total-pub}
    There exists an \model{} such that the total number of publications is not weakly increasing in one's type under the equilibrium. 
    %i.e., there exists a researcher with a lower impact level who publishes more than a researcher with a higher impact level. Formally,  we can find two agents with $\type_L < \type_H$, and under the equilibrium $\vact_{L} \cdot \uvec > \vact_{H} \cdot \uvec$.
\end{restatable}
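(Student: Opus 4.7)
The statement is purely existential, so my plan is to exhibit one concrete \model{} and verify that at its unique equilibrium the total publication count of the low-type researcher strictly exceeds that of the high-type. The guiding intuition combines \Cref{thm:top-venue-monotone} and \Cref{assumption:increasing_cost_ratio}: MCR pushes the high type toward the most competitive, and hence most expensive, venue, while pushing the low type toward the cheaper venues. If the cheap venue is cheap enough in absolute terms, the low type turns her time budget into many small-impact papers while the high type can afford only a few large-impact ones.

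First I would fix a binary-type, two-venue instance with types $0<\theta_1<\theta_2$ of strictly positive masses $\mu_1,\mu_2$ and choose the cost matrix so that $\cost_{1,1}/\cost_{2,1}<\cost_{1,2}/\cost_{2,2}$ (so that \Cref{assumption:increasing_cost_ratio} holds), with $\cost_{1,1}$ taken much smaller than each of $\cost_{1,2},\cost_{2,1},\cost_{2,2}$. \Cref{prop:best-response} then writes both types' best responses in closed form as functions of $(v_1,v_2)$; substituting these into the consistency condition~\eqref{eq:equilibrium-venue-consistency} yields a two-equation fixed-point system. \Cref{prop:eq-existence} guarantees a solution, and \Cref{thm:2p-uniqueness} pins it down uniquely in the binary-type case, so the equilibrium we analyze is unambiguous.

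Rather than solving the fixed-point system explicitly, I would use a limiting argument to compare totals. Using \Cref{prop:best-response} together with $\alpha\in(0,1)$, one checks that as $\cost_{1,1}\downarrow 0$ with all other parameters fixed, the $\cost_{1,1}^{\alpha/(\alpha-1)}$ term dominates the normalizing sum in the denominator of $\act_{1,1}$, so $\act_{1,1}\sim\cost_{1,1}^{-1}\to\infty$, while $\act_{1,2}\to 0$. Meanwhile $\act_{2,1}+\act_{2,2}\leq 1/\min_j\cost_{2,j}$, a constant independent of $\cost_{1,1}$. Hence for all small enough $\cost_{1,1}>0$ one has $\act_{1,1}+\act_{1,2}>\act_{2,1}+\act_{2,2}$ at the unique equilibrium, delivering the desired counterexample.

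The main obstacle is justifying this asymptotics inside a \emph{moving} equilibrium. One must verify that the equilibrium impacts $(v_1,v_2)$ remain inside $[\theta_1,\theta_2]$ and in particular stay bounded away from zero as $\cost_{1,1}\downarrow 0$, so that the denominator behavior claimed above is genuinely driven by $\cost_{1,1}$ rather than by some $v_j$ degenerating. I would handle this by noting that, since $\vcost_2$ is fixed, type-$\theta_2$ researchers continue to publish a positive and bounded number of papers on both venues; combined with the low type concentrating on venue~$1$, the weighted averages in~\eqref{eq:equilibrium-venue-consistency} stay in a non-degenerate interval bounded away from $0$ and $\infty$, legitimizing the limit and completing the construction.
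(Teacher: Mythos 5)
Your proposal is correct in substance but takes a genuinely different route from the paper. The paper's proof is a concrete numerical construction: it fixes $\theta_H=20$, $\theta_L=1$, a non-competitive first venue with $\cost_{H,1}=\cost_{L,1}=1$ and $\cost_{H,2}=15$, $\cost_{L,2}=40$, invokes \Cref{thm:2p-uniqueness} (legitimately, since \Cref{assumption:non competitive venue} holds there), and simply reports the computed equilibrium actions, checking $\vact_L\cdot\uvec>\vact_H\cdot\uvec$. You instead run an asymptotic argument in $\cost_{1,1}\downarrow 0$, which buys you a proof that needs no numerical solution of the fixed point: since every candidate equilibrium has $\venue_j\in[\theta_1,\theta_2]$ with $\theta_1>0$, the closed form of \Cref{prop:best-response} gives $\act_{1,1}\gtrsim \cost_{1,1}^{-1}$ \emph{uniformly} over such $\vvenue$, while the high type's budget constraint caps $\act_{2,1}+\act_{2,2}\le 1/\min_j\cost_{2,j}$; this is clean and in fact shows the failure of monotonicity at \emph{every} equilibrium of the instance, not just one. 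The one genuine misstep is your appeal to \Cref{thm:2p-uniqueness}: that theorem requires the non-competitive-venue condition $\cost_{1,1}=\cost_{2,1}$ of \Cref{assumption:non competitive venue}, which your construction deliberately violates by taking $\cost_{1,1}$ much smaller than $\cost_{2,1}$ (and you cannot restore it, since driving both costs on venue $1$ to zero would make both types' totals diverge at the same rate $1/\cost_1$ and kill the comparison). Fortunately this does not damage the proof: drop the uniqueness claim, rely only on \Cref{prop:eq-existence} for existence, and observe that your bounds hold at any equilibrium, which is all the existential statement requires.
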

\begin{proof}[Proof Sketch]
    We construct an example with two researcher types (high and low) and two venues. In this instance, the high type allocates more effort to the more costly competitive venue, resulting in a lower total number of publications. The detailed parameterization of this example is provided in \Cref{appdx:low-more-total-pub}.
     % The intuition behind this construction is that the cost scales significantly for the low type to publish in a competitive venue, while mildly for a high type. The detailed parameter setup for such an example is deferred to \Cref{appdx:low-more-total-pub}.
\end{proof}

\section{The Binary-Type Case: Equilibrium Uniqueness and Spotlight Effects}
\label{sec:binary}
In this section, we turn to a fundamental special case of  the binary researcher types, i.e., a high type and a low type. It turns out that in this case, we are able to further show   the uniqueness of its equilibrium. This uniqueness enables clearer analysis of equilibrium properties and the effects of introducing ``spotlight'' acceptance, which has become increasingly popular in today's AI/ML community. Our theoretical results further show how one venue switching to spotlight labeling may lead to unintended consequences on the impact of other venues. 

\subsection{Equilibrium   Uniqueness}\label{subsubsec:binary-type}

% A key novelty in our analysis is the introduction of a \textit{characteristic function}  to analyze the binary-type setting. Under the binary-type setting, the action ratio between two types in any venue completely pins down the venue's impact. In addition, we show that the action ratio on any two venues differs only by a constant factor. Hence, we can keep track of the binary-type setting by monitoring the action ratio on venue $1$. The characteristic function inputs venue $1$'s current action ratio and outputs the scaled difference between the current action ratio and the action ratio after all researchers best respond. The sign of the characteristic function tells us the direction of change for all venues' impact, and its zero point corresponds to the equilibrium. 

  We show the pure-strategy equilibrium is unique when there is a non-competitive venue that randomly or uniformly accepts all papers. For example, all researchers have the option to publish their paper drafts permanently on Arxiv and not on any other conferences or journals. 

\begin{assumption}[Non-competitive Venue]\label{assumption:non competitive venue}
    The least-competitive venue 1 is non-competitive. That is, the cost for publication is the same for all types: $\cost_{i, 1}=\cost_1, \forall i$. 
\end{assumption}

Before proving the uniqueness of equilibrium, we need to introduce the characteristic function of the \model{}. Let $x=\frac{\act_{H, 1}}{\act_{L, 1}}$ and $\tilde{\den} = \frac{\den_H}{\den_L}$. We normalize the impact for the low type to be $1$, and the impact for the high type $\type$. We define the following characteristic function.
\begin{definition}[Characteristic Function] 
\label{def:stability function}
    Given an \model{}, the characteristic function of the choice problem is defined by
\begin{equation*}\label{eq:stability-function}
    f(x)=\sum_l (\cost_{H, l})^{\frac{\alpha}{\alpha-1}} \cdot \confavg_{l}^{\frac{\beta}{1-\alpha}}\left(x-\frac{\cost_{L, 1}}{\cost_{H, 1}}\cdot \ratiocost_l^{-\alpha}\right),
\end{equation*}
\vspace{-1mm}
where  each $\confavg_j(x) = \frac{1 +\ratiocost_j x\type\tilde{\den}}{1+\ratiocost_j x\tilde{\den}}$ with $\ratiocost_j = (\frac{\cost_{H, 1}\cost_{L, j}}{\cost_{L, 1}\cost_{H, j}})^{\frac{1}{1-\alpha}}$.
\end{definition}

\paragraph{Idea of the Characteristic Function} Our characteristic function describes the dynamics in the game after researchers best respond. The venue impacts can be characterized by the action ratio on any one venue when all types best respond, where we take the ratio on the first venue.  %In addition, we show that the action ratio on any two venues differs only by a constant factor.
%Hence, we can keep track of the binary-type setting by monitoring the action ratio on venue $1$. 
The input to the characteristic function is venue $1$'s current action ratio, and the output is the  change in the action ratio after all researchers best respond. The sign of the function characterizes the direction of change in all venues' impact. The zero point corresponds to an equilibrium, where the action ratios stop updating after best response.

In the following lemma, we summarize four key properties of the characteristic function, and we defer their formal proofs to \Cref{appdx:4_properties}.

\begin{restatable}{lemma}{fstability}\label{lem:f-stability}[Properties of the Characteristic Function] The following four properties about the characteristic function hold. 
    \begin{enumerate}
    \item A binary-type \model{} is in equilibrium if and only if $f(\frac{\act_{H, 1}}{\act_{L, 1}}) = 0$.
        \item If $f(\frac{\act_{H, 1}}{\act_{L, 1}}) < 0$, when researchers update their actions in response to current venue impacts, the impact of all venues will increase after update;
        \item If $f(\frac{\act_{H, 1}}{\act_{L, 1}}) > 0$, when researchers update their actions in response to current venue impacts, the impact of all venues will decrease after update.
        \item     Under \cref{assumption:increasing_cost_ratio} and \cref{assumption:non competitive venue}, the characteristic function  $f$ is convex in $x$. 
    \end{enumerate}
\end{restatable}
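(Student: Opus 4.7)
The plan is to first unpack Proposition~\ref{prop:best-response} in the binary-type setting and parameterize everything by the scalar $x = \act_{H,1}/\act_{L,1}$. From Proposition~\ref{prop:best-response} one reads off $\act_{H,j}/\act_{L,j} = (c_{H,1}c_{L,j}/(c_{L,1}c_{H,j}))^{1/(1-\alpha)}\cdot x = \ratiocost_j x$, and the non-competitive venue assumption gives $\ratiocost_1 = 1$. Plugging $\act_{H,j}/\act_{L,j} = \ratiocost_j x$ into the consistency condition \eqref{eq:equilibrium-venue-consistency} immediately reproduces $\confavg_l(x) = (1 + \ratiocost_l x\tilde\den\type)/(1+\ratiocost_l x \tilde\den)$. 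Hence $x$ simultaneously pins down the (best-response) action profile and the induced venue impacts, which is exactly the coordinate system in which $f$ is written.

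Next I would write out the best-response ratio $\tilde x(x)$ that researchers re-compute after observing impacts $\confavg_l(x)$. Using Proposition~\ref{prop:best-response} with $p := \beta/(1-\alpha)$ gives $\tilde x(x) = F_2(x)/F_1(x)$, where $F_1(x) = \sum_l c_{H,l}^{\alpha/(\alpha-1)} \confavg_l(x)^p$ and $F_2(x) = \sum_l c_{L,l}^{\alpha/(\alpha-1)} \confavg_l(x)^p$. Under Assumption~\ref{assumption:non competitive venue} one verifies the identity $c_{H,l}^{\alpha/(\alpha-1)} \cdot (c_{L,1}/c_{H,1})\cdot \ratiocost_l^{-\alpha} = c_{L,l}^{\alpha/(\alpha-1)}$, so $f$ factors cleanly as $f(x) = F_1(x)\,(x - \tilde x(x))$ with $F_1(x) > 0$. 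This single identity yields the first three claims at once: (1) $f(x)=0 \iff x = \tilde x(x)$, i.e.\ best responses reproduce the current action ratio and hence the current impacts, which is the definition of equilibrium; (2)--(3) $\mathrm{sign}(f(x)) = \mathrm{sign}(x - \tilde x(x))$, and because each $\confavg_l(\cdot)$ is strictly increasing (using $\type > 1$), the next-round impacts $\confavg_l(\tilde x(x))$ strictly exceed $\confavg_l(x)$ when $f(x) < 0$ and are strictly below them when $f(x) > 0$.

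The main obstacle is Property~4, the convexity. The plan is to differentiate $f$ twice and analyze each summand. Writing $A_l = c_{H,l}^{\alpha/(\alpha-1)}$ and $B_l = (c_{L,1}/c_{H,1})\ratiocost_l^{-\alpha}$, a direct computation gives $f''(x) = \sum_l A_l\bigl[\,2(\confavg_l^p)'(x) + (x - B_l)(\confavg_l^p)''(x)\,\bigr]$. Substituting $\confavg_l'(x) = (\type-1)\ratiocost_l\tilde\den/(1+\ratiocost_l x\tilde\den)^2$ and $\confavg_l''(x) = -2(\type-1)\ratiocost_l^2\tilde\den^2/(1+\ratiocost_l x\tilde\den)^3$ and pulling out manifestly positive prefactors, the $l$-th summand reduces (up to a positive multiplicative factor) to the bracket $2(1+\ratiocost_l x\tilde\den\type)(1+B_l \ratiocost_l\tilde\den) + (p-1)(\type-1)\ratiocost_l\tilde\den(x - B_l)$. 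When $x \geq B_l$ both terms are manifestly non-negative; in particular, Assumption~\ref{assumption:non competitive venue} forces $B_1 = 1$ and Assumption~\ref{assumption:increasing_cost_ratio} makes $B_l$ decreasing in $l$, so as soon as $x \geq 1$ every summand is non-negative and $f''(x) \geq 0$. The delicate range is $x < 1$, where the second part of the bracket can turn negative for low-index venues. The plan there is to use MCR to order the $\ratiocost_l$'s monotonically together with the identity $A_l B_l = c_{L,l}^{\alpha/(\alpha-1)}$ to rebundle the sum so that the positive contributions absorb the negative ones. Making this bundling tight uniformly across $x \in [0,1]$ is where I expect the real work to lie, and it is the one place in the lemma where both Assumption~\ref{assumption:increasing_cost_ratio} and Assumption~\ref{assumption:non competitive venue} are used in an essential way.
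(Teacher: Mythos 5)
Your handling of Properties 1--3 is correct and is essentially the paper's own argument: parameterize the post-best-response state by $x=\act_{H,1}/\act_{L,1}$, note that $x$ determines every ratio $\act_{H,l}/\act_{L,l}=\ratiocost_l x$ and hence every $\confavg_l(x)$, and read the sign of $f$ off the factorization $f=F_1\cdot(x-\tilde x(x))$ together with the strict monotonicity of each $\confavg_l$ in $x$. You make explicit the monotonicity step that the paper's proof of Properties 2--3 leaves implicit; no issue there.

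Property 4 is where the proposal has a genuine gap, and your own computation shows that the ``rebundling'' you propose cannot close it. You correctly reduce the sign of the $l$-th summand of $f''(x)$ to the bracket $2(1+\ratiocost_l\tilde{\mu}\type x)(1+\ratiocost_l\tilde{\mu}B_l)+(p-1)(\type-1)\ratiocost_l\tilde{\mu}(x-B_l)$, with $B_l=(\cost_{L,1}/\cost_{H,1})\ratiocost_l^{-\alpha}$ and $p=\beta/(1-\alpha)$; this settles $x\ge 1$. But at $x=0$ the bracket equals $2+\ratiocost_l\tilde{\mu}B_l\bigl(2-(p-1)(\type-1)\bigr)$ for \emph{every} $l$, so whenever $(p-1)(\type-1)>2+2/(\ratiocost_l\tilde{\mu}B_l)$ for all $l$ --- e.g.\ $\type=100$, $\alpha=0.8$, $\beta=2$, $\tilde{\mu}=1$, giving $p=10$ and a venue-$1$ bracket of $4-891<0$ --- every summand is strictly negative and $f''(0)<0$. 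There is nothing positive left to absorb the negative contributions, so no reordering of the sum rescues convexity near $0$. Since $f(0)<0<f(1)$ places the zero of $f$ in $(0,1)$, this is exactly the range that the downstream uniqueness argument needs; a complete proof must either restrict the parameter regime or replace global convexity with a different single-crossing argument on $(0,1)$.

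It is worth flagging how this compares with the paper: the paper's proof of Property 4 is termwise, expanding $h_l''$ and asserting that every term in the expansion is positive, hence $h_l''>0$ for each $l$. Your bracket, which I have checked independently (for $l=1$ with $\tilde{\mu}=\ratiocost_1=1$ one gets $h_1''(0)=p(\type-1)\bigl(4-(p-1)(\type-1)\bigr)$), contradicts that assertion in the regime above, so the two derivations cannot both be right. You have therefore not merely left a routine step unfinished; you have located the precise point at which the published argument is also fragile. Resolving the sign discrepancy in the expanded $h_l''$ (or restricting to parameters with $(p-1)(\type-1)$ small) is necessary before Property 4, and hence \Cref{thm:2p-uniqueness}, can be considered established.
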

With \Cref{lem:f-stability}, we can prove the uniqueness of the equilibrium by showing $f = 0$ admits a unique solution. The full proof is deferred to \Cref{apdx: proof 2p uniqueness}.

\begin{restatable}{theorem}{twopuniqueness}\label{thm:2p-uniqueness}
    Any binary-type \model{} with a non-competitive venue (\Cref{assumption:non competitive venue}) admits a unique pure-strategy equilibrium. 
\end{restatable}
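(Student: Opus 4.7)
The plan is to reduce equilibrium uniqueness to the uniqueness of a zero of the characteristic function $f$ on $(0,\infty)$ and then exploit its convexity together with boundary behavior at $0$ and $\infty$. Existence is already given by \Cref{prop:eq-existence}, so only uniqueness remains. By the ``only if'' direction of \Cref{lem:f-stability}(1), every equilibrium yields a zero of $f$ at $x = \act_{H,1}/\act_{L,1}$. Conversely, \Cref{prop:best-response} shows that given the venue impact vector, each type's best response is uniquely pinned down by the budget constraint (the utility is strictly concave in $\vact_i$ since $\alpha \in (0,1)$), and the impacts themselves are determined by $x$ through $\confavg_j(x) = (1 + \ratiocost_j x \type \tilde{\den})/(1 + \ratiocost_j x \tilde{\den})$. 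Thus $x$ uniquely parameterizes the equilibrium, and it suffices to show $f$ has a unique zero on $(0,\infty)$.

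Next I would establish the boundary behavior of $f$. At $x = 0$, every $\confavg_l(0) = 1$, so each summand of $f(0)$ equals $-(\cost_{H,l})^{\alpha/(\alpha-1)} (\cost_{L,1}/\cost_{H,1}) \ratiocost_l^{-\alpha}$, which is strictly negative (using $\alpha \in (0,1)$ to see that the coefficient is positive); hence $f(0) < 0$. As $x \to \infty$, $\confavg_l(x) \to \type$, a positive constant, so the coefficient in front of the bracket stabilizes to a positive constant, while the linear term $x$ inside the bracket drives $f(x) \to +\infty$. Continuity then gives at least one zero in $(0,\infty)$ by the intermediate value theorem.

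Finally, I would use convexity from \Cref{lem:f-stability}(4) to rule out a second zero. Suppose for contradiction that $0 < x_1 < x_2$ are two zeros. Writing $x_1 = (1-\lambda)\cdot 0 + \lambda\cdot x_2$ with $\lambda = x_1/x_2 \in (0,1)$, convexity yields $f(x_1) \leq (1-\lambda) f(0) + \lambda f(x_2) = (1-\lambda) f(0) < 0$, contradicting $f(x_1) = 0$. Hence $f$ has a unique zero on $(0,\infty)$, and uniqueness of the equilibrium follows. The main obstacle in this approach—convexity of $f$ under \Cref{assumption:increasing_cost_ratio} and \Cref{assumption:non competitive venue}—has already been handled by \Cref{lem:f-stability}(4); once that is in hand, everything else is a standard convexity-plus-boundary argument. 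The only subtlety worth flagging is verifying that the domain relevant for equilibria is indeed $(0,\infty)$ rather than $[0,\infty)$, which follows because the best-response formula guarantees $\act_{L,1}, \act_{H,1} > 0$ whenever venue $1$ has positive impact (which it does, since types are positive).
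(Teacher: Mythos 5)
Your proposal is correct and follows essentially the same route as the paper's proof: reduce uniqueness to the characteristic function $f$ having a unique zero, verify $f(0)<0$ and $f(x)\to\infty$, and use the convexity from \Cref{lem:f-stability}(4) to exclude a second zero. Your explicit verification of $f(0)<0$ and the convex-combination argument ruling out two zeros are just slightly more detailed versions of the steps the paper asserts directly.
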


We conjecture the equilibrium is also unique under many-type settings (see  \Cref{appdx: uniqueness many type} for empirical evidence). We leave its formal proof as future work. 
\begin{conjecture}[Uniqueness of equilibrium under many-type setting]\label{conj:many-type-uniquess} 
    We hypothesize that the pure-strategy equilibrium is unique when there is a non-competitive venue that randomly or uniformly accepts all papers. 
\end{conjecture}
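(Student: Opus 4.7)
The plan is to lift the one-dimensional characteristic-function argument of the binary case to a multivariate fixed-point analysis. Starting from the closed-form best response in \Cref{prop:best-response}, for each type $\type_i$ I define the scalar $\phi_i(\vvenue) := \sum_j \cost_{i,j}^{\alpha/(\alpha-1)} \venue_j^{\beta/(1-\alpha)}$, so that $\act_{i,j} = \cost_{i,j}^{1/(\alpha-1)} \venue_j^{\beta/(1-\alpha)} / \phi_i(\vvenue)$. Substituting into the consistency condition \eqref{eq:equilibrium-venue-consistency} cancels a common $\venue_j^{\beta/(1-\alpha)}$ factor from numerator and denominator and yields
\[
\venue_j \;=\; \frac{\sum_i \den_i \type_i \, \cost_{i,j}^{1/(\alpha-1)}\,/\,\phi_i(\vvenue)}{\sum_i \den_i \, \cost_{i,j}^{1/(\alpha-1)}\,/\,\phi_i(\vvenue)}\qquad\text{for all } j.
\]
Equilibria are exactly the fixed points of this map $F:\vvenue\mapsto F(\vvenue)$ on the compact box $[\type_1,\type_n]^k$.

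The key reduction comes from \Cref{assumption:non competitive venue}: since $\cost_{i,1}$ is constant in $i$, the weights in the venue-$1$ average collapse to $w_i := \den_i/\phi_i(\vvenue)$, independent of the cost matrix. I would reparameterize by the vector $\mathbf{w}=(w_1,\ldots,w_n)$ (modulo an overall positive rescaling, which is immaterial for the impact formulas), recasting the equilibrium as a fixed-point equation $\mathbf{w}=G(\mathbf{w})$ on an $(n-1)$-dimensional positive cone. This is a strict generalization of the binary-case object: when $n=2$, the single degree of freedom $w_2/w_1$ coincides with the ratio $x=\act_{H,1}/\act_{L,1}$ that drives \Cref{def:stability function}.

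To establish uniqueness, I would pursue two complementary routes. The first is a convex-potential approach generalizing \Cref{lem:f-stability}(4): exhibit a strictly convex potential $\Phi$ in the logarithmic variables $(\log w_1,\ldots,\log w_n)$ whose stationary points are exactly the equilibria, so that strict convexity forces a unique minimizer. The Cobb--Douglas-type best responses together with linear budget constraints are precisely the structure that tends to integrate into such a potential, and the scalar $f$ of the binary case should reappear as the derivative of the one-variable restriction of $\Phi$. The second route is a Gale--Nikaido argument: use \Cref{assumption:increasing_cost_ratio} (MCR) to show that the Jacobian of $\mathbf{w}-G(\mathbf{w})$ is a P-matrix on the positive orthant, which gives global univalence of the displacement map and hence at most one zero.

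The main obstacle is upgrading the pairwise MCR inequality into a joint structural property valid for all $n$ types simultaneously. In the binary proof, convexity of the characteristic function rests on a single scalar ratio and a single pairwise cost comparison, whereas in the general case the interaction of $n-1$ ratios must be controlled jointly; coordinate-wise convexity does not imply joint convexity, and the required integrability (curl-free condition) of $G$ as a gradient field is not automatic. Similarly, the P-matrix requirement demands sign control over \emph{every} principal minor of the Jacobian, while MCR cleanly delivers only the $2\times 2$ case. I expect the convex-potential route to be the cleaner one if it goes through, since a potential would simultaneously explain the empirical rapid convergence of best-response dynamics (\Cref{fig:fast_convergence}) and unify \Cref{prop:eq-existence} with uniqueness; should it fail, a topological-degree argument combined with careful boundary analysis of $F$ as weights degenerate (sending $w_i\to 0$ or $\infty$) would be the fallback.
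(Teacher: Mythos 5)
First, note that the paper does not prove this statement at all: it is stated explicitly as \Cref{conj:many-type-uniquess}, supported only by simulation evidence in \Cref{appdx: uniqueness many type} (fifty random initializations converging to the same fixed point), and the authors list its formal proof as future work. So there is no paper proof to match against; the only fair question is whether your proposal closes the open problem, and it does not.

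Your setup is sound and is the natural generalization of the paper's binary-case machinery: the fixed-point map obtained by substituting the closed-form best response of \Cref{prop:best-response} into the consistency condition is exactly the map the paper uses for existence via Brouwer, the cancellation of the $\venue_j^{\beta/(1-\alpha)}$ factor is correct, and under \Cref{assumption:non competitive venue} the reparameterization by $w_i=\den_i/\phi_i(\vvenue)$ (up to scaling) does reduce to the paper's single ratio $x=\act_{H,1}/\act_{L,1}$ when $n=2$, up to the harmless constant $\den_L/\den_H$. The gap is that the entire load-bearing step --- the analogue of property 4 of \Cref{lem:f-stability}, i.e., the convexity that makes the zero of the characteristic function unique in \Cref{thm:2p-uniqueness} --- is not established in either of your two routes, and you say so yourself. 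For the potential route you do not exhibit $\Phi$, verify that $G$ is a gradient field (the curl-free condition), or prove joint convexity; for the Gale--Nikaido route you do not show the Jacobian is a P-matrix, and as you correctly observe, \Cref{assumption:increasing_cost_ratio} only controls pairwise $2\times 2$ comparisons while a P-matrix condition requires sign control of every principal minor. A proposal whose central steps are labeled as obstacles is a research program, not a proof; as written it neither resolves the conjecture nor identifies a new structural property of the cost matrix under which the multivariate argument would go through. If you want to push this further, the most promising concrete target is to check integrability of $G$ in the $\log w$ coordinates for $n=3$, $k=2$, where a counterexample or a confirmation would quickly reveal whether the potential route is viable.
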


Equilibrium uniqueness benefits comparative statics analysis and enables more interpretable policy insights in \Cref{sec: spotlight-labeling}.

\subsection{The Effect of Discriminative Acceptance via Spotlight Labeling}
\label{sec: spotlight-labeling}

In this section, we first introduce a variant of \model{} for the impact of papers with spotlight labeling. Many publication venues nowadays started selectively labeling publications as ``spotlight'' publications. Selected spotlight papers are often tagged in poster sessions, or presented in an oral session in addition to the poster session. The spotlight labeling attracts more attention from the research community, leading to a higher impact gained on selected papers. We analyze the effect of switching to spotlight labeling on the research community. Our analysis of the equilibrium is restricted to a binary-type setting where we can prove the uniqueness of equilibrium for the same reason as in previous sections. We are able to compare the effect of spotlight labeling in equilibrium only when the equilibrium uniquely exists.

Unlike the establishment of a new venue of higher impact, the impact attributed to a spotlight publication is intrinsically linked to the venue's existing impact. Suppose the program committee of venue $j$ decides to separately label some papers as ``spotlight'', with spotlight papers a $\frac{1}{\spfrac_j}$ of the regular papers ($\spfrac_j>1$). Our model assumes that the impact of spotlight papers is decided by the average impact $\venue_j$ on the regular session and fraction $\frac{1}{\spfrac_j}$ of spotlight papers because the majority of the audience are regular session authors. On average, each spotlight paper gains an impact of $\spadv(\spfrac_j)$ times the impact of a paper in the regular session, where the labeling effect $\spadv(\spfrac_j)>1$ is determined by $\spfrac_j$. We use $\spadv(\bm{\spfrac})$ to denote the vector $(\spadv(\spfrac_1), \cdots, \spadv(\spfrac_k))$. In \Cref{appdx:empirical_justification}, we use empirical citation data on CVPR to justify our assumption that $\spadv(\spfrac_j)>1$. Since the spotlight impact is uniquely pinned down by the impact on regular session $\venue_j$ and spotlight ratio $\spfrac_j$, we can express the overall venue equilibrium impact as $\frac{\spfrac_j + \spadv(\spfrac_j)}{1+\spfrac_j}\venue_j$.

We characterize the properties of the equilibrium when a venue switches to spotlight labeling. The design space for the venue organizer is the cost $\cost_{i, j}^S, \forall i$  and the fraction $1/\spfrac_j$ of spotlight publication (by changing paper selection rules).  Let $\vact_{:, j}^S=(\act_{i, j}^S)_j$ be the vector of spotlight publications by all agents on venue $j$. While the average impact of the spotlight papers is $\spadv(\spfrac_j)\venue_j$, the spotlight papers have an actual average impact without spotlight labeling of $\venue_{j}^S=  \frac{(\vact_{:, j}^S\odot\vden)\cdot \vtype}{(\vact_{:, j}^S\odot\vden)\cdot\uvec}$. When choosing from the design space, the organizer faces constraints, including the following. 

\begin{itemize}
    \item The actual impacts of the spotlight papers are higher than regular papers. The actual impacts are the average impact of a hypothetical venue, assuming spotlight papers are selected for this separate hypothetical venue instead of spotlights of the existing venue. 
\begin{equation}\label{constraint:sp-better}
   \frac{(\vact_{:, j}^S\odot\vden)\cdot \vtype}{(\vact_{:, j}^S\odot\vden)\cdot\uvec}> \frac{(\vact_{:, j}\odot\vden)\cdot \vtype}{(\vact_{:, j}\odot\vden)\cdot\uvec};
\end{equation}
\item It is harder to publish a paper labeled ``spotlight'' for all types, i.e.
$%\label{constraint:sp-harder}
    \vcost_{:, j}^S \geq \vcost_{:, j}.$
\end{itemize}

\begin{table}[htbp]
\vspace{-2mm}
\centering
\label{tab:notation-spotlight}
\begin{tabular}{c|c|c|c}
\hline
 & Venue $j$ & \multicolumn{2}{|c}{ Spotlight $j$} \\
 \hline
Research \\[-2.5ex] Impact & $\venue_{j}$  & \pbox{20cm}{Spotlight Impact \\(after labeling)\\ $\gamma(\spfrac_j) \cdot \venue_{j}$}  & \pbox{20cm}{Actual Impact\\$ \frac{(\vact_{:, j}^S\odot\vden)\cdot \vtype}{(\vact_{:, j}^S\odot\vden)\cdot\uvec}$} \\
\hline
Action & $\vact_{:, j}$ & \multicolumn{2}{|c}{$\vact_{:, j}^S$}\\
\hline
Cost & $\vcost_{:, j}$ & \multicolumn{2}{|c}{$\vcost_{:, j}^S$}\\
\hline
\end{tabular}
\vspace{-1mm}
\caption{Notations for a venue $j$ with spotlight labeling.}
\vspace{-3mm}
\end{table}

We make the following similar monotone cost ratio assumption on the spotlight session due to a similar reason to 
\Cref{assumption:increasing_cost_ratio}. In
\Cref{appdx:justification-harder-spotlight}, we show if publishing a spotlight paper is relatively the same hard as publishing a regular paper, the actual impact of spotlight papers as a new venue will be the same as regular papers. This violates Constraint \ref{constraint:sp-better} that spotlight papers should gain more actual research impact than regular papers on average. \Cref{assumption:harder-spotlight} states that publishing a spotlight paper should be relatively harder for lower types than a regular paper.

\begin{assumption}[Monotone Spotlight Cost Ratio]\label{assumption:harder-spotlight} 
    The relative cost for any low type to publish a spotlight paper is higher than the relative cost to publish a regular paper. i.e.\ for any two types $\type_i<\type_{i'}$, $\frac{\cost_{i, j}^S}{\cost_{i', j}^S}>\frac{\cost_{i, j}}{\cost_{i', j}}$.
\end{assumption}

Intuitively, the expected cost of publishing a spotlight paper can dynamically change with the number of publications. In \Cref{appdx:one-shot-spotlight-cost}, we solve the researcher's utility maximization problem with spotlight session and show that the cost of publishing a spotlight paper can be fixed once the cost of publishing a regular paper is fixed.

\subsubsection{Characterization of equilibrium with spotlight labeling}\label{sec:eq-spotlight-labeling}
In \Cref{sec:eq-spotlight-labeling}, we provide characterizations of the equilibrium after switching to spotlight labeling. We focus on a binary-type \model{} in this section. \Cref{lem:eq-sp-unique} shows the equilibrium is unique. 

\begin{restatable}
    {corollary}{lemequniquespotlight}\label{lem:eq-sp-unique}
Consider any binary-type \model{} with one venue using the spotlight labeling.   Under \Cref{assumption:non competitive venue} and \Cref{assumption:harder-spotlight}, there exists a unique pure-strategy equilibrium. 
\end{restatable}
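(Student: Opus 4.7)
The plan is to extend the characteristic-function machinery of Section \ref{subsubsec:binary-type} to incorporate the spotlight session and then invoke the same convexity-plus-boundary argument used to prove Theorem \ref{thm:2p-uniqueness}. From each researcher's perspective, a spotlight paper at venue $j$ contributes a utility term $(a^S_{i,j})^\alpha (\gamma(\Omega_j) v_j)^\beta$, which is structurally identical to a publication at a fictitious regular venue with impact $\gamma(\Omega_j) v_j$ and cost $c^S_{i,j}$. Hence Lemma \ref{prop:best-response} still characterizes each researcher's best response, with the spotlight entering as an extra coordinate of $\vact_i$. Parameterizing equilibria by $x = a_{H,1}/a_{L,1}$ on the non-competitive venue and repeating the derivation that produced Definition \ref{def:stability function}, the equilibrium condition becomes $f_S(x) = 0$, where
$$f_S(x) \;=\; f(x) \;+\; (c^S_{H,j})^{\frac{\alpha}{\alpha-1}} \gamma(\Omega_j)^{\frac{\beta}{1-\alpha}} v_j(x)^{\frac{\beta}{1-\alpha}} \bigl(x - (r^S_j)^{-\alpha}\bigr),$$
with $r^S_j := (c^S_{L,j}/c^S_{H,j})^{1/(1-\alpha)}$ and $f, v_j$ as in Definition \ref{def:stability function}. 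The new summand has the same functional shape as one summand of $f$, scaled by the positive constant $\gamma(\Omega_j)^{\beta/(1-\alpha)}$ and with $r^S_j$ replacing $r_j$ inside the offset.

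The next step is to show that $f_S$ is convex on $x > 0$. Convexity of $f$ itself is item (4) of Lemma \ref{lem:f-stability}. For the added summand, Assumption \ref{assumption:harder-spotlight} gives $r^S_j > r_j$, which together with the ordering $r_1 < r_2 < \cdots < r_k$ obtained from Assumption \ref{assumption:increasing_cost_ratio} means the spotlight term behaves like a fictitious extra venue whose cost ratio exceeds $r_j$. The pointwise convexity argument in the proof of Lemma \ref{lem:f-stability} then applies verbatim to this term, so $f_S$ is a sum of convex functions and hence convex. Coupled with the boundary analysis $f_S(x) < 0$ as $x \to 0^+$ (each offset $x - r_l^{-\alpha}$ and $x - (r^S_j)^{-\alpha}$ is strictly negative there) and $f_S(x) \to +\infty$ as $x \to \infty$ (each offset diverges while the impact-power prefactors stay bounded away from zero), convexity forces a unique zero $x^\star > 0$, which in turn pins down every action vector and venue impact via Lemma \ref{prop:best-response}. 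This yields the claimed unique equilibrium.

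The principal obstacle is the convexity step. The subtlety is that in the spotlight summand the impact function $v_j(x)$ still carries the regular-venue parameter $r_j$, whereas the offset uses $r^S_j$ from the spotlight cost---a mismatch absent in the base characteristic function. Verifying that this does not disrupt the convexity proof of Lemma \ref{lem:f-stability} is the key technical check: because $r^S_j > r_j$ only shifts the root of the offset factor toward the origin while scaling the overall term by a positive constant, the product of the increasing impact-power $v_j(x)^{\beta/(1-\alpha)}$ with the linear factor $\bigl(x - (r^S_j)^{-\alpha}\bigr)$ remains in the regime handled by the original argument, preserving convexity and hence uniqueness.
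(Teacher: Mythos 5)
Your proposal is correct and takes essentially the same route as the paper: both extend the characteristic function to account for the spotlight session and then reuse the convexity-plus-boundary argument of \Cref{thm:2p-uniqueness} and \Cref{lem:f-stability}. The only difference is bookkeeping --- where you append the spotlight contribution as a separate convex summand with cost ratio $r^S_j$, the paper absorbs it into venue $j$'s term via modified constants $\tilde{c}_{i,1}=c_{i,1}\bigl(1+\gamma(\Omega_j)^{\beta/(1-\alpha)}r_{i}^{\alpha/(\alpha-1)}\bigr)$, so that \Cref{assumption:harder-spotlight} directly yields $\tilde{c}_{H,1}>\tilde{c}_{L,1}$ and the original form of $h_j$ (hence its convexity proof) applies verbatim, sidestepping the ``mismatch'' you flag.
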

% \vspace{-3mm}

The proof of \Cref{lem:eq-sp-unique} is deferred to \Cref{appdx:eq-sp-unique}.
% \begin{proof}
% Define $\cost_{i, j}^S=\cost_{i, j}\cdot \spcostratio_{i, j}$, where $\spcostratio_{i, j}\geq 1$ is the relative increase in cost for publishing a spotlight paper.

% The proof follows the same idea as \Cref{thm:2p-uniqueness}. Consider the same construction of characteristic function $f$ and decompose $f$ into linear combination of functions $h_l$. The venue $j$ with spotlight has 
% \begin{align*}
%         h_j(x)=\bigg[x\cdot\cost_{H, 1}\left(\frac{\cost_{H,j}}{\cost_{H, 1}}\right)^{\frac{\alpha}{\alpha-1}}(1+\spadv(\spfrac_j)^{\frac{\beta}{1-\alpha}}\spcostratio_{H}^{\frac{\alpha}{\alpha-1}})\nonumber\\
%        \quad -\cost_{L, 1}\left(\frac{\cost_{L,j}}{\cost_{L,1}}\right)^{\frac{\alpha}{\alpha-1}}(1+\spadv(\spfrac_j)^{\frac{\beta}{1-\alpha}} \spcostratio_{L}^{\frac{\alpha}{\alpha-1}})\bigg]\venue_j^{\frac{\beta}{1-\alpha}}    
% \end{align*}

% By defining $\Tilde{\cost}_{i, 1}=\cost_{i, 1}((1+\spadv(\spfrac_j)^{\frac{\beta}{1-\alpha}}\spcostratio_{i}^{\frac{\alpha}{\alpha-1}}))$ for $i\in\{L, H\}$, we see that $\Tilde{\cost}_{H, 1}>\Tilde{\cost}_{L, 1}$ and the same form of $h_j$ in proof of \Cref{thm:2p-uniqueness} follows. Thus, $h_j$ is convex. 
% \end{proof}
% \vspace{-2mm}

The following theorem shows that, if the organizer cares about absolute research impact but not relative impact in the community, then less competitive venues are better off switching to spotlight labeling. Otherwise, if the venue's regular venue is competitive enough, the spotlight session attracts too much research impact that it hurts the average research impact on every venue in the community.

\begin{restatable}
    {theorem}{thmspimpactcompare}\label{thm:sp-impact-compare}
Consider a binary-type \model{}   under \Cref{assumption:non competitive venue} and \Cref{assumption:harder-spotlight}. Then there exists a threshold venue $j_0$ such that 
\begin{itemize}
    \item if a venue $j\geq j_0$ (more competitive) switches to spotlight labeling, the equilibrium   impact of all venues decrease; 
    \item if a venue $j< j_0$ (less competitive) switches to spotlight labeling,  the equilibrium impact of all venues increase.  
    
\end{itemize} 
\end{restatable}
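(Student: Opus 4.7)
The plan is to lift the characteristic-function machinery of Lemma~\ref{lem:f-stability} to the spotlight setting and track how its unique root shifts when venue~$j$ introduces spotlight labeling. A best-responding researcher treats the spotlight session as an additional ``venue'' with cost vector $\vcost_{:,j}^S$ and perceived impact $\spadv(\spfrac_j)\venue_j$, so repeating the derivation behind Definition~\ref{def:stability function} with this extra venue produces a modified characteristic function $\tilde f(x)=f(x)+T_j(x)$, where
\[
T_j(x) \;=\; (\cost_{H,j}^S)^{\frac{\alpha}{\alpha-1}}\bigl(\spadv(\spfrac_j)\venue_j(x)\bigr)^{\frac{\beta}{1-\alpha}}\!\left(x-\tfrac{\cost_{L,1}}{\cost_{H,1}}(\ratiocost_j^S)^{-\alpha}\right),
\]
with $\ratiocost_j^S=\bigl(\tfrac{\cost_{H,1}\cost_{L,j}^S}{\cost_{L,1}\cost_{H,j}^S}\bigr)^{\frac{1}{1-\alpha}}$. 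Assumption~\ref{assumption:harder-spotlight} gives $\ratiocost_j^S>\ratiocost_j$, so the four properties of Lemma~\ref{lem:f-stability} carry over to $\tilde f$ by the same arguments, and Corollary~\ref{lem:eq-sp-unique} ensures the new root $\tilde x^*$ is unique.

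The key step is to evaluate $\tilde f$ at the \emph{old} equilibrium $x^*$. Since $f(x^*)=0$, only $T_j(x^*)$ survives, so
\[
\operatorname{sign}\bigl(\tilde f(x^*)\bigr) \;=\; \operatorname{sign}\!\left(x^*-\tfrac{\cost_{L,1}}{\cost_{H,1}}(\ratiocost_j^S)^{-\alpha}\right).
\]
Because $\tilde f$ is convex with $\tilde f(0)<0$ and $\tilde f(+\infty)=+\infty$, it crosses zero from below, so $\tilde f(x^*)>0$ forces $\tilde x^*<x^*$ while $\tilde f(x^*)<0$ forces $\tilde x^*>x^*$. Each regular-venue impact $\venue_l(x)=\tfrac{1+\ratiocost_l x\type\tilde{\den}}{1+\ratiocost_l x\tilde{\den}}$ is strictly increasing in $x$, so these two cases translate to all regular-venue impacts decreasing or increasing together; the combined impact of venue~$j$ after spotlight labeling, $\tfrac{\spfrac_j+\spadv(\spfrac_j)}{1+\spfrac_j}\venue_j$, is proportional to $\venue_j$ and moves in the same direction.

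To obtain the threshold $j_0$, I would show that $\tfrac{\cost_{L,1}}{\cost_{H,1}}(\ratiocost_j^S)^{-\alpha}$ is strictly decreasing in $j$, so $\operatorname{sign}(\tilde f(x^*))$ flips at most once as $j$ grows. Monotonicity of the regular cost ratio $\cost_{L,j}/\cost_{H,j}$ in $j$ is Assumption~\ref{assumption:increasing_cost_ratio}; combined with Assumption~\ref{assumption:harder-spotlight}, this yields the corresponding monotonicity of $\cost_{L,j}^S/\cost_{H,j}^S$ in $j$ and hence of $\ratiocost_j^S$. Setting $j_0$ to be the smallest index with $x^*\ge\tfrac{\cost_{L,1}}{\cost_{H,1}}(\ratiocost_j^S)^{-\alpha}$ then yields the claimed dichotomy.

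The main obstacle I anticipate is verifying that adding $T_j$ to $f$ preserves the convexity and boundary behaviour on which the entire sign argument rests: the convexity proof in Lemma~\ref{lem:f-stability} is delicate in how $\venue_l(x)^{\beta/(1-\alpha)}$ interacts with the linear factor $x-\tfrac{\cost_{L,1}}{\cost_{H,1}}\ratiocost_l^{-\alpha}$, and inserting the new term with its own $\ratiocost_j^S$ requires re-running that computation term-by-term, using Assumption~\ref{assumption:harder-spotlight} to place $\ratiocost_j^S$ correctly in the ordering of $\ratiocost_l$'s. A secondary concern is that Assumption~\ref{assumption:harder-spotlight} is pointwise in $j$ and does not \emph{a priori} imply cross-venue monotonicity of $\ratiocost_j^S$; I would either impose an analogous MCR condition on the spotlight costs or derive it from the structure of the organizer's design space.
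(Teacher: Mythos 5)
Your argument is correct and identifies the same threshold as the paper, but the final step takes a genuinely different route. Both proofs augment the characteristic function with the same spotlight term (your $T_j$ is the paper's $h_{j,S}$ up to the constant $\cost_{H,1}^{1/(\alpha-1)}$) and both lean on the convexity/uniqueness machinery of \Cref{lem:eq-sp-unique}. The paper, however, evaluates the \emph{new} characteristic function at the point $x_j=(\cost_{H,j}^S/\cost_{L,j}^S)^{\alpha/(1-\alpha)}$ where $h_{j,S}$ vanishes (equivalently, where both types devote equal budget to the spotlight), reads off from the sign whether the high type spends relatively more budget on spotlight papers at the new equilibrium, and then translates that budget comparison into an impact comparison by identifying the regular-venue part of the spotlight game with a density-rescaled game and invoking \Cref{thm:scale-eq}. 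You instead evaluate the new characteristic function at the \emph{old} equilibrium $x^*$, so only $T_j(x^*)$ survives, and conclude directly from convexity that the root moves left or right; monotonicity of each $\venue_l(x)$ in $x$ (which holds since $\type>1$) then delivers the impact comparison with no detour through \Cref{thm:scale-eq}. Under \Cref{assumption:non competitive venue} your pivot $\frac{\cost_{L,1}}{\cost_{H,1}}(\ratiocost_j^S)^{-\alpha}$ equals the paper's $x_j$, and since $T_j(x_j)=0$ the conditions $x^*>x_j$, $\tilde{x}^*<x^*$, and $\tilde{x}^*<x_j$ are all equivalent, so the two thresholds coincide. Your route is arguably cleaner, as it avoids the somewhat informal budget-reallocation equivalence. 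The gap you flag at the end is real but is equally present in the paper: \Cref{assumption:harder-spotlight} only compares spotlight to regular costs within a venue, and the cross-venue monotonicity of $\cost_{L,j}^S/\cost_{H,j}^S$ needed for the sign to flip only once (the paper's unproved claim that ``the value of $x$ is decreasing in venue index $j$'') must either be assumed or derived from the organizer's selection rule, exactly as you suggest.
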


The proof of \Cref{thm:sp-impact-compare} reduces the problem to the scaling effect of the equilibrium. If we view the spotlight publications as attracted to a separate venue, the spotlight venue changes the ratio of remaining types in the community. When the remaining types in the community scale with different proportion, the equilibrium impact of venues shifts monotonically. We defer the proof of \Cref{thm:sp-impact-compare} to \Cref{appdx:threshold_effect_proof}.

\begin{figure}[thbp]
\vspace{-2mm}
    \centering
\includegraphics[width = \columnwidth]{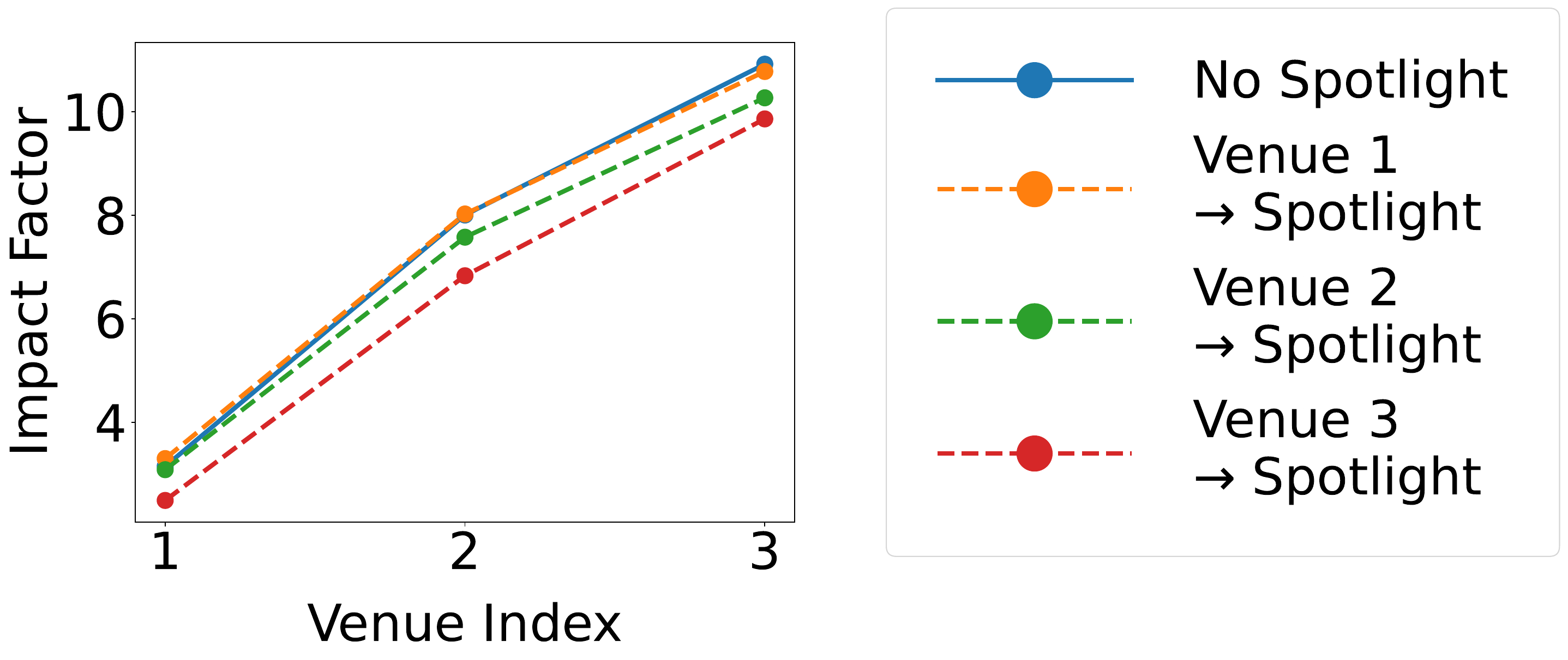}
\vspace{-7mm}
    \caption{Impact factors of regular sessions with and without a venue switching to spotlight labeling. Higher venue indices correspond to more competitive venues.}
    \label{fig:threshold_effect}
    \vspace{-5mm}
\end{figure}

\textbf{Empirical: threshold effect of spotlight labeling holds for many-type settings.}  Under \Cref{conj:many-type-uniquess}, we empirically show the threshold effect in a simulation with five researcher types and three venues; the full setup appears in \Cref{appdx: many-type-threshold}. Figure \ref{fig:threshold_effect} compares the baseline equilibrium impacts (solid blue line) with those obtained when a single venue adopts spotlight labeling (dashed lines). When a more competitive venue introduces spotlight labeling—shown by the red (venue 3) and green (venue 2) dashed lines—the impact factors at all regular sessions fall below the baseline. This occurs because the spotlight session of a competitive venue attracts disproportionately many high-type researchers away from regular sessions. In contrast, when a less competitive venue adopts spotlight labeling—shown by the orange dashed line (venue 1)—the impact at venue 2 rises above the baseline. These results empirically generalize the threshold effect to the many-type setting.

Recall that Constraint \ref{constraint:sp-better} requires that spotlight papers have a higher impact on average. One obvious strategy that organizers may use is to only select papers by researchers with high research impact into the spotlight session, i.e.\ setting costs $\cost_{H, j, S}<\infty$ and $\cost_{L, j, S}=\infty$. We note that this strategy leads to high-type researchers less willing to publish on regular venues, and decreases the impact of all regular venues in the community.

\begin{corollary}\label{observation:only-high-in-spotlight-is-bad}
Under \Cref{assumption:non competitive venue} and \Cref{assumption:harder-spotlight}, for a binary-type \model{} with $k$ venues, where some venue $j$ switches to spotlight labeling, If venue $j$ only labels papers from high-impact type $\type_H$ researchers as ``spotlight'', the equilibrium research impacts of all regular venues will decrease. 
\end{corollary}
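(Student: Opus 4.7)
The approach is to recast the post-spotlight game as an ordinary binary-type \model{} on the same $k$ venues but with a strictly smaller high-type density, and then invoke the asymmetric growth property recorded in \Cref{appdx:scale-eq}.

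First, I eliminate the spotlight session from the low type. Since $\cost^S_{L,j}=\infty$, only high-type researchers ever publish there, so I treat the spotlight session as an auxiliary $(k+1)$-st venue available exclusively to the high type, with cost $\cost^S_{H,j}$ and with perceived impact $\spadv(\spfrac_j)\venue_j$ in her utility. By \Cref{lem:eq-sp-unique}, the spotlight game admits a unique pure-strategy equilibrium; fix it and denote the regular-venue impacts by $(\venue_1^\star,\ldots,\venue_k^\star)$.

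Next, I extract a rescaling factor. Let $\lambda\in(0,1)$ denote the fraction of the high type's unit budget spent on the regular venues $1,\ldots,k$ at this equilibrium; strict inequality $\lambda<1$ holds because $\spadv(\spfrac_j)>1$ together with finite $\cost^S_{H,j}$ forces a strictly positive spotlight allocation via the closed-form formula of \Cref{prop:best-response}. Because that formula is linearly homogeneous in the budget, the high type's action vector restricted to the regular venues equals $\lambda$ times the action vector of a budget-$1$ high-type researcher who faces the same impacts $(\venue_1^\star,\ldots,\venue_k^\star)$ but has no spotlight option; call the latter $\act_{H,\ell}^{\mathrm{reg}}$. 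Plugging into the consistency equation \eqref{eq:equilibrium-venue-consistency}, the high type's contribution to regular venue $\ell$ enters both numerator and denominator as $\den_H\cdot\lambda\cdot \act_{H,\ell}^{\mathrm{reg}}=(\lambda\den_H)\cdot\act_{H,\ell}^{\mathrm{reg}}$, while the low type's best response on the regular venues is pinned down by the same impact vector across the two games. Hence $(\venue_1^\star,\ldots,\venue_k^\star)$, together with the re-expressed actions, constitutes a pure-strategy equilibrium of the plain binary-type \model{} on the $k$ original venues with type densities $(\lambda\den_H,\den_L)$. Uniqueness (\Cref{thm:2p-uniqueness}) makes this identification unambiguous, so these impacts must coincide with the equilibrium impacts of the reduced-density game. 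Since $\lambda\den_H<\den_H$, the asymmetric growth property of \Cref{appdx:scale-eq} then yields that every $\venue_\ell^\star$ is lower than the corresponding impact in the original no-spotlight equilibrium, which is exactly the claim.

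The delicate point is the equivalence in the preceding paragraph: to pass from ``the spotlight equilibrium induces an equilibrium of the reduced-density game on the regular venues'' to ``the regular-venue impacts agree'' one needs both uniqueness results, \Cref{thm:2p-uniqueness} and \Cref{lem:eq-sp-unique}, and one must verify that the low type's best response really is unchanged when the spotlight venue is invisible to her (immediate from \Cref{prop:best-response} since her cost on the spotlight is infinite). Once this identification is granted, the comparison with the baseline is a single invocation of asymmetric growth, so the remaining work is only a bookkeeping check of the best-response formula.
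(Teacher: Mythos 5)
Your proposal is correct and follows essentially the same route as the paper: the paper derives this corollary directly from the proof of \Cref{thm:sp-impact-compare}, whose key step is exactly your reduction---when the high type diverts a strictly positive budget fraction to the spotlight while the low type diverts none, the regular-venue equilibrium coincides with that of a plain game with high-type density scaled down by $\lambda<1$, and \Cref{thm:scale-eq} then gives the decrease. Your write-up merely makes explicit (via homogeneity of the best response in the budget, the consistency equation, and the two uniqueness results) the identification that the paper states in one line.
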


The proof follows directly from the proof of \Cref{thm:sp-impact-compare}. \Cref{observation:only-high-in-spotlight-is-bad} implies that the organizer should diversify the set of authors with ``spotlight'' papers.

\section{Simulations, Conclusions, and Future Work}
\label{sec:future work}

\paragraph{Summary of Simulation Results} We empirically study the \model{} to validate our modeling and results. \Cref{appdx:parameter-setup} describes  simulation setups. \Cref{appdx: uniqueness many type} checks the uniqueness of equilibrium under many-type setting. \Cref{appdx: many-type-threshold} and \Cref{appdx:fast-convergence} describe the experimental setups for the many-type threshold effect and the fast convergence result, respectively. Finally, \Cref{appdx:cost-mat-influence} and \Cref{appdx:different-spotlight-ratio} analyze how varying relative costs and spotlight ratios influence the equilibrium outcomes

% and \Cref{appdx: many-type-threshold} illustrate the threshold effect in a many-type setting, extending our theoretical result under a binary-type setting. \Cref{appdx:fast-convergence} validates that our \model{} converges to an equilibrium very fast, justifying our assumptions that publication costs and researcher types remain constant in the game. 

\paragraph{Conclusion and Future Work} This paper proposes a game-theoretic model, the \model{}, that explains the interplay between researchers'  publication choice and the evolution of publication venues' impact. We study the properties of the game in equilibrium from an observer's perspective of the research community.

Our results can be divided into two sets: 
 results that an observer of the research community should expect from a game-theoretic model: the equilibrium existence and the scaling effect, which justify our model choice;
 results that shed light on the publication patterns: the monotonicity of publication number in researcher type and the effect of spotlight labeling. 
In future work, we will consider the optimization problem of the venue organizers.% and provide results that guide organizers to design the venue, e.g.\ selecting costs, spotlight ratio, etc. 

% We empirically simulate \model{} to validate our model selection and results. \Cref{appdx:parameter-setup} describes the experimental parameters for all simulations. \Cref{appdx: uniqueness many type} checks the uniqueness of equilibrium and \Cref{appdx: many-type-threshold} validates the threshold effect in a many-type setting, extending our theoretical result under a binary-type setting. \Cref{appdx:fast-convergence} validates that our \model{} converges to an equilibrium very fast, justifying our assumptions that publication costs and researcher types remain constant in the game. 

We outline several key observations whose formal results are not included in the current version of the paper, which we identify as future directions.

\begin{description}
    \item[Theoretical Direction] On the theory side, we leave the formal proof for equilibrium uniqueness (\Cref{conj:many-type-uniquess}) and the threshold effect (\Cref{thm:sp-impact-compare}) under many-type setting as future work. 

    \item[Empirical Direction] On the empirical side, our model is not limited to analyzing the AI publication market. \Cref{observation:importance-of-relative-cost} in Appendix suggests that understanding the relative cost of publication in different fields allows our model to predict equilibrium outcomes in other disciplines. One future work is to gather more data and estimate the spotlight advertisement effect $\gamma$. A closed form function $\gamma$ may lead to new theoretical conclusions and practical insights. Given the generality of our model, it could assist a wide range of academic communities in optimizing the impact of their publication venues.
\end{description}

% Bibliography (no manual page breaks)
\bibliographystyle{abbrvnat}
\bibliography{ref}

@article{cortes2021inconsistency,
  title={Inconsistency in conference peer review: revisiting the 2014 neurips experiment},
  author={Cortes, Corinna and Lawrence, Neil D},
  journal={arXiv preprint arXiv:2109.09774},
  year={2021}
}

@incollection{spence1978job,
  title={Job market signaling},
  author={Spence, Michael},
  booktitle={Uncertainty in economics},
  pages={281--306},
  year={1978},
  publisher={Elsevier}
}

@article{perry2016count,
  title={How to count citations if you must},
  author={Perry, Motty and Reny, Philip J},
  journal={American Economic Review},
  volume={106},
  number={9},
  pages={2722--2741},
  year={2016},
  publisher={American Economic Association 2014 Broadway, Suite 305, Nashville, TN 37203}
}

@article{MV-93,
 ISSN = {00346527, 1467937X},
 URL = {http://www.jstor.org/stable/2298127},
 author = {Andreu Mas-Colell and Xavier Vives},
 journal = {The Review of Economic Studies},
 number = {3},
 pages = {613--629},
 publisher = {[Oxford University Press, Review of Economic Studies, Ltd.]},
 title = {Implementation in Economies with a Continuum of Agents},
 urldate = {2024-01-20},
 volume = {60},
 year = {1993}
}

@inproceedings{zhang2022system,
  title={A System-Level Analysis of Conference Peer Review},
  author={Zhang, Yichi and Yu, Fang-Yi and Schoenebeck, Grant and Kempe, David},
  booktitle={Proceedings of the 23rd ACM Conference on Economics and Computation},
  pages={1041--1080},
  year={2022}
}

@inproceedings{zhang2024eliciting,
  title={Eliciting Honest Information From Authors Using Sequential Review},
  author={Zhang, Yichi and Schoenebeck, Grant and Su, Weijie},
  booktitle={Proceedings of the AAAI Conference on Artificial Intelligence},
  volume={38},
  number={9},
  pages={9977--9984},
  year={2024}
}

@inproceedings{wright2015mechanical,
  title={Mechanical TA: Partially automated high-stakes peer grading},
  author={Wright, James R and Thornton, Chris and Leyton-Brown, Kevin},
  booktitle={Proceedings of the 46th ACM Technical Symposium on Computer Science Education},
  pages={96--101},
  year={2015}
}

@article{lasry2007mean,
  title={Mean field games},
  author={Lasry, Jean-Michel and Lions, Pierre-Louis},
  journal={Japanese journal of mathematics},
  volume={2},
  number={1},
  pages={229--260},
  year={2007},
  publisher={Springer}
}

@article{karlin1956theory,
  title={The theory of decision procedures for distributions with monotone likelihood ratio},
  author={Karlin, Samuel and Rubin, Herman},
  journal={The Annals of Mathematical Statistics},
  pages={272--299},
  year={1956},
  publisher={JSTOR}
}

@article{bergemann2002information,
  title={Information acquisition and efficient mechanism design},
  author={Bergemann, Dirk and V{\"a}lim{\"a}ki, Juuso},
  journal={Econometrica},
  volume={70},
  number={3},
  pages={1007--1033},
  year={2002},
  publisher={Wiley Online Library}
}

@article{aumann1975values,
  title={Values of markets with a continuum of traders},
  author={Aumann, Robert J},
  journal={Econometrica: Journal of the Econometric Society},
  pages={611--646},
  year={1975},
  publisher={JSTOR}
}

@article{wu2023isotonic,
  title={A Truth Serum for Eliciting Self-Evaluations in Scientific Reviews},
  author={Wu, Jibang and Xu, Haifeng and Guo, Yifan and Su, Weijie},
  journal={arXiv preprint arXiv:2306.11154},
  year={2024}
}

@article{shah2021systemic,
  title={Systemic challenges and solutions on bias and unfairness in peer review},
  author={Shah, Nihar B},
  journal={Preprint http://www. cs. cmu. edu/nihars/preprints/Shah\_ Survey\_PeerReview. pdf},
  volume={2},
  year={2021}
}

@article{stelmakh2021peerreview4all,
  title={PeerReview4All: Fair and accurate reviewer assignment in peer review},
  author={Stelmakh, Ivan and Shah, Nihar and Singh, Aarti},
  journal={Journal of Machine Learning Research},
  volume={22},
  number={163},
  pages={1--66},
  year={2021}
}

@article{lipton2019troubling,
  title={Troubling Trends in Machine Learning Scholarship: Some ML papers suffer from flaws that could mislead the public and stymie future research.},
  author={Lipton, Zachary C and Steinhardt, Jacob},
  journal={Queue},
  volume={17},
  number={1},
  pages={45--77},
  year={2019},
  publisher={ACM New York, NY, USA}
}

@article{heckman2020publishing,
  title={Publishing and promotion in economics: The tyranny of the top five},
  author={Heckman, James J and Moktan, Sidharth},
  journal={Journal of Economic Literature},
  volume={58},
  number={2},
  pages={419--470},
  year={2020},
  publisher={American Economic Association 2014 Broadway, Suite 305, Nashville, TN 37203-2425}
}

@techreport{ductor2020influence,
  title={On the Influence of Top Journals},
  author={Ductor, Lorenzo and Goyal, Sanjeev and van der Leij, Marco J and Paez, Gustavo Nicolas},
  year={2020},
  institution={Tinbergen Institute Discussion Paper}
}

@article{su2021truthful,
  title={A Truthful Owner-Assisted Scoring Mechanism},
  author={Su, Weijie J},
  journal={Neural Information Processing Systems (NeurIPS)},
  volume={28},
  pages={28},
  year={2021}
}

@article{milchtaich1996congestion,
  title={Congestion models of competition},
  author={Milchtaich, Igal},
  journal={The American Naturalist},
  volume={147},
  number={5},
  pages={760--783},
  year={1996},
  publisher={University of Chicago Press}
}

@article{friedman1996dynamics,
  title={Dynamics and rationality in ordered externality games},
  author={Friedman, Eric J},
  journal={Games and Economic Behavior},
  volume={16},
  number={1},
  pages={65--76},
  year={1996},
  publisher={Elsevier}
}

@article{blonski1999anonymous,
  title={Anonymous games with binary actions},
  author={Blonski, Matthias},
  journal={Games and Economic Behavior},
  volume={28},
  number={2},
  pages={171--180},
  year={1999},
  publisher={Elsevier}
}

@article{roughgarden2004bounding,
  title={Bounding the inefficiency of equilibria in nonatomic congestion games},
  author={Roughgarden, Tim and Tardos, {\'E}va},
  journal={Games and economic behavior},
  volume={47},
  number={2},
  pages={389--403},
  year={2004},
  publisher={Elsevier}
}

@article{frank2019evolution,
  title={The evolution of citation graphs in artificial intelligence research},
  author={Frank, Morgan R and Wang, Dashun and Cebrian, Manuel and Rahwan, Iyad},
  journal={Nature Machine Intelligence},
  volume={1},
  number={2},
  pages={79--85},
  year={2019},
  publisher={Nature Publishing Group UK London}
}

@article{wang2013quantifying,
  title={Quantifying long-term scientific impact},
  author={Wang, Dashun and Song, Chaoming and Barab{\'a}si, Albert-L{\'a}szl{\'o}},
  journal={Science},
  volume={342},
  number={6154},
  pages={127--132},
  year={2013},
  publisher={American Association for the Advancement of Science}
}

@book{advance_mirco_theory,
  author    = {Jehle, Geoffrey and Reny, Philip},
  title     = {Advanced Microeconomic Theory},
  publisher = {Pearson},
  pages={413},
  year      = {2011},
  address   = {Harlow},
  edition   = {3}
}

@article{aziz2023group,
  title={Group fairness in peer review},
  author={Aziz, Haris and Micha, Evi and Shah, Nisarg},
  journal={Advances in neural information processing systems},
  volume={36},
  pages={64885--64895},
  year={2023}
}

@inproceedings{boehmer2022combating,
  title={Combating collusion rings is hard but possible},
  author={Boehmer, Niclas and Bredereck, Robert and Nichterlein, Andr{\'e}},
  booktitle={Proceedings of the AAAI conference on artificial intelligence},
  volume={36},
  number={5},
  pages={4843--4850},
  year={2022}
}

@inproceedings{fromm2021argument,
  title={Argument mining driven analysis of peer-reviews},
  author={Fromm, Michael and Faerman, Evgeniy and Berrendorf, Max and Bhargava, Siddharth and Qi, Ruoxia and Zhang, Yao and Dennert, Lukas and Selle, Sophia and Mao, Yang and Seidl, Thomas},
  booktitle={Proceedings of the AAAI conference on artificial intelligence},
  volume={35},
  number={6},
  pages={4758--4766},
  year={2021}
}

@article{jecmen2020mitigating,
  title={Mitigating manipulation in peer review via randomized reviewer assignments},
  author={Jecmen, Steven and Zhang, Hanrui and Liu, Ryan and Shah, Nihar and Conitzer, Vincent and Fang, Fei},
  journal={Advances in Neural Information Processing Systems},
  volume={33},
  pages={12533--12545},
  year={2020}
}

@inproceedings{manzoor2021uncovering,
  title={Uncovering latent biases in text: Method and application to peer review},
  author={Manzoor, Emaad and Shah, Nihar B},
  booktitle={Proceedings of the AAAI Conference on Artificial Intelligence},
  volume={35},
  number={6},
  pages={4767--4775},
  year={2021}
}

@inproceedings{stelmakh2021catch,
  title={Catch me if i can: Detecting strategic behaviour in peer assessment},
  author={Stelmakh, Ivan and Shah, Nihar B and Singh, Aarti},
  booktitle={Proceedings of the AAAI Conference on Artificial Intelligence},
  volume={35},
  number={6},
  pages={4794--4802},
  year={2021}
}

@inproceedings{meir2021market,
  title={A market-inspired bidding scheme for peer review paper assignment},
  author={Meir, Reshef and Lang, J{\'e}r{\^o}me and Lesca, Julien and Mattei, Nicholas and Kaminsky, Natan},
  booktitle={Proceedings of the AAAI conference on artificial intelligence},
  volume={35},
  number={6},
  pages={4776--4784},
  year={2021}
}

@inproceedings{payan2022will,
  title={I Will Have Order! Optimizing Orders for Fair Reviewer Assignment},
  author={Payan, Justin and Zick, Yair},
  booktitle={Proceedings of the 21st International Conference on Autonomous Agents and Multiagent Systems},
  pages={1711--1713},
  year={2022}
}

@inproceedings{lauriere2022scalable,
  title={Scalable deep reinforcement learning algorithms for mean field games},
  author={Lauriere, Mathieu and Perrin, Sarah and Girgin, Sertan and Muller, Paul and Jain, Ayush and Cabannes, Theophile and Piliouras, Georgios and P{\'e}rolat, Julien and Elie, Romuald and Pietquin, Olivier and others},
  booktitle={International conference on machine learning},
  pages={12078--12095},
  year={2022},
  organization={PMLR}
}

@article{perrin2020fictitious,
  title={Fictitious play for mean field games: Continuous time analysis and applications},
  author={Perrin, Sarah and P{\'e}rolat, Julien and Lauri{\`e}re, Mathieu and Geist, Matthieu and Elie, Romuald and Pietquin, Olivier},
  journal={Advances in neural information processing systems},
  volume={33},
  pages={13199--13213},
  year={2020}
}

% Appendices
\appendix
\section{Simulations}
\label{sec:experiments}
In this section, we empirically simulate \model{} to validate our model selection and results. \Cref{appdx:parameter-setup} describes the experimental parameters for all simulations. \Cref{appdx: uniqueness many type} checks the uniqueness of equilibrium and \Cref{appdx: many-type-threshold} validates the threshold effect in a many-type setting, extending our theoretical result under a binary-type setting. \Cref{appdx:fast-convergence} validates that our \model{} converges to an equilibrium very fast, justifying our assumptions that publication costs and researcher types remain constant in the game. 
% We present more simulation results in \Cref{appdx:simulation_experiments}. 
\Cref{appdx:cost-mat-influence} focuses on assessing the influence of different relative costs on venue impact factor at equilibrium. \Cref{appdx: many-type-threshold} investigates the threshold effect of spotlight labeling in a many-type setting and empirically examines the influence of varying spotlight ratios on venue impact in the equilibrium. \Cref{appdx:different-spotlight-ratio} studies how the equilibrium outcome may change if a venue chooses different spotlight ratios.

\subsection{Parameter set-up for simulation experiments}\label{appdx:parameter-setup}
Parameters listed in \Cref{appdx:parameter-setup} will be used for all experiments in \Cref{sec:experiments}. The time budget for all types of researchers is normalized to $40$, allowing us to interpret each cost matrix cell as the mean weekly hours the PI spends as publication efforts. We assume $\alpha = 0.2$, $\beta = 2$, and the distribution of researcher types follows the pattern outlined in Table \ref{tab:researcher-type-dist}, with more low-type researchers than high-type ones. 

\begin{table}[ht]
\centering
\caption{Researcher Type Distribution}
\label{tab:researcher-type-dist}
\begin{tabular}{|c|c|c|c|c|c|}
\hline
 & Type $1$ & Type $2$ & Type $3$& Type $4$ & Type $5$ \\
 \hline
 $\theta_i = i^2$ & $1$ & $4$ & $9$ & $16$ & $25$\\
\hline
Percentage & $50\%$  & $25\%$ & $15\%$ & $7\%$ & $3\%$ \\
\hline
\end{tabular}
\end{table}

Our simulation stops if the \model{} reaches an $\epsilon$-Nash equilibrium, satisfying Stopping Criterion in \Cref{def:stop-criterion}.
\begin{definition}[Stopping Criterion]\label{def:stop-criterion}
An action profile $\vact = \{ \act_{i, j} \}_{i,j}$ is a pure-strategy equilibrium if it satisfies the following conditions in the simulation% \hf{try to use vector doc products, to substitubes summation as umuch as possible? See my changes below }
\begin{align*}
    \sqrt{\sum_j (\venue_j - \frac{(\vact_{:, j} \odot \vmu) \cdot  \vtype}{(\vact_{:,j}\odot \vmu) \cdot \uvec})^2} < \epsilon & \qquad\text{($\epsilon$-Nash Criterion)}
   \\ 
    \left((\vact_{i})^{\alpha} \cdot \vvenue^{ \beta}\right)^{\frac{1}{\beta}} \geq \left((\tilde{\vact}_{i})^{\alpha} \cdot \vvenue^{ \beta}\right)^{\frac{1}{\beta}}  \forall i &\qquad \text{(Best Response)}
\end{align*}
\end{definition}

\subsection{Uniqueness of Equilibrium in many-type setting}\label{appdx: uniqueness many type}

% \paragraph{Parameters Used By All Experiments} There are certain parameters that we use throughout the experiments in this section. The time budget for all types of researchers is normalized to $40$, allowing us to interpret each cost matrix cell as the mean weekly hours the PI spends as publication efforts. We assume $\alpha = 0.2$, $\beta = 2$, and the distribution of researcher types follows the pattern outlined in Table \ref{tab:researcher-type-dist}, with more low-type researchers than high-type ones. 

% \begin{table}[htbp]
% \centering
% \caption{Researcher Type Distribution}
% \label{tab:researcher-type-dist}
% \begin{tabular}{|c|c|c|c|c|c|}
% \hline
%  & Type $1$ & Type $2$ & Type $3$& Type $4$ & Type $5$ \\
%  \hline
%  $\theta_i = i^2$ & $1$ & $4$ & $9$ & $16$ & $25$\\
% \hline
% Percentage & $50\%$  & $25\%$ & $15\%$ & $7\%$ & $3\%$ \\
% \hline
% \end{tabular}
% \end{table}

%We include the parameters shared by all later experiments in \Cref{appdx:parameter-setup}. 
We consider \model{} with 5 researchers and 3 venues, but the simulation we develop can be easily extended to more researcher types and venues. 
In the simulation, we take $\epsilon = 10^{-5}$. Here we provide empirical evidence that \model{} has a unique equilibria even in the many-type setting. To verify \Cref{conj:many-type-uniquess} empirically, we fix all parameters of the model and randomly select initial venue impacts from the type space $\typespace \in [1, 25]$. Because of the $\epsilon$-stopping criterion, the simulation doesn't numerically reach the same equilibrium each time. We treat the equilibrium impact of venues as a vector, calculate the Euclidean norm between two simulation instances, and conclude they reach a different equilibrium if the Euclidean norm is larger than $1 \times 10^{-5}$. Otherwise, we say the two instances reach the same equilibrium outcome. Across $50$ simulation instances, the model consistently reaches the \textbf{same} equilibrium outcome. When venues begin spotlight labeling, the model still always converges to the same equilibrium. While formal proof of equilibrium uniqueness is established only for binary-type settings, our empirical findings support the validity of \Cref{conj:many-type-uniquess}. The formal proof for many-type settings is left for future work.

\subsection{Simulation Setup for Spotlight Labeling in Many-Type Settings}
\label{appdx: many-type-threshold}

% In Theorem \ref{thm:sp-impact-compare}, we show the threshold effect of venues switching to spotlight labeling under the binary-type setting. 

This section describes the parameter choices used in our many-type simulations of the threshold effect of spotlight labeling.

\paragraph{Spotlight labeling parameter set-up}Venue organizers choose the spotlight ratio $1/\spfrac_j$ and the relative increase in spotlight paper cost (selection rule), i.e., $\spcostratio_{i,j} = \frac{\cost_{i, j}^S}{\cost_{i, j}} \forall i,j$. We assume that 
\begin{equation}\label{eq:spot-selection-rule}
    r_{i,j} = a_j \cdot (N+1-i)^2
\end{equation}
where $a_j > 0$ is a constant. Note, $r_{i,j}$ is decreasing in researcher impact which satisfies Assumption \ref{assumption:harder-spotlight}, and the square term captures the exponential advantage gained by high-impact researchers on spotlight publication. Once the venue organizer fixes the spotlight ratio, one can solve for $a_j$ and thus pin down the paper selection rule by Equation \eqref{eq:sp-frac-cost}. This function form of the selection rule has the advantage of fixing $r_{i,j}/r_{i,j'} \forall i, j, j'$. 

Based on our empirical study of spotlight labeling effect in \Cref{appdx:empirical_justification} using CVPR citation data, we further assume $ \spadv(\spfrac_j) = (\log(\spfrac_j))^p$, where $p > 0$ is a constant. The logarithm reduces the large variance of spotlight ratios we observe in empirical data on CVPR. Note $\spadv(\spfrac_j) > 1$ and is increasing in $\spfrac_j$. Throughout the following analysis, we use $p = 1.77$, the best-fit value for CVPR citation data from $2014$ to $2019$.  We use the cost function specified in Equation \ref{eq-simulation-cost}, and we consider the following cost parameter choice $(z=1, g = 0.6)$, where the relative cost grows fast. We set $1/\spfrac_j = 24\%$, the average spotlight ratio on CVPR from $2014$ to $2019$.

\subsection{Simulation Setup for Convergence of \model{}}\label{appdx:fast-convergence}

This section describes the experimental setup used to assess the convergence speed of \model{}.The simulation parameters are specified in \Cref{appdx:parameter-setup}, and we use the same $\epsilon$-stopping criterion as other experiments (\Cref{def:stop-criterion}). In the simulation, we take $\epsilon = 10^{-5}$. We randomly select initial venue impacts from the type space $\typespace \in [1, 25]$, and we plot the histogram for $50$ instances of \model{} in \Cref{fig:fast_convergence}. The x-axis represents the number of rounds before \model{} converges, and we observe all experiments converge within $7$ rounds for all simulations. This empirically shows the fast convergence speed of \model{}, which justifies our assumption that we only consider \model{} in a short time span. 

% % \vspace{-5mm}
% \begin{figure}[htbp]
% \centering
% \includegraphics[width=\columnwidth]{fast_convergence.pdf}
% \vspace{-3mm}
% \caption{Rounds before converging to equilibrium. }
% \label{fig:fast_convergence}
% \end{figure}
% % \vspace{-5mm}

% \section{Experiments: Exploring The Effect of Cost and Spotlight Parameters}\label{appdx:simulation_experiments}

% %From \Cref{appdx: simulation-set-up} to \Cref{appdx:different-spotlight-ratio}, we use simulation experiments to explore the effects of cost and the spotlight parameter on the equilibrium outcome. 
% %We include additional simulation results in this section. 
% %\Cref{appdx: simulation-set-up} presents empirical justifications for the uniqueness of equilibrium under the many-type setting. 
% \Cref{appdx:cost-mat-influence} focuses on assessing the influence of different relative costs on venue impact factor at equilibrium. \Cref{appdx: many-type-threshold} investigates the threshold effect of spotlight labeling in a many-type setting and empirically examines the influence of varying spotlight ratios on venue impact in the equilibrium. \Cref{appdx:different-spotlight-ratio} studies how the equilibrium outcome may change if a venue chooses different spotlight ratios.

\subsection{Influence of Relative Cost on Equilibrium Outcome}\label{appdx:cost-mat-influence}
\Cref{observation:importance-of-relative-cost} reveals that relative cost determines the equilibrium venue impacts. In this section, we investigate the effect of varying relative costs on equilibrium venue impacts under the many-type setting. Observations across disciplines reveal variations in publication strategies and the impact of venues. For instance, researchers in Economics tend to publish fewer papers than their counterparts in Computer Science, and the disparity in publication cost and impact between top-tier and other venues is greater in Economics. This prompts the question: by simulating different relative costs between venues, can we observe varying equilibrium outcomes in \model{} that reflect the diversity of publication patterns across disciplines in reality? To explore this, we assume uniformity in researcher types across disciplines and conduct experiments to evaluate the influence of the relative cost on venue impacts in the equilibrium. 
% Our analysis visualizes how different relative cost structures influence the equilibrium outcome. 
We present experiments with parameters specified in \Cref{appdx:parameter-setup}. We denote the $(i,j)$-th entry of the cost matrix as follow:
 \begin{equation}\label{eq-simulation-cost}
    \left\{\begin{array}{cc}
  \cost_{i,j} = 1 & \text{if $j = 1$}
     \\
    \cost_{i,j}  =  e^{\frac{g \cdot j^2}{i^2}} & \text{if $j > 1$}
     \end{array}
     \right.
 \end{equation} 
, where $i$, $j$ are the type and venue indexes respectively and $g>0$ is a constant specifying the growth rate of relative cost. We assume the presence of one non-competitive venue across disciplines, exemplified by the Social Science Research Network for social sciences and Arxiv for natural and mathematical sciences. Since \Cref{observation:importance-of-relative-cost} shows that the absolute cost does not impact equilibrium, W.L.O.G we normalize the cost on the non-competitive venue to be $1$. The square term on the exponential power emphasizes the exponential change of relative cost between venues and researcher types. Note that $\cost_{i,j}$ is increasing in $j$, decreasing in $i$, and satisfies Assumption \ref{assumption:increasing_cost_ratio}.
The relative cost ratio is increasing in $g$. We demonstrate the cost parameter choices in Table \ref{tab:parameters-for-cost-matrices}.

% \vspace{-2mm}
% \begin{table}[htbp]
% \centering
% \caption{Parameters for Relative Cost}
% \label{tab:parameters-for-cost-matrices}
% \begin{tabular}{|c|c|c|c|c|}
% \hline
%  & Low & Relatively Low & Relatively High & High  \\
%  \hline
%  Growth rate of relative cost & $g = 0.2$ & $g=0.3$ & $g=0.4$ & $g = 0.6$\\
% \hline
% \end{tabular}
% \end{table}

% \vspace{-2mm}

\begin{table}[htbp]
\centering
\caption{Parameters for Relative Cost}
\label{tab:parameters-for-cost-matrices}
\begin{tabular}{|c|c|}
\hline
 & Growth rate of relative cost \\
 \hline
 Low & $g = 0.2$\\
 \hline
 Relatively Low & $g = 0.3$\\
 \hline
 Relatively High & $g = 0.4$ \\
 \hline
 High & $g = 0.6$\\
 \hline
\end{tabular}
\end{table}

\textbf{Higher relative cost leads to a larger discrepancy between venue impacts in the equilibrium.} 
Figure \ref{fig:equil-diff-cost} illustrates the equilibrium venue impacts under different relative cost growth rates. As the plot clearly demonstrates, lines with higher relative costs lie above those with lower relative costs on venues 2 and 3, suggesting that disciplines with higher relative costs yield greater impacts at more competitive venues. This is because higher relative costs deter low-type researchers from submitting to competitive venues, enhancing their impact factors.  At the same time, higher relative cost prompts low-type researchers to target less competitive venues, lowering their impact factor as the blue line lies above the other lines on venue 1 in Figure \ref{fig:equil-diff-cost}. Our findings suggest that by calibrating the relative cost parameter $g$ for each discipline, our model can be effectively generalized to study a range of equilibrium outcomes of the \model{} for different disciplines.

\begin{figure}[htbp]
    \centering
\includegraphics[width = 0.45\textwidth]{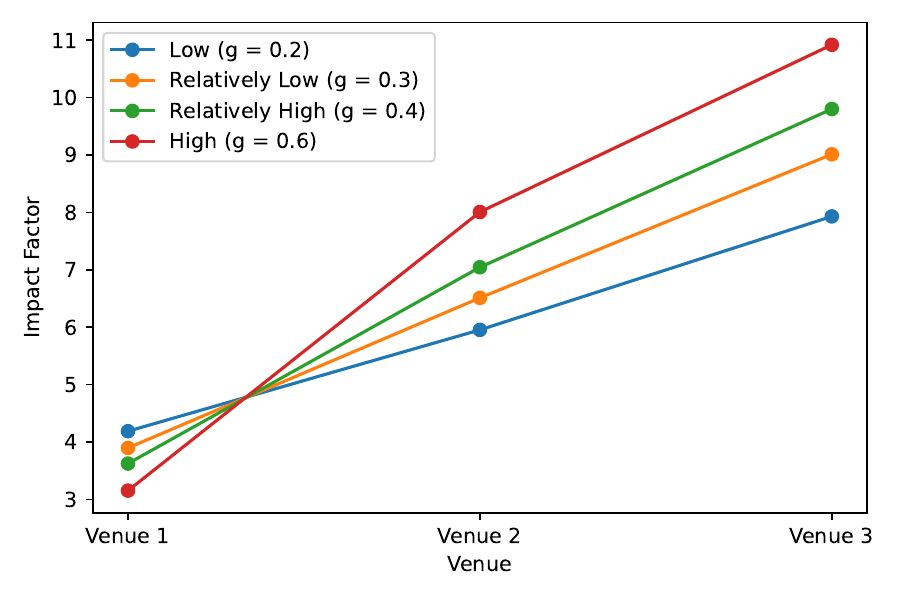}

    \caption{Venue Impacts in the Equilibrium Under Different Relative Cost Growth Rates}
    \label{fig:equil-diff-cost}
\end{figure}

\subsection{Impact of Different Spotlight Ratio} \label{appdx:different-spotlight-ratio}
In this section, we study how the equilibrium venue impacts will change if the venue sets up a different spotlight ratio. We assume the same spotlight selection rule as in \Cref{appdx: many-type-threshold}, and we consider the following cost parameters: ($z = 1,g =0.4$).

\textbf{A lower spotlight ratio will have more influence on the other venues' equilibrium outcome.} We let different venues switch to spotlight labeling, testing spotlight ratio  $5\%$, $25\%$, and $80\%$. In Figure \ref{fig:venue-1-different-sr}, we observe that the blue line  ($5\%$ spotlight ratio) lies below the other two lines at venue 3. This decline demonstrates that a less competitive venue (venue 1 in our example), by setting a very low spotlight ratio, can make its spotlight session comparable to the more competitive venues, thereby attracting high-impact researchers away from the competitive venues. This design choice reduces the impact factor of the regular sessions at more competitive venues. When more competitive venues switch to spotlight labeling, they siphon attention from high-type researchers, diminishing the impact of regular sessions across all venues. In Figures \ref{fig:venue-2-different-sr} and \ref{fig:venue-3-different-sr}, we observe that lines representing lower spotlight ratios are positioned below those with higher spotlight ratios. This decline in impact reveals that the lower the spotlight ratio at competitive venues, the greater the adverse effect it will have on the impact of other venues. Moreover, the gaps between lines are larger in Figure \ref{fig:venue-3-different-sr} compared to those in Figure \ref{fig:venue-2-different-sr}, which suggests that the negative impact of spotlight labeling on other venues intensifies with the venue's competitiveness. This observation makes sense, as the more competitive the original regular session is, the more attention its spotlight session will attract from high-type researchers, diverting their contributions away from other venues.

\begin{figure}[thpb]
\centering
\begin{subfigure}[b]{\columnwidth}
  \includegraphics[width=\linewidth]{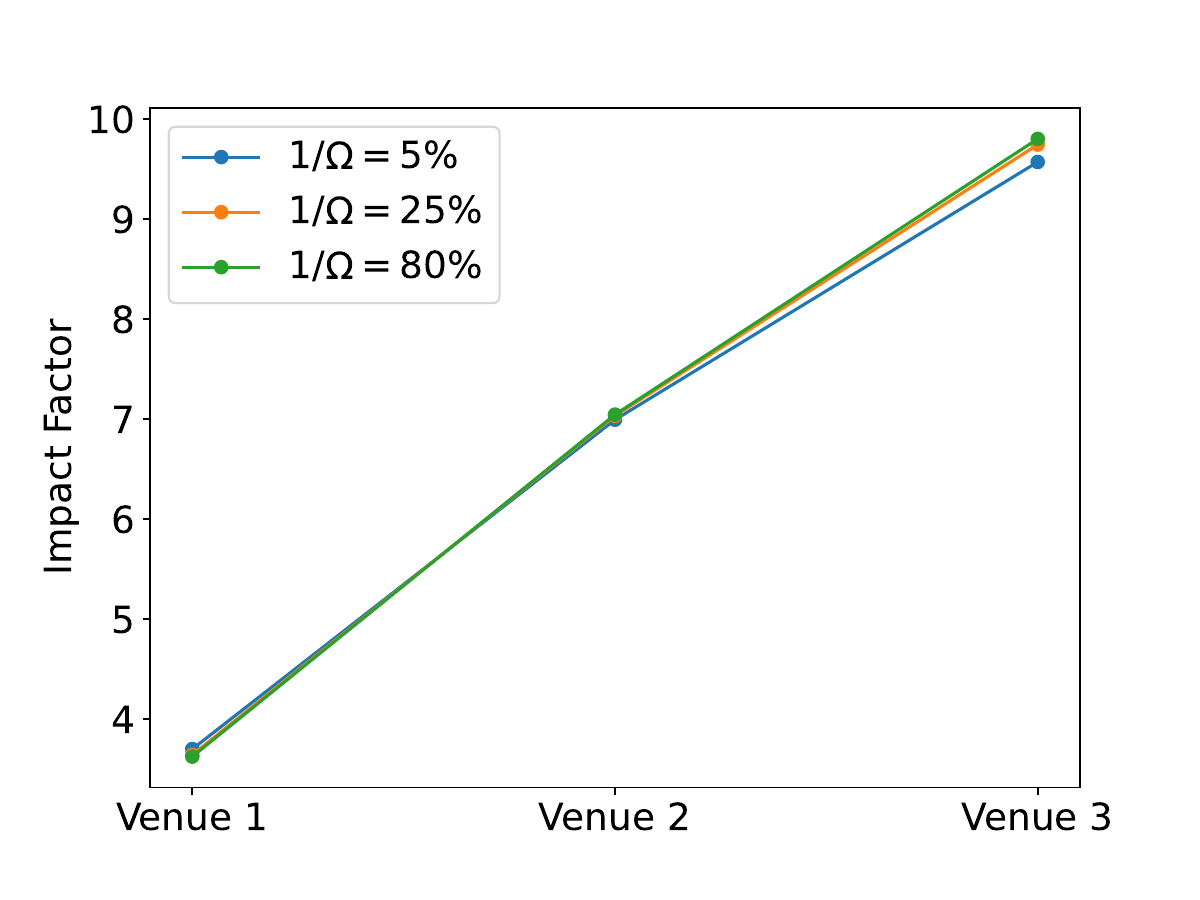}
  %\captionof{figure}{Venue 1 Switches to Spotlight Labeling}
  \caption{Venue 1 Switches to Spotlight Labeling}
  \label{fig:venue-1-different-sr}
\end{subfigure}
\hfil
\begin{subfigure}[b]{\columnwidth}
  \includegraphics[width=\linewidth]{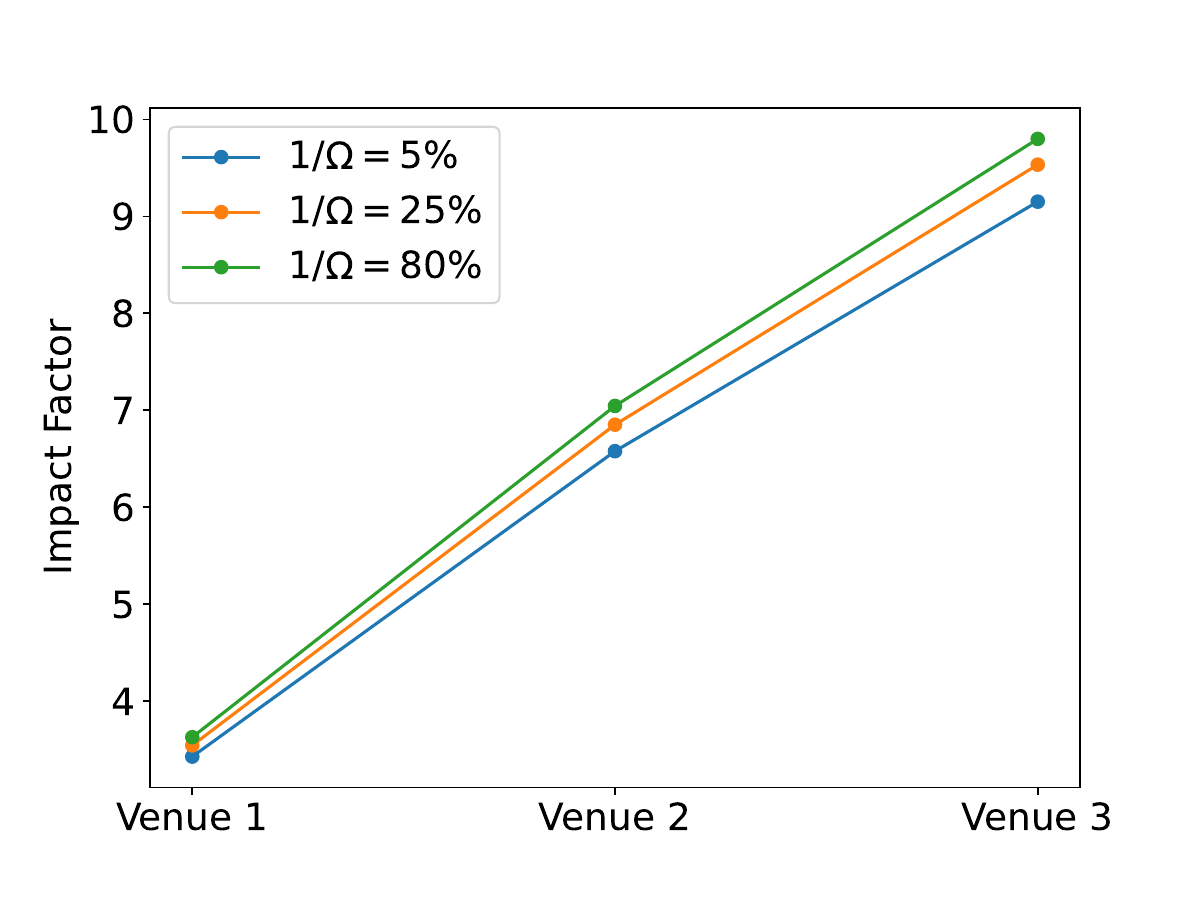}
 % \captionof{figure}{Venue 2 Switches to Spotlight Labeling}
 \caption{Venue 2 Switches to Spotlight Labeling}
  \label{fig:venue-2-different-sr}
\end{subfigure}%
\hfil
\begin{subfigure}[b]
    {\columnwidth}
  \includegraphics[width=\linewidth]{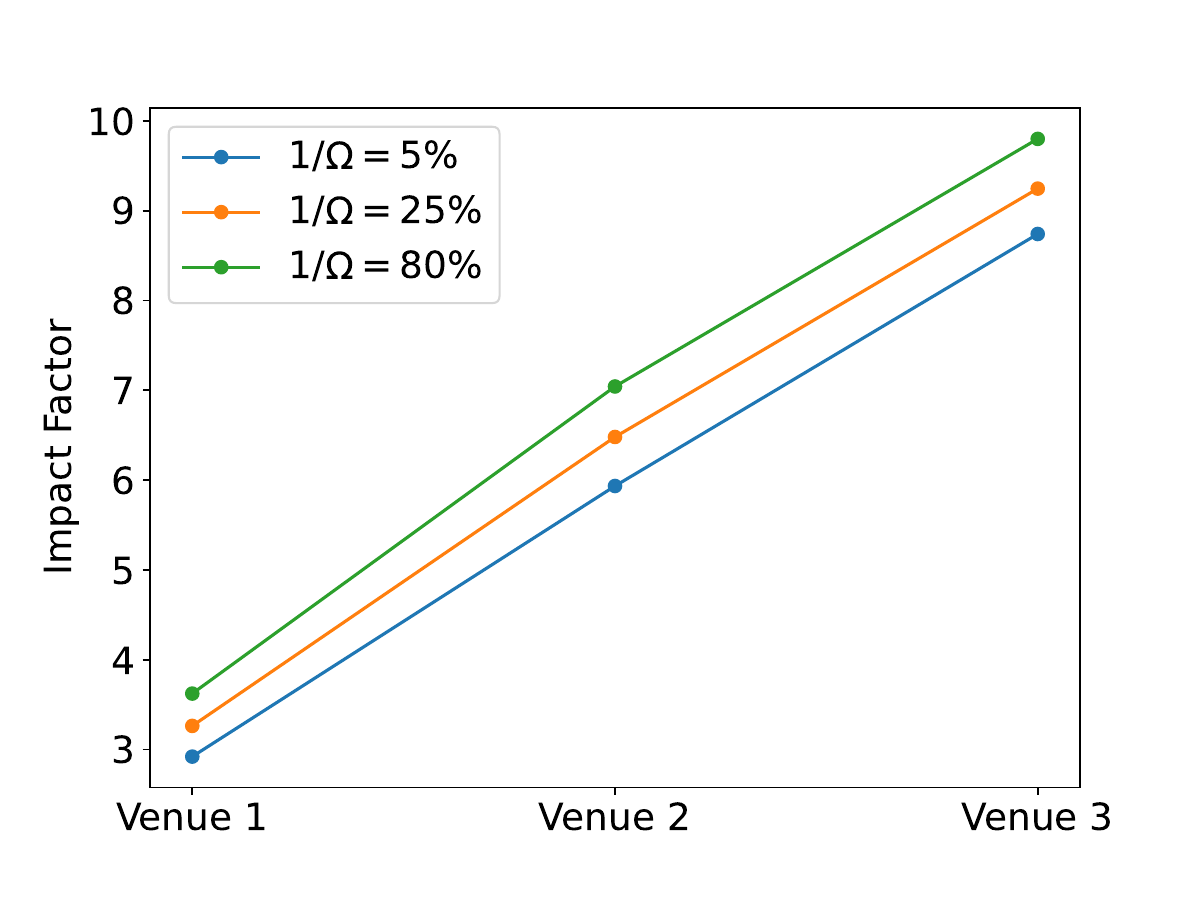}
  %\captionof{figure}{Venue 3 Switches to Spotlight Labeling}
  \caption{Venue 3 Switches to Spotlight Labeling}
  \label{fig:venue-3-different-sr}
\end{subfigure}
 \caption{Figure \ref{fig:venue-1-different-sr}-\ref{fig:venue-3-different-sr}} show the impact factor of the regular session of each venue in the equilibrium when the venue organizer selects different spotlight ratios
\end{figure}
\section{Missing Proofs}

\subsection{The proof of \Cref{prop:best-response}}\label{appdx:best-response}
\propbestresponse*
\begin{proof}
We prove the proposition by solving each researcher's utility maximization problem using the Lagrangian multiplier method. We write the Lagrangian of the maximization problem in Program \eqref{prog:best-response}: 
\begin{align*}
     L(\vact_{i}, \lambda_i) = (\vact_i)^{\alpha} \cdot \vvenue_j^\beta + \lambda_i \cdot (1-\vact_i \cdot \vcost_i)
\end{align*}  
Because the researcher can always derive marginal utility from publication, they will all choose to exhaust their budget set. Setting the First Order Condition and Budget Constraint we have:  
 \begin{equation}\label{eq-constraints}
    \left\{\begin{array}{cc}
   \alpha \cdot (\act_{i,j})^{\alpha-1}  \cdot \confavg_{j}^\beta - \lambda_i \cdot \cost_{i,j} = 0, \forall j & \text{(First Order Condition)}
     \\
    \vact_{i} \cdot \cost_{i} = 1 & \text{(Budget Constraint)}
     \end{array}
     \right.
 \end{equation}
To solve the maximization problem, we first solve the first-order condition in \eqref{eq-constraints} for any $j$. 
\begin{equation}
    \label{eq:preliminary-action}
    (\act_{i,j})^{\alpha-1} = \frac{\lambda_{i} \cdot \cost_{i,j}}{\alpha \cdot \confavg_{j}^\beta} \notag \Rightarrow
    \act_{i,j} = \bigg(\frac{\lambda_i \cdot \cost_{i,j}}{\alpha \cdot \confavg_{j}^\beta}\bigg)^{\frac{1}{\alpha-1}} 
\end{equation}
Plug the action into the budget condition in \eqref{eq-constraints}:
\begin{align*}
    &\sum_{l=1}^k \bigg((\frac{\lambda_i \cdot \cost_{i,l}}{\alpha \cdot \confavg_{l,t}^\beta})^{\frac{1}{\alpha-1}}) \cdot \cost_l^i\bigg) = 1\\
    &\alpha^{\frac{1}{1-\alpha}}(\lambda_i)^{\frac{1}{\alpha-1}} \cdot (\vcost_{i}^{\frac{\alpha}{\alpha-1}} \cdot \vvenue^{\frac{\beta}{1-\alpha}}) = 1\\
    &\lambda_i = \frac{\alpha}{(\vcost_{i}^{\frac{\alpha}{\alpha-1}} \cdot \vvenue^{\frac{\beta}{1-\alpha}})^{\frac{1}{\alpha-1}}}
\end{align*}
After getting the expression for $\lambda_i$, we plug it back in \Cref{eq:preliminary-action} to get the expression for $\act_{i,j}$. 
\begin{align*}
    \act_{i,j} &= \frac{(\cost_{i,j})^{\frac{1}{\alpha-1}} \cdot \venue_{j}^{\frac{\beta}{1-\alpha}}}{\vcost_{i}^{\frac{\alpha}{\alpha-1}} \cdot \vvenue^{\frac{\beta}{1-\alpha}}}
\end{align*}
\end{proof}

\subsection{The proof of \Cref{prop:separable-cost}}\label{appdx:separable-cost}
\propseparablecost*
\begin{proof}[Proof of \Cref{prop:separable-cost}]
The proof hinges on the following useful \Cref{lem:rank-1-cost}, which characterizes the equivalent conditions of constant relative cost on all venues and cost matrix having rank $1$. 

\begin{lemma}\label{lem:rank-1-cost}
The following three statements are equivalent:
\begin{enumerate}
    \item the cost ratio between a low type and a high type does not change with the venue index $j$, i.e.\ for all types $\type_i<\type_{i'}$ of researcher, and all venues $j<j'$, $\frac{\cost_{i, j}}{\cost_{i', j}}= \frac{\cost_{i, j'}}{\cost_{i', j'}}$;
    \item cost $\cost_j^i$ is multiplicative separable in agent type $i$ and venue index $j$;
    \item The cost matrix of \model{} has rank one. 
\end{enumerate}
\end{lemma}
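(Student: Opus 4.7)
The plan is to prove the three equivalences via a cycle of implications: $(1) \Rightarrow (2) \Rightarrow (3) \Rightarrow (1)$. This is a standard characterization of rank-one matrices, and the argument is essentially linear-algebraic, with no need for any properties specific to the \model.

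First, for $(1) \Rightarrow (2)$, I would fix an arbitrary reference type index $i_0$. Assumption (1) implies that for every type index $i$, the ratio $\cost_{i,j}/\cost_{i_0,j}$ is independent of $j$; call this common value $f(i)$. Then setting $g(j) := \cost_{i_0, j}$ yields $\cost_{i,j} = f(i)\, g(j)$ for all $i, j$, which is exactly the multiplicative separability asserted in (2).

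Next, $(2) \Rightarrow (3)$ is immediate: if $\cost_{i,j} = f(i)\, g(j)$, then the cost matrix equals the outer product $\bm{f}\,\bm{g}^\top$, which has rank at most one; since costs are strictly positive the matrix is nonzero, so its rank is exactly one. For $(3) \Rightarrow (1)$, any rank-one matrix factors as $\bm{u}\,\bm{v}^\top$ for some vectors $\bm{u}, \bm{v}$ (the elementary factorization of rank-one matrices), so $\cost_{i,j} = u_i v_j$. The ratio then simplifies as $\cost_{i,j}/\cost_{i',j} = u_i/u_{i'}$, which manifestly does not depend on $j$, recovering (1).

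There is no genuine obstacle in this proof; the only mild care needed is in $(3) \Rightarrow (1)$, where one should note that the rank-one decomposition $\bm{u}\bm{v}^\top$ is guaranteed by either the SVD or the elementary observation that every column of a rank-one matrix is a scalar multiple of a fixed nonzero column. Once the lemma is established, the author can invoke it in the main proof of \Cref{prop:separable-cost} to reduce the equilibrium analysis under constant cost ratios to the easier setting of a rank-one cost structure.
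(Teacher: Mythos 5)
Your proof is correct. The paper itself states \Cref{lem:rank-1-cost} without proof inside the argument for \Cref{prop:separable-cost}, treating the equivalence as a standard fact about rank-one positive matrices, and your cycle of implications $(1)\Rightarrow(2)\Rightarrow(3)\Rightarrow(1)$ is exactly the standard argument one would supply to fill that gap; the only point worth making explicit is that strict positivity of the costs is what guarantees the ratios in $(3)\Rightarrow(1)$ are well defined, which you handle.
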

    
Because the cost $\vcost$ is multiplicative separable in $i$ and $j$, we rewrite the cost matrix as $\vcost = \hat{\vcost}_{\type}\cdot\tilde{\vcost}_{\venue}$, where $\hat{\vcost}_\type=(\hat{\vcost}_{\type_i})_i$, $\tilde{\vcost}_{\venue} = (\tilde{\vcost}_{\venue_j})_j$ are attributes associated with agent types and venues. Consider  equilibrium best response in \Cref{prop:best-response}:
\begin{align}\label{multiplicative-cost-action}
    \act_{i, j} &= \frac{(\cost_{i,j})^{\frac{1}{\alpha-1}} \cdot \venue_{j}^{\frac{\beta}{1-\alpha}}}{\vcost_{i}^{\frac{\alpha}{\alpha-1}} \cdot \vvenue^{\frac{\beta}{1-\alpha}}}\nonumber
    = \frac{(\hat{\cost}_{\type_i}\cdot\tilde{\cost}_{\venue_j})^{\frac{1}{\alpha-1}} \cdot \venue_{j}^{\frac{\beta}{1-\alpha}}}{(\hat{\cost}_{\type_i}\cdot\tilde{\vcost}_{\venue})^{\frac{\alpha}{\alpha-1}} \cdot \vvenue^{\frac{\beta}{1-\alpha}}}\nonumber\\
    &= \frac{1}{\hat{\cost}_{\type_i}} \cdot \frac{\tilde{\cost}_{\venue_j}^{\frac{1}{\alpha-1}} \cdot \venue_j^{\frac{\beta}{1-\alpha}}}{ \tilde{\vcost}_\venue^{\frac{\alpha}{\alpha-1}} \cdot \vvenue^{\frac{\beta}{1-\alpha}}}.
\end{align}

Then, consider the equilibrium venue impacts as in \Cref{eq:equilibrium-venue-consistency}.%We plug the result from \ref{multiplicative-cost-action} into \ref{utility-update} \haichuan{how to name utility update}, and we get the updated utility a researcher gain from publishing on venue $j$ is:
\begin{align*}
    \confavg_{j} &= \frac{\sum_{i=1}^n \frac{1}{\hat{\cost}_{\type_i}} \cdot \frac{\tilde{\cost}_{\venue_j}^{\frac{1}{\alpha-1}} \cdot \den_i \cdot \confavg_{j}^{\frac{\beta}{1-\alpha}}}{\tilde{\vcost}_\venue^{\frac{\alpha}{\alpha-1}} \cdot \vvenue^{\frac{\beta}{1-\alpha}}} \cdot \type_i \cdot \den_i}{\sum_{i=1}^n \frac{1}{\hat{\cost}_{\type_i}} \cdot \frac{\tilde{\cost}_{\venue_j}^{\frac{1}{\alpha-1}} \cdot \confavg_{j}^{\frac{\beta}{1-\alpha}}}{\tilde{\vcost}_\venue^{\frac{\alpha}{\alpha-1}} \cdot \vvenue^{\frac{\beta}{1-\alpha}}} \cdot \den_i}\nonumber
    = \frac{(\left(\frac{1}{\hat{\cost}_{\type_i}} \right)_i\odot \vtype) \cdot \vden} {\vtype \cdot \vden}
\end{align*}
and is invariant of venue index $j$.

\end{proof}

\subsection{The proof of \Cref{prop:Quality is Monotone in difficulty}}
\label{appdx:Quality is Monotone in difficulty}

\begin{restatable}{proposition}{monotonevenueimpact}\label{prop:Quality is Monotone in difficulty}
    Under Assumption \ref{assumption:increasing_cost_ratio}, venue impact $\venue_j$ monotonically increases in the venue index $j$ after all researchers best respond to the venue impact in the previous round. 
\end{restatable}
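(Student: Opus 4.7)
The plan is to substitute the closed-form best response from \Cref{prop:best-response} into the venue-impact consistency formula, recognize the resulting expression as a weighted average of types whose weights satisfy a monotone likelihood ratio (MLR) property across venues, and conclude that the average type increases in $j$.

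First, I would plug $\act_{i,j}=\frac{(\cost_{i,j})^{1/(\alpha-1)}\venue_j^{\beta/(1-\alpha)}}{\vcost_i^{\alpha/(\alpha-1)}\cdot \vvenue^{\beta/(1-\alpha)}}$ into $\venue_j=\frac{(\vact_{:,j}\odot\vden)\cdot\vtype}{(\vact_{:,j}\odot\vden)\cdot\uvec}$. The factor $\venue_j^{\beta/(1-\alpha)}$ is constant in $i$ and cancels between numerator and denominator. Defining the $j$-independent weights $w_i := \den_i\big/\bigl(\vcost_i^{\alpha/(\alpha-1)}\cdot\vvenue^{\beta/(1-\alpha)}\bigr)$, the venue impact rewrites as
\begin{equation*}
\venue_j \;=\; \frac{\sum_i w_i\,(\cost_{i,j})^{1/(\alpha-1)}\,\type_i}{\sum_i w_i\,(\cost_{i,j})^{1/(\alpha-1)}} \;=\; \sum_i p_{i,j}\,\type_i, \qquad p_{i,j}:=\frac{w_i(\cost_{i,j})^{1/(\alpha-1)}}{\sum_{i'} w_{i'}(\cost_{i',j})^{1/(\alpha-1)}}.
\end{equation*}
Thus $\venue_j$ is an expectation of $\type$ under the probability distribution $p_{\cdot,j}$ over types.

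Next, I would show that the family $\{p_{\cdot,j}\}_j$ satisfies MLR in $j$: for any $\type_i<\type_{i'}$ and $j<j'$,
\begin{equation*}
\frac{p_{i',j'}}{p_{i,j'}} \;=\; \frac{w_{i'}}{w_i}\left(\frac{\cost_{i',j'}}{\cost_{i,j'}}\right)^{1/(\alpha-1)} \;>\; \frac{w_{i'}}{w_i}\left(\frac{\cost_{i',j}}{\cost_{i,j}}\right)^{1/(\alpha-1)} \;=\; \frac{p_{i',j}}{p_{i,j}}.
\end{equation*}
This inequality is exactly where \Cref{assumption:increasing_cost_ratio} enters: MCR says $\cost_{i,j}/\cost_{i',j}$ is strictly increasing in $j$, equivalently $\cost_{i',j}/\cost_{i,j}$ is strictly decreasing in $j$; since $\alpha\in(0,1)$ gives $1/(\alpha-1)<0$, raising to this negative exponent flips the monotonicity, yielding the displayed strict inequality.

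Finally, MLR of $p_{\cdot,j'}$ over $p_{\cdot,j}$ with respect to the natural order on types implies first-order stochastic dominance of $p_{\cdot,j'}$ over $p_{\cdot,j}$, and therefore $\mathbb{E}_{p_{\cdot,j'}}[\type]>\mathbb{E}_{p_{\cdot,j}}[\type]$, i.e., $\venue_{j'}>\venue_j$. The MLR$\Rightarrow$FOSD$\Rightarrow$higher mean chain is standard; I would either cite it or include a one-line verification via the identity $\venue_{j'}-\venue_j=\sum_{i<i'}(p_{i',j'}p_{i,j}-p_{i,j'}p_{i',j})(\type_{i'}-\type_i)/2$ after appropriate normalization, where every summand is nonnegative by MLR and strictly positive whenever both types have mass. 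The only potential subtlety is the boundary case where some $w_i=0$ or the type space has mass on a single point; these degenerate cases I would handle by noting the statement is trivially satisfied (no heterogeneity means no change to rule out). Overall, the main obstacle is bookkeeping the sign of $1/(\alpha-1)$ when translating MCR into MLR; once that is done, the rest is the textbook chain of implications.
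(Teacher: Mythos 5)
Your proposal is correct and is essentially the paper's own argument in cleaner packaging: the paper's lemma on constant action ratios combined with MCR yields exactly your cross-product (MLR) inequality $\act_{i',j'}\,\act_{i,j} > \act_{i,j'}\,\act_{i',j}$ on the publication weights, and the paper's term-by-term cross-multiplication of the two venue-impact fractions is precisely the pairing identity you offer as the one-line verification of MLR implying first-order stochastic dominance and hence a higher mean. There is no gap; your bookkeeping of the sign of $1/(\alpha-1)$ matches the paper's use of $0<\alpha<1$.
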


Before we prove \Cref{prop:Quality is Monotone in difficulty}, we need to introduce \Cref{lem:constantactionratio}, which shows the invariance of action ratio across any two venues by two fixed researcher types. 
\begin{lemma}\label{lem:constantactionratio}
    Fix any two researchers with type $\type_i$ and $\type_j$. The action ratio on any two venues is only differed by a constant factor depending on their cost amount on the venues. Formally, 
    ($\frac{\act_{i', j}}{\act_{i, j}}$) / ($\frac{\act_{i',v}}{\act_{i,v}}$) =   
    $\frac{(\cost_{i',j})^{\frac{1}{\alpha-1}}(\cost_{i,v})^{\frac{1}{\alpha-1}}}{(\cost_{i',j})^{\frac{1}{\alpha-1}} \cdot (\cost_{i,v})^{\frac{1}{\alpha-1}}}$
\end{lemma}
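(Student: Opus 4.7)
The plan is to prove the lemma by direct substitution into the closed-form best response from \Cref{prop:best-response}. For any type $i$ and venue $j$, that proposition gives
\begin{equation*}
\act_{i,j} = \frac{(\cost_{i,j})^{\frac{1}{\alpha-1}} \cdot \venue_{j}^{\frac{\beta}{1-\alpha}}}{\vcost_{i}^{\frac{\alpha}{\alpha-1}} \cdot \vvenue^{\frac{\beta}{1-\alpha}}},
\end{equation*}
and the crucial structural observation is that the denominator $\vcost_{i}^{\frac{\alpha}{\alpha-1}} \cdot \vvenue^{\frac{\beta}{1-\alpha}}$ is a scalar depending only on the type index $i$ and the global impact vector $\vvenue$, not on the specific venue $j$. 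This type-versus-venue separation is exactly what drives the lemma.

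The first step is to form the cross-type ratio on a single venue $j$:
\begin{equation*}
\frac{\act_{i',j}}{\act_{i,j}} \;=\; \frac{(\cost_{i',j})^{\frac{1}{\alpha-1}}}{(\cost_{i,j})^{\frac{1}{\alpha-1}}} \cdot \frac{\vcost_{i}^{\frac{\alpha}{\alpha-1}} \cdot \vvenue^{\frac{\beta}{1-\alpha}}}{\vcost_{i'}^{\frac{\alpha}{\alpha-1}} \cdot \vvenue^{\frac{\beta}{1-\alpha}}},
\end{equation*}
noting that the common factor $\venue_j^{\frac{\beta}{1-\alpha}}$ cancels because both researchers face the same venue impact at $j$. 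The second step is to perform the analogous computation on venue $v$, which yields the same type-dependent second factor multiplied by $(\cost_{i',v})^{\frac{1}{\alpha-1}}/(\cost_{i,v})^{\frac{1}{\alpha-1}}$. The third step is to divide these two ratios; the entire type-dependent factor involving $\vcost_{i}, \vcost_{i'}, \vvenue$ is identical in both and cancels exactly, leaving
\begin{equation*}
\frac{\act_{i',j}/\act_{i,j}}{\act_{i',v}/\act_{i,v}} \;=\; \frac{(\cost_{i',j})^{\frac{1}{\alpha-1}} \cdot (\cost_{i,v})^{\frac{1}{\alpha-1}}}{(\cost_{i,j})^{\frac{1}{\alpha-1}} \cdot (\cost_{i',v})^{\frac{1}{\alpha-1}}}.
\end{equation*}

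There is no substantive obstacle here; the proof is a two-line manipulation of exponents, and the main care required is bookkeeping the signs of $\frac{1}{\alpha-1}$ and $\frac{\beta}{1-\alpha}$ (recalling $\alpha\in(0,1)$). One minor item worth flagging is that the right-hand side as displayed in the lemma statement appears to reduce to $1$; the computation above indicates the intended expression is the cross ratio with $\cost_{i,j}$ and $\cost_{i',v}$ in the denominator, which is the form in which this lemma will presumably be used in the proof of \Cref{prop:Quality is Monotone in difficulty}, since \Cref{assumption:increasing_cost_ratio} precisely asserts monotonicity of such cross ratios of costs across venues.
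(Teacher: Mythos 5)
Your proof is correct and follows essentially the same route as the paper: both exploit the fact that the denominator $\vcost_{i}^{\frac{\alpha}{\alpha-1}} \cdot \vvenue^{\frac{\beta}{1-\alpha}}$ in the closed-form best response depends only on the type, so it cancels in the double ratio. You are also right that the right-hand side as printed in the lemma statement is a typo reducing to $1$; the intended cross ratio $\frac{(\cost_{i',j})^{\frac{1}{\alpha-1}} (\cost_{i,v})^{\frac{1}{\alpha-1}}}{(\cost_{i,j})^{\frac{1}{\alpha-1}} (\cost_{i',v})^{\frac{1}{\alpha-1}}}$ is exactly what your derivation (and the paper's subsequent use in \Cref{prop:Quality is Monotone in difficulty}) yields.
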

\begin{proof}
    By rearranging the best response in \Cref{prop:best-response}, we can write the following two equalities for researcher $i'$'s action on venue $j$ and $v$ respectively,
% \begin{align}\label{high-u}
%     \vcost_{q}^{\frac{\alpha}{\alpha-1}} \cdot \vvenue^{\frac{\beta}{1-\alpha}} &= \frac{c_{q,j}^{\frac{1}{\alpha-1}} \cdot \venue_j^{\frac{\beta}{1-\alpha}}}{a_{q,j}} 
% \end{align}
%  \begin{align}\label{high-v}   
%      \vcost_{q}^{\frac{\alpha}{\alpha-1}} \cdot \vvenue^{\frac{\beta}{1-\alpha}} &= \frac{c_{q,v}^{\frac{1}{\alpha-1}} \cdot \venue_v^{\frac{\beta}{1-\alpha}}}{a_{q,v}}
% \end{align}

\begin{align*}
    \vcost_{q}^{\frac{\alpha}{\alpha-1}} \cdot \vvenue^{\frac{\beta}{1-\alpha}} = \frac{c_{q,j}^{\frac{1}{\alpha-1}} \cdot \venue_j^{\frac{\beta}{1-\alpha}}}{a_{q,j}} 
    \text{ \ and \ }
    \vcost_{q}^{\frac{\alpha}{\alpha-1}} \cdot \vvenue^{\frac{\beta}{1-\alpha}} = \frac{c_{q,v}^{\frac{1}{\alpha-1}} \cdot \venue_v^{\frac{\beta}{1-\alpha}}}{a_{q,v}}
\end{align*}

Note the left-hand side of both equalities are the same. Similarly, by rearranging the best response in \Cref{prop:best-response}, we can write the following two equalities for researcher $i$'s action on venue $u$ and $v$ respectively,
% \begin{align}\label{low-u}
%     \vcost_{i}^{\frac{\alpha}{\alpha-1}} \cdot \vvenue^{\frac{\beta}{1-\alpha}} &= \frac{\cost_{i,j}^{\frac{1}{\alpha-1}} \cdot \venue_j^{\frac{\beta}{1-\alpha}}}{\act_{i,j}}
% \end{align}
%  \begin{align}\label{low-v}    
%     \vcost_{i}^{\frac{\alpha}{\alpha-1}} \cdot \vvenue^{\frac{\beta}{1-\alpha}} &= \frac{\cost_{i,v}^{\frac{1}{\alpha-1}} \cdot \venue_v^{\frac{\beta}{1-\alpha}}}{\act_{i,v}}
% \end{align}
\begin{align*}
    \vcost_{i}^{\frac{\alpha}{\alpha-1}} \cdot \vvenue^{\frac{\beta}{1-\alpha}} = \frac{\cost_{i,j}^{\frac{1}{\alpha-1}} \cdot \venue_j^{\frac{\beta}{1-\alpha}}}{\act_{i,j}}
\text{ \ and \ } 
    \vcost_{i}^{\frac{\alpha}{\alpha-1}} \cdot \vvenue^{\frac{\beta}{1-\alpha}} = \frac{\cost_{i,v}^{\frac{1}{\alpha-1}} \cdot \venue_v^{\frac{\beta}{1-\alpha}}}{\act_{i,v}}
\end{align*}
By dividing the action of $i$ and $j$ on the same venue, we obtain the following equality:
\begin{align*}
    \frac{(\cost_{i',j})^{\frac{1}{\alpha-1}}}{\act_{i', j}} / 
    \frac{(\cost_{i,j})^{\frac{1}{\alpha-1}}}{\act_{i, j}}
    =  \frac{(\cost_{i',v})^{\frac{1}{\alpha-1}}}{\act_{i',v}} / 
    \frac{(\cost_{i,v})^{\frac{1}{\alpha-1}}}{\act_{i,v}}
\end{align*}
After rearranging the terms, we get the desired equation:
\begin{align}\label{constant-ratio}
  (\frac{\act_{i', j}}{\act_{i, j}}) / (\frac{\act_{i',v}}{\act_{i,v}}) =   
    \frac{(\cost_{i',j})^{\frac{1}{\alpha-1}}(\cost_{i,v})^{\frac{1}{\alpha-1}}}{(\cost_{i',j})^{\frac{1}{\alpha-1}} \cdot (\cost_{i,v})^{\frac{1}{\alpha-1}}}
\end{align}
This completes the proof of the lemma.
\end{proof}

% \monotonevenueimpact*

\begin{proof}[Proof of \Cref{prop:Quality is Monotone in difficulty}]
    The proof is a direct comparison between venue impacts. Consider any two venues with indices $j$ and $v$, we want to show that if venue $j$ is more competitive than venue $v$, then venue $j$ will have a higher impact than venue $v$. Formally, if $j > v$, then $\venue_{j} > \venue_{v}$. Consider any two distinct researchers with type $\type_i < \type_{i'}$. Denote the proportion of their types as $\den_i$ and $\den_{i'}$ respectively. By $\Cref{lem:constantactionratio}$, we have:
    \begin{align}
      (\frac{\act_{i', j}}{\act_{i, j}}) / (\frac{\act_{i',v}}{\act_{i,v}}) =   
        \frac{(\cost_{i',j})^{\frac{1}{\alpha-1}}(\cost_{i,v})^{\frac{1}{\alpha-1}}}{(\cost_{i',j})^{\frac{1}{\alpha-1}} \cdot (\cost_{i,v})^{\frac{1}{\alpha-1}}}
    \end{align}
     By Assumption \ref{assumption:increasing_cost_ratio}, we have $\frac{\cost_{i,j}}{\cost_{i',j}} > \frac{\cost_{i,v}}{\cost_{i',v}}$. Plug this inequality back to \ref{constant-ratio}, because $0 < \alpha < 1$ we have:
\begin{align*}
    \act_{i', j} \cdot \act_{i,v}>
    \act_{i, j} \cdot \act_{i',v}
    \end{align*}
Because $\theta_{i'} > \theta_i$, this further implies that:
\begin{align} \label{term-inequality}
      \act_{i', j} \cdot \act_{i,v} \cdot \type_{i'} + \act_{i, j} \cdot \act_{i',v} \cdot \type_i > \act_{i, j} \cdot \act_{i',v} \cdot \type_{i'} +   \act_{i', j} \cdot \act_{i,v} \cdot \type_i
    \end{align}
Inequality \ref{term-inequality} will be the key for proving the theorem later. The impact of venue $u$ and $v$ are
\begin{align*}
\frac{(\vact_{:, j} \cdot \vmu) \cdot  \vtype}{(\vact_{:,j}\cdot \vmu) \cdot \uvec}
\text{\ and \ }
   \frac{(\vact_{:, v} \cdot \vmu) \cdot  \vtype}{(\vact_{:,v}\cdot \vmu) \cdot \uvec}
\end{align*}
, respectively. We want to show that venue $u$'s impact is larger, which is equivalent to showing that 
\begin{align}\label{wts}
    &((\vact_{:, j} \cdot \vmu) \cdot  \vtype) \cdot ((\vact_{:,v}\cdot \vmu) \cdot \uvec) > \nonumber \\
    &((\vact_{:, v} \cdot \vmu) \cdot  \vtype) \cdot ((\vact_{:,j}\cdot \vmu) \cdot \uvec)
\end{align}
We rewrite the left-hand side of \ref{wts} as:
\begin{align}
    &\sum_{p = q} \act_{i, j} \cdot \act_{i',v} \cdot \type_{i'} \cdot \den_i \den_{i'}\nonumber\\
    &+ \sum_{i,i': \type_i < \type_{i'}} (\act_{i, j} \cdot \act_{i',v} \cdot \type_i + \act_{i', j} \cdot \act_{i,v} \cdot \type_{i'})\cdot \den_i \den_{i'}\label{rewrite-LHS}
\end{align}
We rewrite the right-hand side of \ref{wts} as:
\begin{align}
    &\sum_{i = i'} \act_{i, j} \cdot \act_{i',v} \cdot \type_{i'} \cdot \den_i \den_{i'}\nonumber   \\    
    &+\sum_{i,i': \type_i < \type_{i'}} (\act_{i, j} \cdot \act_{i',v} \cdot \type_{i'} + \act_{i', j} \cdot \act_{i,v} \cdot \type_i) \cdot \den_i \den_{i'}\label{rewrite-RHS}
\end{align}
The first terms in \ref{rewrite-LHS} and  \ref{rewrite-RHS} cancel out with each other. Invoke \ref{term-inequality}, we know that $\forall p, q$ with $\type_i < \type_{i'}$ we have:
$$\act_{i', j} \cdot \act_{i,v} \cdot \type_{i'} + \act_{i, j} \cdot \act_{i',v} \cdot \type_i > \act_{i, j} \cdot \act_{i',v} \cdot \type_{i'} +   \act_{i', j} \cdot \act_{i,v} \cdot \type_i$$
Hence, inequality \ref{wts} is true. Therefore, the impact of venue $j$ is higher than the impact of venue $v$. Since $j$ and $v$ are arbitrarily chosen, it satisfies that the venue impact $\venue_j$ monotonically increases in venue index $j$. 
\end{proof}

\subsection{The effect of uniformly scaling-up the community size}\label{appdx:uniform-scale}
Given the fast-growing AI/ML community today, a natural question one may have is how the venue impact would change as the number of researchers grows in the field.  \Cref{prop:u-scale-eq} shows that if each researcher type scales linearly by the same factor, the venue equilibrium impacts remain the same. We discuss the effect of non-uniform scaling for binary-type setting in \Cref{thm:scale-eq}. 

\begin{observation}\label{prop:u-scale-eq}
    The equilibrium impact of venues remains the same when the number of researchers of different types is simultaneously scaled by the same factor $m > 0$. 
\end{observation}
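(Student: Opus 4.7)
The plan is to show that the pair $(\vact^*,\vvenue)$ that satisfies the equilibrium conditions under density $\vmu$ also satisfies them verbatim under density $m\vmu$, so the set of equilibrium impact vectors is identical in the two games. Both equilibrium conditions must be checked, but each turns out to be insensitive to a uniform rescaling of $\vmu$ for a different structural reason.

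First I would inspect the best-response condition. By \Cref{prop:best-response}, the closed form
\[
\act_{i,j} \;=\; \frac{(\cost_{i,j})^{\frac{1}{\alpha-1}}\,\venue_j^{\frac{\beta}{1-\alpha}}}{\vcost_i^{\frac{\alpha}{\alpha-1}}\cdot\vvenue^{\frac{\beta}{1-\alpha}}}
\]
depends only on the researcher's own cost row $\vcost_i$ and on the observed venue-impact vector $\vvenue$; the mass $\mu(\theta_i)$ of any type does not enter. This is consistent with the ``continuum of researchers'' interpretation in which no individual researcher affects venue impacts through her own action. Consequently, if $\vact^*$ is a profile of best responses to $\vvenue$ in the original game, the \emph{same} profile remains a profile of best responses to $\vvenue$ after replacing $\vmu$ by $m\vmu$.

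Next I would verify the consistency condition \eqref{eq:equilibrium-venue-consistency}. Substituting $m\vmu$ for $\vmu$,
\[
\frac{(\vact^*_{:,j}\odot (m\vmu))\cdot\vtype}{(\vact^*_{:,j}\odot (m\vmu))\cdot\uvec}
\;=\;\frac{m\,(\vact^*_{:,j}\odot\vmu)\cdot\vtype}{m\,(\vact^*_{:,j}\odot\vmu)\cdot\uvec}
\;=\;\frac{(\vact^*_{:,j}\odot\vmu)\cdot\vtype}{(\vact^*_{:,j}\odot\vmu)\cdot\uvec}\;=\;\venue_j,
\]
where the last equality uses that $(\vact^*,\vvenue)$ was an equilibrium in the original game. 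Hence the consistency equation holds for the scaled game with the same impact vector $\vvenue$. Combining the two steps, $(\vact^*,\vvenue)$ is an equilibrium of the scaled game, and symmetrically any equilibrium of the scaled game is an equilibrium of the original. Since the equilibrium-impact vector is therefore identical, the claim follows.

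There is no real obstacle in this argument; the only thing worth stating carefully is \emph{why} $\vmu$ does not appear in the best response (because the continuum assumption means a single researcher's publication choice does not move $\vvenue$), since this is exactly what makes the scaling invariance trivial. If one wanted a one-line intuition, it would be that the venue impact is a $\vmu$-weighted average of types, and averages are invariant under a uniform rescaling of the weights, while best responses do not see $\vmu$ at all.
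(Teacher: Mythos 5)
Your proposal is correct and takes essentially the same approach as the paper, which simply observes that a uniform rescaling of the type masses leaves the relative densities (and hence all weighted averages and best responses) unchanged. Your version just spells out explicitly the two equilibrium conditions that the paper's one-line argument implicitly relies on.
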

\begin{proof}
    After scaling all types by the same factor, the density of each type does not change. Hence, the equilibrium impact of venues remains the same.
\end{proof}

Scaling all entries in $\vcost$ uniformly is the same as scaling all researchers' budgets, and is also equivalent to scaling their community size proportionally, which does not affect equilibrium outcome by \Cref{prop:u-scale-eq}. In addition, the following \Cref{observation:importance-of-relative-cost} reveals that the relative cost, rather than the absolute cost, determines the equilibrium outcome of \model{}.

% According to \Cref{prop:u-scale-eq}, scaling the community size uniformly leads to identical equilibrium outcomes. \Cref{observation:importance-of-relative-cost} tells us that the relative cost, rather than the absolute cost, determines the equilibrium outcome of \model{}.
 
\begin{restatable}{observation}{importance-of-relative-cost}\label{observation:importance-of-relative-cost}
The equilibrium impact of venues remains the same when the cost matrix $\vcost$ is scaled by the same factor.
\end{restatable}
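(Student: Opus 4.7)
The plan is to show that if $(\vact^*, \vvenue^*)$ is an equilibrium under cost matrix $\vcost$, then $(m^{-1} \vact^*, \vvenue^*)$ is an equilibrium under $m\vcost$, for any scalar $m > 0$. Since the venue impact depends only on the ratio of action entries within a column (via $\odot \vden$ and the quotient in \Cref{eq:equilibrium-venue-consistency}), the venue impacts $\vvenue^*$ will be preserved.

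The main step is a direct substitution into the closed-form best response from \Cref{prop:best-response}. Under the scaled cost $m\vcost$, the best response becomes
\begin{align*}
    \act_{i,j}^{\text{new}} = \frac{(m \cost_{i,j})^{\frac{1}{\alpha-1}} \cdot \venue_{j}^{\frac{\beta}{1-\alpha}}}{(m\vcost_{i})^{\frac{\alpha}{\alpha-1}} \cdot \vvenue^{\frac{\beta}{1-\alpha}}} = m^{\frac{1}{\alpha-1} - \frac{\alpha}{\alpha-1}} \cdot \act_{i,j}^{\text{old}} = m^{-1} \cdot \act_{i,j}^{\text{old}},
\end{align*}
since the exponent of $m$ simplifies to $\frac{1-\alpha}{\alpha-1} = -1$. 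Thus the scaling enters multiplicatively as a single overall factor $m^{-1}$ that is uniform across all researcher types $i$ and all venues $j$. Intuitively, scaling all costs by $m$ is equivalent to shrinking every researcher's effective time budget by $1/m$, so every publication count simply scales down proportionally.

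Second, I would plug this back into the consistency condition \eqref{eq:equilibrium-venue-consistency}. Because $\vact_{:,j}^{\text{new}} = m^{-1} \vact_{:,j}^{\text{old}}$, the common factor $m^{-1}$ cancels between numerator and denominator:
\begin{align*}
    \frac{(\vact_{:,j}^{\text{new}} \odot \vden) \cdot \vtype}{(\vact_{:,j}^{\text{new}} \odot \vden) \cdot \uvec} = \frac{m^{-1} (\vact_{:,j}^{\text{old}} \odot \vden) \cdot \vtype}{m^{-1} (\vact_{:,j}^{\text{old}} \odot \vden) \cdot \uvec} = \venue_j^*,
\end{align*}
so the venue-impact consistency condition is satisfied with the same impact vector $\vvenue^*$. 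Combined with the fact that $m^{-1}\vact^*_i$ is by construction the best response to $\vvenue^*$ under cost $m\vcost$, this shows $(m^{-1}\vact^*, \vvenue^*)$ is an equilibrium of the scaled instance with the identical impact vector.

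There is no real obstacle here beyond bookkeeping of the exponents, so this proof is essentially a one-line computation once \Cref{prop:best-response} is invoked. The conceptual takeaway, which justifies including the statement, is that only the \emph{relative} magnitudes of entries in $\vcost$ matter for the equilibrium impacts, motivating the normalization used implicitly throughout the rest of the paper and in the simulations of \Cref{appdx:cost-mat-influence}.
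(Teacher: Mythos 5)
Your proof is correct. The exponent bookkeeping checks out: under $m\vcost$ the closed form from \Cref{prop:best-response} acquires a factor $m^{\frac{1}{\alpha-1}-\frac{\alpha}{\alpha-1}}=m^{-1}$ that is uniform over all types $i$ and venues $j$, and this common factor cancels in the impact ratio of \Cref{eq:equilibrium-venue-consistency}, so $(m^{-1}\vact^*,\vvenue^*)$ is an equilibrium of the scaled instance. This is, however, a genuinely different route from the paper's. The paper gives no computation for this observation; it argues verbally that scaling $\vcost$ by $m$ is the same as scaling every researcher's time budget by $1/m$, which it then identifies with a proportional rescaling of the community size, and invokes \Cref{prop:u-scale-eq} (equilibrium impacts are invariant under uniform population scaling because the densities $\vden$ are unchanged). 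Your direct substitution is more explicit and self-contained: it exhibits the exact equilibrium correspondence between the two instances, and applying the same map with factor $m^{-1}$ to the scaled instance gives the reverse inclusion, so the \emph{sets} of equilibrium impact vectors coincide — a point worth one extra sentence in your write-up, since the statement is about ``the equilibrium impact'' rather than a particular equilibrium. What the paper's route buys is a conceptual link to the population-scaling result; what yours buys is rigor, since the budget-to-population equivalence in the paper is left informal.
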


\subsection{The proof of \Cref{prop:eq-existence}}
\label{appdx:eq-existence}
\eqexistence*
\begin{proof}
We prove the proposition via the Brouwer fixed-point theorem. Define
\begin{align}
    f(\vvenue) = \bigg( \frac{(\vact_{:, 1}\odot\vden) \cdot \vtype}{(\vact_{:,1}\odot\vden)\cdot \uvec}, \cdots, \frac{(\vact_{:, k}\odot\vden) \cdot \vtype}{(\vact_{:,k}\odot\vden) \cdot \uvec} \bigg)
\end{align}

where $\vact_{:, j}$ is researchers' best-response strategy on venue $j$ after observing $\vvenue$.
Intuitively, the function $f$ updates the venues' impact factors after collecting researchers' publication strategies. Consider the function on the space of possible researcher impact level, $f$ on $\{\typespace|\min\{\type_1, \cdots, \type_n\} \leq \typespace_j \leq \max\{\type_1, \cdots, \type_n\}\}$. The space is closed and bounded in $R^n$, and thus it is compact. Moreover, the space is also convex. $f$ is continuous. Therefore, by Brouwer's fixed-point theorem, there exists a fixed point where $f(\vvenue) = \vvenue$. Because researchers' publication strategy reaffirms the impact level of all venues, the system has reached the equilibrium where all agents are best responding.
\end{proof}

\subsection{The proof of the properties of the characteristic function (\Cref{lem:f-stability})}\label{appdx:4_properties}
\fstability*
\begin{proof}
    We prove each property separately.
    \paragraph{Proof for property 1} Proving the first property is equivalent to showing the following two key points:
    \begin{enumerate}
        \item the zeros of $f(x)$ satisfies the equilibrium condition of the ratio of publications on the first venue
        \item the equilibrium condition of the ratio of publications on the first venue implies the equilibrium condition of \model{}
    \end{enumerate}

    We prove by looking at the equilibrium condition of the ratio of publications on the first venue, where the LHS and the RHS of the equation correspond to the ratio of publications before and after the impact update,respectively:
    \begin{equation}\label{eq:eqm-action-conf0}
        \frac{\act_{H, 1}}{\act_{L, 1}}=\left(\frac{\cost_{H, 1}}{\cost_{L, 1}}\right)^{\frac{1}{\alpha-1}}\left(\frac{\sum_l (\cost_{L, l})^{\frac{\alpha}{\alpha-1}} \cdot \confavg_{l}^{\frac{\beta}{1-\alpha}}}{\sum_l (\cost_{H, l})^{\frac{\alpha}{\alpha-1}} \cdot \confavg_{l}^{\frac{\beta}{1-\alpha}}}\right)
    \end{equation}
    Multiply both sides of \Cref{eq:eqm-action-conf0} by $\sum_l (\cost_{H, l})^{\frac{\alpha}{\alpha-1}} \cdot \confavg_{l}^{\frac{\beta}{1-\alpha}}$ we obtain:
    \begin{equation}\label{eq:eqm-action-conf0-next}
        \frac{\act_{H, 1}}{\act_{L, 1}} \cdot \sum_l (\cost_{H, l})^{\frac{\alpha}{\alpha-1}} \cdot \confavg_{l}^{\frac{\beta}{1-\alpha}} = \left(\frac{\cost_{H, 1}}{\cost_{L, 1}}\right)^{\frac{1}{\alpha-1}} \cdot \sum_l (\cost_{L, l})^{\frac{\alpha}{\alpha-1}} \cdot \confavg_{l}^{\frac{\beta}{1-\alpha}}
    \end{equation}
     Let $x=\frac{\act_{H, 1}}{\act_{L, 1}}$. Move the RHS of \Cref{eq:eqm-action-conf0-next} to the LHS and define 
    \begin{equation}\label{eq:eqm-action-conf0-final}
        f(x) =  x \cdot \sum_l (\cost_{H, l})^{\frac{\alpha}{\alpha-1}} \cdot \confavg_{l}^{\frac{\beta}{1-\alpha}} - \left(\frac{\cost_{H, 1}}{\cost_{L, 1}}\right)^{\frac{1}{\alpha-1}} \cdot \sum_l (\cost_{L, l})^{\frac{\alpha}{\alpha-1}} \cdot \confavg_{l}^{\frac{\beta}{1-\alpha}}
    \end{equation}
    
    With some algebra, we can show \Cref{eq:eqm-action-conf0-final} is exactly the characteristic function defined in \Cref{def:stability function}. Therefore, we have shown that zeros of the characteristic function satisfy the equilibrium condition of the ratio of publications on the first venue. 

    By equilibrium condition, we know that for any venue $j$, the following equation on action ratio is true:
    
    \begin{equation*}
        \frac{\act_{H, j}}{\act_{L, j}}=x\cdot(\frac{\cost_{H, 1}\cost_{L, j}}{\cost_{L, 1}\cost_{H, j}})^{\frac{1}{1-\alpha}}
    \end{equation*}

    It implies that if the action ratio on the non-competitive venue is fixed, the whole action profile is fixed. Hence, for any binary-type \model{}, an action profile is in equilibrium if and only if $f(x) = 0$.

    \paragraph{Proof for properties 2 and 3} Now we consider researchers best responding to venue impacts. Suppose the current action profile has $f(\frac{\act_{H, 1}}{\act_{L, 1}}) < 0$. Fixing $\vvenue$ as a function of the current action profile, After best responding, the new action profile has $x' = \frac{\act'_{H, 1}}{\act'_{L, 1}}$ satisfying best-responding condition \Cref{eq:eqm-action-conf0}. Thus, it follows that $x' > x$. The impact on all venues will increase. The case for the current action profile $f(\frac{\act_{H, 1}}{\act_{L, 1}}) > 0$ can be proved in the same way.

    \paragraph{Proof for property 4 (convexity of the characteristic function)}
    Notice that $f(x) = \cost_{H, 1}^{\frac{1}{\alpha-1}}\sum_l h_l(x)$, where 
\begin{equation}\label{eq:uniqueness-proof-h}
    h_l(x)=\left[x\cdot\cost_{H, 1}\left(\frac{\cost_{H,l}}{\cost_{H, 1}}\right)^{\frac{\alpha}{\alpha-1}}-\cost_{L, 1}\cdot\left(\frac{\cost_{L,l}}{\cost_{L,1}}\right)^{\frac{\alpha}{\alpha-1}}\right]\venue_l^{\frac{\beta}{1-\alpha}}
\end{equation}
We calculate $h''_l(x)$:
\begin{align*}
        &h''_l(x)\\
=&\Tilde{\den}\cdot\beta(\frac{\cost_{L,l}}{\cost_{L,1}}/\frac{\cost_{H,l}}{\cost_{H,1}})^{\frac{1}{1-\alpha}}\cdot(\type-1) \\
&\cdot\left(\frac{1+\Tilde{\den}\cdot(\frac{\cost_{L,l}}{\cost_{L,1}}/\frac{\cost_{H,l}}{\cost_{H,1}})^{\frac{1}{1-\alpha}}\type x}{1+\Tilde{\den}\cdot(\frac{\cost_{L,l}}{\cost_{L,1}}/\frac{\cost_{H,l}}{\cost_{H,1}})^{\frac{1}{1-\alpha}}x}\right)^{\frac{\beta}{1-\alpha}}\\
       & \cdot\bigg[\cost_{L, 1}(\frac{\cost_{L,l}}{\cost_{L,1}})^{\frac{\alpha}{\alpha-1}}(\frac{\cost_{L,l}}{\cost_{L,1}}/\frac{\cost_{H,l}}{\cost_{H,1}})^{\frac{1}{1-\alpha}} \\
       & \cdot \Tilde{\den}((1+\type)(1-\alpha) +2\tilde{\den}(\frac{\cost_{L,l}}{\cost_{L,1}}/\frac{\cost_{H,l}}{\cost_{H,1}})^{\frac{1}{1-\alpha}}\type(1-\alpha)x)\\
   &+ \cost_{H, 1}(\frac{\cost_{H,l}}{\cost_{H,1}})^{\frac{\alpha}{\alpha-1}}\\
   & \cdot ((2-2\alpha) +(\frac{\cost_{L,l}}{\cost_{L,1}}/\frac{\cost_{H,l}}{\cost_{H,1}})^{\frac{1}{1-\alpha}}\tilde{\den}(1-\alpha)(1+\type)x)\\
   &+(-\cost_{L, 1}(\frac{\cost_{L,l}}{\cost_{L,1}})^{\frac{\alpha}{\alpha-1}}  +\cost_{H, 1}(\frac{\cost_{H,l}}{\cost_{H,1}})^{\frac{\alpha}{\alpha-1}}) \\
   & \cdot \tilde{\den}(\frac{\cost_{L,l}}{\cost_{L,1}}/\frac{\cost_{H,l}}{\cost_{H,1}})^{\frac{1}{1-\alpha}}\beta(\type-1)\bigg]\\%2(1-\alpha)+\Tilde{\den}\cdot(\frac{\cost_{L,l}}{\cost_{L,1}}/\frac{\cost_{H,l}}{\cost_{H,1}})^{\frac{1}{1-\alpha}}x\cdot((1+\type)(1-\alpha)+\beta\cdot(\type-1))\right]/\\
        &\bigg/\bigg[(-1+\alpha)^2\left(1+\Tilde{\den}\cdot(\frac{\cost_{L,l}}{\cost_{L,1}}/\frac{\cost_{H,l}}{\cost_{H,1}})^{\frac{1}{1-\alpha}}x\right)^2\\
        &\cdot\left(1+\Tilde{\den}\cdot(\frac{\cost_{L,l}}{\cost_{L,1}}/\frac{\cost_{H,l}}{\cost_{H,1}})^{\frac{1}{1-\alpha}}\type x\right)^2\bigg]
\end{align*}
Recall that $0 < \alpha < 1$ and $\theta > 1$. It's easy to check that all terms in $h''$ except for $(-\cost_{L, 1}(\frac{\cost_{L,l}}{\cost_{L,1}})^{\frac{\alpha}{\alpha-1}}+\cost_{H, 1}(\frac{\cost_{H,l}}{\cost_{H,1}})^{\frac{\alpha}{\alpha-1}})$ are positive. Hence we only need to show this remaining one is positive as well. By \Cref{assumption:increasing_cost_ratio}, we have $\frac{\cost_{L,l}}{\cost_{L,1}} > \frac{\cost_{H,l}}{\cost_{H,1}}$, so $(\frac{\cost_{L,l}}{\cost_{L,1}})^{\frac{\alpha}{\alpha-1}} < (\frac{\cost_{H,l}}{\cost_{H,1}})^{\frac{\alpha}{\alpha-1}}$. When $\cost_{H, 1}\geq \cost_{L, 1}$, $(-\cost_{L, 1}(\frac{\cost_{L,l}}{\cost_{L,1}})^{\frac{\alpha}{\alpha-1}}+\cost_{H, 1}(\frac{\cost_{H,l}}{\cost_{H,1}})^{\frac{\alpha}{\alpha-1}})$ is also always positive. Hence, $h''_l$ is the summation and multiplication of positive terms, so $h''_l > 0$, implying $f$ is convex.
\end{proof}

\subsection{The proof of \Cref{thm:scale-eq}}\label{appdx:scale-eq}
We show the effect of scaling the density of high-type researchers and low-type researchers non-uniformly. 
% \scaleeq*
\begin{restatable}{theorem}{scaleeq}\label{thm:scale-eq}
    Under a binary-type  \model{}, with a non-competitive venue as in \Cref{assumption:non competitive venue}. 
    \begin{itemize}
   \item If the density of high-type researchers is scaled up, then the equilibrium impact of all venues will increase after scaling.  
        \item If the density of low-type researchers is scaled up, then the equilibrium impact of all venues will decrease after scaling.
    \end{itemize}
\end{restatable}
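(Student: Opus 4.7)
The two bullets are symmetric: scaling $\den_H$ up by $m>1$ increases $\tilde{\den}:=\den_H/\den_L$, while scaling $\den_L$ up by $m>1$ decreases $\tilde{\den}$. So both statements reduce to a single monotonicity claim: the equilibrium impact of every venue is strictly increasing in $\tilde{\den}$. The key observation is that the equilibrium venue impact $\venue_j(x,\tilde{\den}) = \frac{1 + \ratiocost_j x \type \tilde{\den}}{1 + \ratiocost_j x \tilde{\den}}$ depends on $(x,\tilde{\den})$ only through the composite variable $z:=x\tilde{\den}$, and since $\type>1$, $\venue_j(z)$ is strictly increasing in $z$. It therefore suffices to show that the equilibrium quantity $z^*(\tilde{\den}):=x^*(\tilde{\den})\cdot \tilde{\den}$ is strictly increasing in $\tilde{\den}$.

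To capture the equilibrium in $z$-coordinates, I would re-express the characteristic function. Define
\begin{equation*}
    g(z,\tilde{\den}) \;:=\; \tilde{\den}\cdot f(z/\tilde{\den},\,\tilde{\den}) \;=\; \sum_l w_l(z)\,(z - \tilde{\den}\,C_l),
\end{equation*}
where $w_l(z):=(\cost_{H,l})^{\alpha/(\alpha-1)}\venue_l(z)^{\beta/(1-\alpha)}>0$ and $C_l:=\frac{\cost_{L,1}}{\cost_{H,1}}\,\ratiocost_l^{-\alpha}>0$. Since $\tilde{\den}>0$, $g$ and $f$ share the same sign, so by \Cref{thm:2p-uniqueness} the function $g(\cdot,\tilde{\den})$ has the unique zero $z^*(\tilde{\den})$. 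Crucially, for fixed $z$ the function $g$ is \emph{linear} in $\tilde{\den}$, with slope $\partial g/\partial\tilde{\den} = -\sum_l w_l(z)\,C_l < 0$.

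The core step is a monotone comparison. Fix $\tilde{\den}_2>\tilde{\den}_1$ with corresponding equilibria $z^*_1,z^*_2$. Using $g(z^*_1,\tilde{\den}_1)=0$ together with the linearity in $\tilde{\den}$,
\begin{equation*}
    g(z^*_1,\tilde{\den}_2) \;=\; -(\tilde{\den}_2-\tilde{\den}_1)\sum_l w_l(z^*_1)\,C_l \;<\;0.
\end{equation*}
From convexity of $f(\cdot,\tilde{\den}_2)$ in $x$ (Property 4 of \Cref{lem:f-stability}), together with the boundary signs $f(0^+,\tilde{\den}_2)=-\sum_l(\cost_{H,l})^{\alpha/(\alpha-1)} C_l<0$ and $\lim_{x\to\infty}f(x,\tilde{\den}_2)=+\infty$, the unique zero $x^*_2$ must be a strict sign-change point: $f<0$ on $(0,x^*_2)$ and $f>0$ on $(x^*_2,\infty)$, and equivalently $g(\cdot,\tilde{\den}_2)<0$ on $(0,z^*_2)$ and $>0$ on $(z^*_2,\infty)$. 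Combined with $g(z^*_1,\tilde{\den}_2)<0$ this forces $z^*_1<z^*_2$, giving the desired monotonicity of $z^*$ and hence of each $\venue_j$.

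The main obstacle is rigorously establishing the ``strict negative-to-positive crossing'' of the convex $f$ at its unique zero; a priori a convex function with a unique zero could be tangent to zero there (i.e., $f_x(x^*)=0$), in which case the comparison above would fail. This is ruled out by combining convexity with uniqueness: if $f_x(x^*)=0$ then $f\ge 0$ in a neighborhood of $x^*$, and because $f(0^+)<0$ the intermediate-value theorem would produce a second zero in $(0,x^*)$, contradicting \Cref{thm:2p-uniqueness}. With this strict crossing in hand, the comparison argument above, applied once with $\tilde{\den}$ increasing (scaling up $\den_H$) and once with $\tilde{\den}$ decreasing (scaling up $\den_L$), yields both bullets.
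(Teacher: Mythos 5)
Your proof is correct and follows essentially the same route as the paper's: both evaluate the characteristic function at the comparison point where the product $x\tilde{\den}$ (and hence every venue impact) is held fixed while $\tilde{\den}$ changes, show it becomes negative there, and conclude from the sign structure around the unique zero that the new equilibrium lies above it. Your observation that $\tilde{\den}\,f(z/\tilde{\den},\tilde{\den})$ is linear in $\tilde{\den}$ replaces the paper's term-by-term scaling comparison of the positive and negative summands with a one-line computation, and your explicit exclusion of a tangent zero makes the final inference airtight, but the underlying argument is the same.
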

\begin{proof}
For the proof of \Cref{thm:scale-eq}, we write characteristic function as a function of two variables $x=\frac{\act_{H, 1}}{\act_{L, 1}}$ and $\tilde{\den} = \frac{\den_H}{\den_L}$, with the same formula as \Cref{def:stability function}. Suppose $x_0, \tilde{\den}_0$ and venue impacts $\vvenue(x_0, \tilde{\den}_0)$ satisfy $f(x_0, \tilde{\den}_0) = 0$, which are equilibrium outcomes. We begin by considering the case when the density of high-type researchers is scaled up. For any $\tilde{\den} > \tilde{\den}_0$,  $x = x_0\cdot\frac{\tilde{\den}_0 }{\tilde{\den}}$ keeps the same equilibrium impacts: $\vvenue(x, \tilde{\den}) = \vvenue(x_0, \tilde{\den}_0)$. Note here $x < x_0$. It suffices to show
\begin{equation*}
    f(x, \tilde{\den}) <0, 
\end{equation*}
i.e.\ $x$ is lower than the equilibrium action ratio. 

We notice that 
\begin{equation*}
    f(x, \tilde{\den})=\sum_l (\cost_{H, l})^{\frac{\alpha}{\alpha-1}} \cdot \confavg_{l}^{\frac{\beta}{1-\alpha}}\left(x-\ratiocost_l^{-\alpha}\right),
\end{equation*}
where $\ratiocost_j = (\frac{\cost_{L, j}}{\cost_{H, j}})^{\frac{1}{1-\alpha}}$. We want to show that 
\begin{equation*}
    f(x, \tilde{\den}) < f(x_0, \tilde{\den}_0) = 0
\end{equation*}

Fix any $l$, we have that 
\begin{align*}
    v_l(x_0, \den_0) &= \frac{1+b_j\cdot x_0 \cdot \tilde{\den}_0 \cdot \theta}{1+b_j\cdot x_0 \cdot \tilde{\den}_0 } \\
    &= \frac{1+b_j\cdot x \cdot \tilde{\den} \cdot \theta}{1+b_j\cdot x \cdot \tilde{\den}}\\
    &= v_l(x, \tilde{\den})
\end{align*}

Hence, under $x, \tilde{\den}$, $\venue_l$ holds fixed for all $l$. Therefore, to compare $f(x_0, \tilde{\den}_0) = 0$ and $f(x, \tilde{\den})$, we only need to inspect the scaling changes to $x-\ratiocost_l^{-\alpha}$. First, we observe that
\begin{equation*}
    \frac{x - \ratiocost_l^{-\alpha}}{x_0 - \ratiocost_l^{-\alpha}} = \frac{x}{x_0} + \frac{\ratiocost_l^{-\alpha}(\frac{x}{x_0} - 1)}{x_0 - \ratiocost_l^{-\alpha}}.
\end{equation*}
Whenever $x_0\geq - \ratiocost_l^{-\alpha}$, and recall that $x < x_0$, we have
\begin{equation*}
     \frac{x - \ratiocost_l^{-\alpha}}{x_0 - \ratiocost_l^{-\alpha}} = \frac{x}{x_0} + \frac{\ratiocost_l^{-\alpha}(\frac{x}{x_0} - 1)}{x_0 - \ratiocost_l^{-\alpha}} < \frac{x}{x_0}
\end{equation*}
Hence, the positive terms are scaled by a factor strictly smaller than $\frac{x}{x_0}$. Meanwhile, whenever $x_0\leq - \ratiocost_l^{-\alpha}$, 
\begin{equation*}
     \frac{x - \ratiocost_l^{-\alpha}}{x_0 - \ratiocost_l^{-\alpha}} = \frac{x}{x_0} + \frac{\ratiocost_l^{-\alpha}(\frac{x}{x_0} - 1)}{x_0 - \ratiocost_l^{-\alpha}} > \frac{x}{x_0}
\end{equation*}
Hence, the negative terms are scaled by a factor strictly larger than $\frac{x}{x_0}$. Thus, the positive terms in $f$ are scaled by a factor strictly lower than the negative parts. We know $f(x, \tilde{\den})<0$. The case for $\tilde{\den} < \tilde{\den}_0$ can be derived in the same way.
\end{proof}

\subsection{The proof of \Cref{observation:low-more-total-pub}}
\label{appdx:low-more-total-pub}

\lowmoretotalpub*
\begin{proof}
    We verify this observation by constructing such an example with binary researcher types, one has a higher impact than the other, and two venues. When $\type_H \gg \type_L$, and the cost is very high at the top conference, the high-type researcher may invest time in publishing on the top venue while the low-type researcher will focus on the easier one. Because of the high cost of publishing on the better venue, the high-type researcher may publish less in the end. 

    To construct a more natural example, we re-scale the time budget for all researchers to $40$, so we can interpret the total budget as the weekly time that a lab Principal Investigator can devote to research. Therefore, each cell in the cost matrix can be interpreted as the mean weekly hours required for publication efforts. Following the calculation which derives \Cref{prop:best-response}, the best-response action for each researcher will also be scaled by $40$. Let $\type_H = 20$, $\type_L = 1$, $\den_H = \frac{1}{3}$, $\den_L = \frac{2}{3}$, $\venue_1 = 1$, and $\venue_2 = 20$. Consider the following cost matrix:
    
\begin{table}[htbp]
\centering
\caption{Cost Matrix}
\label{tab:eq-low-more-pub}
\begin{tabular}{|c|c|c|}
\hline
 & Venue 1 & Venue 2 \\
\hline
$\type_H$ & $1$  & $15$  \\
\hline
$\type_L$ & $1$ & $40$  \\
\hline
\end{tabular}
\end{table}

By \Cref{thm:2p-uniqueness}, this \model{} will converge to the following unique equilibrium:

\begin{table}[htbp]
\centering
\caption{Actions in the equilibrium}
\label{tab:sampletable}
\begin{tabular}{|c|c|c|}
\hline
 & Venue 1 & Venue 2 \\
\hline
$\type_H$ & $12.03$ & $1.87$ \\
\hline
$\type_L$ & $14.19$ & $0.65$ \\
\hline
\end{tabular}
\end{table}
Calculation shows that $\vact_{L} \cdot \uvec > \vact_{H} \cdot \uvec$.
\end{proof}

\subsection{The proof of \Cref{{thm:top-venue-monotone}}}\label{appdx:top-venue-monotone}
\topvenuemonotone*
\begin{proof}
The proof directly compares different researcher types' actions on the top venue.

By \ref{prop:Quality is Monotone in difficulty}, we know that the venue with the highest impact is the venue with the greatest venue index $k$ when all researchers best respond. High type researcher will publish $\act_{i, k} = \frac{c_{i,k}^{\frac{1}{\alpha-1}} \venue_k^{\frac{\beta}{1-\alpha}}}{\vcost_i^{\frac{1}{\alpha-1}} \cdot \vvenue^{\frac{\beta}{1-\alpha}}}$, and low type researcher will publish $\act_{i', k} = \frac{c_{i',k}^{\frac{1}{\alpha-1}} \venue_k^{\frac{\beta}{1-\alpha}}}{\vcost_{i'}^{\frac{1}{\alpha-1}} \cdot \vvenue^{\frac{\beta}{1-\alpha}}}$. Comparing $\act_{i,k}$ and $\act_{i',k}$ is equivalent as comparing the following two terms:
\begin{align} \label{H-num-L-denom}
    &c_{i,k}^{\frac{1}{\alpha-1}} \cdot \venue_k^{\frac{\beta}{1-\alpha}} \cdot \vcost_{i'}^{\frac{1}{\alpha-1}} \cdot \vvenue^{\frac{\beta}{1-\alpha}} 
\end{align}
and 
\begin{align} \label{L-num-H-denom}
    &c_{i',k}^{\frac{1}{\alpha-1}} \cdot \venue_k^{\frac{\beta}{1-\alpha}} \cdot \vcost_i^{\frac{1}{\alpha-1}} \cdot \vvenue^{\frac{\beta}{1-\alpha}} 
\end{align}
Rearrange \eqref{H-num-L-denom} we obtain:
\begin{align} 
 \venue_k^{\frac{\beta}{1-\alpha}}  \cdot \sum_l  (\cost_{i,k})^{\frac{1}{\alpha-1}} \cdot (\cost_{i', l})^{\frac{\alpha}{\alpha-1}} \cdot \venue_l^{\frac{\beta}{1-\alpha}} \nonumber\\
= \venue_k^{\frac{\beta}{1-\alpha}} \cdot \sum_l (\cost_{i.k} \cdot \cost_{i',l})^{\frac{1}{\alpha-1}} \cdot \cost_{i',l}  \cdot \venue_l^{\frac{\beta}{1-\alpha}}\label{r-H-num-L-denom}
\end{align}
Rearrange \eqref{L-num-H-denom} we obtain:
\begin{align} 
 \venue_k^{\frac{\beta}{1-\alpha}} \cdot \sum_l  (\cost_{i',k})^{\frac{1}{\alpha-1}} \cdot (\cost_{i, l})^{\frac{\alpha}{\alpha-1}} \cdot \venue_l^{\frac{\beta}{1-\alpha}} \nonumber\\
 = 
\venue_k^{\frac{\beta}{1-\alpha}} \cdot \sum_l (\cost_{i'.k} \cdot c_{i,l})^{\frac{1}{\alpha-1}} \cdot \cost_{i,l} \cdot \venue_l^{\frac{\beta}{1-\alpha}}\label{r-L-num-H-denom}
\end{align}
By \Cref{assumption:increasing_cost_ratio}, we have $\cost_{i,k} \cdot c_{i',l} < c_{i',k} \cdot c_{i,l}$, $0< \alpha < 1$, hence $(c_{i,k} \cdot c_{i',l})^{\frac{1}{\alpha-1}} > (c_{i',k} \cdot c_{i,l})^{\frac{1}{\alpha-1}}$. Meanwhile, $c_{i',l} > c_{i,l}$ because $\theta^j > \theta^i$. Hence, we have $\Cref{r-H-num-L-denom} > \Cref{r-L-num-H-denom}$. This implies that 
$\Cref{H-num-L-denom} > \Cref{L-num-H-denom}$.
Rearrange the terms and we get:
$$\act_{i,k} = c_{i,k}^{\frac{1}{\alpha-1}} \cdot \venue_k^{\frac{\beta}{1-\alpha}} \cdot \vcost_{i'}^{\frac{1}{\alpha-1}} \cdot \vvenue^{\frac{\beta}{1-\alpha}}  > c_{i',k}^{\frac{1}{\alpha-1}} \cdot \venue_k^{\frac{\beta}{1-\alpha}} \cdot \vcost_i^{\frac{1}{\alpha-1}} \cdot \vvenue^{\frac{\beta}{1-\alpha}}  = \act_{i',k}$$
\end{proof}

\subsection{Empirical justification of spotlight signaling effect based on CVPR data}\label{appdx:empirical_justification}

We use empirical citation data on CVPR to justify our assumption that $\spadv(\spfrac_j)>1$. We fit the  spotlight signaling effect $\spadv(\spfrac_j)$ using empirical data. We use the average citation number to approximately assess the average impact of regular publications and spotlight publications. %In order to assess the relative impact of publications featured in spotlight sessions against those in regular sessions, we define the citation ratio index as:
% \begin{align}\label{citation-ratio-index}
%     %\mathsf{cr_{j,t}} 
%    \spadv(\spfrac_j)= \frac{\mathsf{average \ number \ of \ citation \ on \ spotlight \ session \ of \ venue \ j \ in \ year \ t}}{\mathsf{average \ number \ of \ citation \ on \ regular \ session \ of \ venue \ j \ in \ year \ t}}.
% \end{align}

\begin{align}\label{citation-ratio-index}
   \spadv(\spfrac_j)= \frac{\text{average citation on spotlight of venue j in year t}}{\text{average citation on regular of venue j in year t}}.
\end{align}

%The metric $\mathsf{cr_{j,t}}$ serves as an approximate measure for $\spadv(\spfrac_j)$ during year $\mathsf{t}$. 
Figure \ref{fig:cr-and-adver} illustrates the citation metrics against $\spfrac_j$ for the venue on Computer Vision and Pattern Recognition (CVPR) spanning 2014 to 2019.  It is observed that larger, multidisciplinary venues like NeurIPS, which often give preference to methodological and theoretical contributions for spotlight sessions, tend to receive fewer citations compared to applied research, thereby affecting the citation ratio index negatively. Therefore, we posit that venues like CVPR, focused on specialized fields, offer a citation ratio index that more accurately reflects the signaling effect of spotlight sessions. CVPR seems to be actively experimenting with their spotlight session during the period of investigation. They add spotlight sessions along with oral sessions in 2016, but later delete the spotlight session in 2019. The spotlight ratio  also varies a lot throughout the period,  with the spotlight ratio $1/\spfrac_j$ takes value from $11.9\%$ (2015) to $32.0\%$ (2016), with the average spotlight ratio being $23.8\%$. We consider $\log(\spfrac_j)$ to reduce the large variance between different years.

\begin{figure}[htbp]
\centering
\includegraphics[width=\columnwidth]{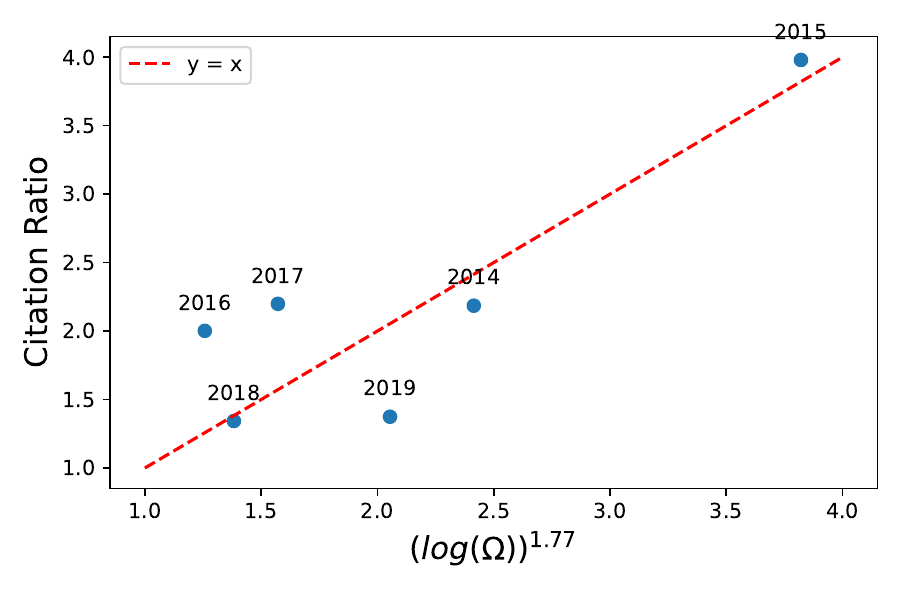}
\caption{CVPR Spotlight and Regular Session Citation Data Comparison}
\label{fig:cr-and-adver}
\end{figure}

Since we have limited data points, we do not draw conclusions on the form of $\spadv(\spfrac)$. We list this as a future work to do. One possible function form is $\spadv(\spfrac) = (\log(\spfrac))^p$, where $p > 0$. We find the best fit for the CVPR data is $p = 1.77$. %This form will later be applied in simulations in Section \ref{sec:experiments}.

%\Cref{fig:cr-and-adver} suggests $\spadv(\spfrac)$ may take the form $\spadv(\spfrac) = 0.5\spfrac(\spfrac\geq 2)$. 
%\haichuan{Double Check}
\paragraph{Data Source} The list of CVPR accepted papers was sourced from the Digital Bibliography \& Library Project (DBLP), an extensive database in computer science. Citation data for each CVPR accepted paper was retrieved from Google Scholar using the service provided by SERP API, and the spotlight designation was confirmed through the CVPR program website. CVPR has both spotlight sessions and oral sessions, and we don't distinguish between them and mark both sessions as spotlight. We only include citation data whose citation number falls between $\mathsf{5th}$ and $\mathsf{95th}$ percentile each year. We then calculate the citation ratio index by \eqref{citation-ratio-index}. During the six years, the spotlight ratio $1/\spfrac_j$ takes value from $11.9\%$ (2015) to $32.0\%$ (2016), with the average spotlight ratio being $23.8\%$. For each of the six years studied, the citation ratio index consistently exceeds 1, supporting our model's assumption that $\spadv(\spfrac_j) > 1$.
Furthermore, our data suggest a general trend where $\spadv(\spfrac_j)$ increases as $\spfrac_j$ does. 
%\haichuan{Any citation needed here?}

\subsection{The justification for \Cref{assumption:harder-spotlight}}\label{appdx:justification-harder-spotlight}

The venue organizer must choose the cost for spotlight session carefully. The following \Cref{thm:harder-spotlight} shows if publishing a spotlight paper is relatively the same hard as publishing a regular paper, the actual average impact of spotlight papers as a new venue will be the same as regular papers. This violates Constraint \ref{constraint:sp-better} that spotlight papers should gain more actual impact than regular papers on average.  \Cref{thm:harder-spotlight} motivates our \Cref{assumption:harder-spotlight} on feasible spotlight publication costs.

\begin{restatable}{proposition}{thmharderspotlight}
\label{thm:harder-spotlight}
    If for any two types $\type_i<\type_{i'}$, the cost ratio on spotlight paper remains the same as the cost ratio on regular session, i.e., $\frac{\cost_{i, j}^S}{\cost_{i', j}^S} = \frac{\cost_{i, j}}{\cost_{i', j}}$, then the spotlight papers and the regular papers have the same actual average impact.   
\end{restatable}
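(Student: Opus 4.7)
The plan is to reduce the spotlight analysis to a straightforward application of the best-response formula in Lemma~\ref{prop:best-response}. For any venue $j$ that uses spotlight labeling, I would treat the spotlight session as an auxiliary ``venue" inside the researcher's utility maximization: it has its own cost $\cost_{i,j}^S$ to each researcher and offers perceived impact $\spadv(\spfrac_j)\,\venue_j$ to a publication landed there. Since the researcher's utility $\vact_i^\alpha\cdot\vvenue^\beta$ is additively separable across venue options, Program~\eqref{prog:best-response} and hence the closed-form best response of Lemma~\ref{prop:best-response} extend verbatim, yielding
\begin{equation*}
\act_{i,j}\;\propto\;(\cost_{i,j})^{\frac{1}{\alpha-1}}\cdot\venue_j^{\frac{\beta}{1-\alpha}},\qquad
\act_{i,j}^S\;\propto\;(\cost_{i,j}^S)^{\frac{1}{\alpha-1}}\cdot\bigl(\spadv(\spfrac_j)\,\venue_j\bigr)^{\frac{\beta}{1-\alpha}},
\end{equation*}
where the proportionality constant depends only on $i$ (through $\vcost_i$ and $\vvenue$ with the spotlight ``venues" appended) and is therefore the same for $\act_{i,j}$ and $\act_{i,j}^S$.

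Next I would take the ratio of regular to spotlight publications for a single researcher on venue $j$, which collapses to
\begin{equation*}
\frac{\act_{i,j}^S}{\act_{i,j}}\;=\;\Bigl(\tfrac{\cost_{i,j}^S}{\cost_{i,j}}\Bigr)^{\frac{1}{\alpha-1}}\cdot\spadv(\spfrac_j)^{\frac{\beta}{1-\alpha}}.
\end{equation*}
Under the hypothesis that $\cost_{i,j}^S/\cost_{i',j}^S=\cost_{i,j}/\cost_{i',j}$ for all types $i,i'$, the ratio $\cost_{i,j}^S/\cost_{i,j}$ is a constant $\rho_j$ independent of $i$. Consequently there exists a scalar $\tau_j$ (depending only on $j$) such that $\act_{i,j}^S=\tau_j\cdot\act_{i,j}$ for every type $i$.

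Finally I would plug this proportionality into the actual-impact formula:
\begin{equation*}
\venue_j^S\;=\;\frac{(\vact_{:,j}^S\odot\vden)\cdot\vtype}{(\vact_{:,j}^S\odot\vden)\cdot\uvec}\;=\;\frac{\tau_j\,(\vact_{:,j}\odot\vden)\cdot\vtype}{\tau_j\,(\vact_{:,j}\odot\vden)\cdot\uvec}\;=\;\venue_j,
\end{equation*}
since the common factor $\tau_j$ cancels in numerator and denominator. Hence spotlight and regular papers on venue $j$ draw the same type distribution up to a uniform rescaling, and therefore share the same actual average impact. The main obstacle I anticipate is ensuring the extended best-response step is clean: one has to be careful that the normalization constant, which now sums costs and impacts over both regular and spotlight ``venues," really does cancel when one forms the per-type ratio $\act_{i,j}^S/\act_{i,j}$. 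Once that is verified, the rest is an immediate algebraic cancellation.
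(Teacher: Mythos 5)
Your proposal is correct and follows essentially the same route as the paper: the paper likewise writes $\cost_{i,j}^S=\cost_{i,j}\cdot\spcostratio_{i,j}$, uses the closed-form best response with the spotlight session treated as an extra option (so that $\act_{i,j}^S=\act_{i,j}\cdot\spcostratio_{j}^{\frac{1}{\alpha-1}}\cdot\spadv(\spfrac_j)^{\frac{\beta}{1-\alpha}}$ with a type-independent factor under the hypothesis), and cancels that common factor in the weighted-average impact. The normalization issue you flag is handled exactly as you anticipate: the Lagrange multiplier (equivalently, the common denominator summing over both regular and spotlight options) depends only on the researcher type $i$, so it drops out of the per-type ratio $\act_{i,j}^S/\act_{i,j}$.
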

\begin{proof}
Define $\cost_{i, j}^S=\cost_{i, j}\cdot \spcostratio_{i, j}$, where $\spcostratio_{i, j}$ is the relative increase in cost for publishing a spotlight paper. If $\spcostratio_j$ are the same for all researchers, then we can factor out the cost ratio term and cancel them when calculating the average impact. Formally, $\forall j$ we have  
\begin{align*}
    \venue_{ j}^S
    &= \frac{(\vact_{:, j}^S \odot \vden)\cdot\vtype}{(\vact_{:, j}^S \odot \vden) \cdot\uvec} \\
    &=  \frac{(\vact_{:, j} \odot \vden) \cdot \spcostratio_{j}^{\frac{1}{\alpha-1}} \cdot \spadv(\spfrac_j)_j^{\frac{\beta}{1-\alpha}} \cdot \vtype}{(\vact_{:, j} \odot \vden) \cdot \spcostratio_{j}^{\frac{1}{\alpha-1}}\cdot \spadv(\spfrac_j)_j^{\frac{\beta}{1-\alpha}} \cdot \uvec} \\
    &= \frac{(\vact_{:, j} \odot \vden) \cdot  \vtype}{(\vact_{:, j} \odot \vden) \cdot\uvec} \\
    &= \venue_{j}
\end{align*}
Hence, the average impact on spotlight papers will be the same as regular papers without scaling. 
\end{proof}

Given \Cref{thm:harder-spotlight}, we introduce \Cref{assumption:harder-spotlight} as a natural design choice of venue organizers to ensure spotlight papers have a higher actual research impact than regular venue papers. \Cref{assumption:harder-spotlight} states that publishing a spotlight paper should be relatively harder for lower types than a regular paper. 

\subsection{Proof of \Cref{thm:2p-uniqueness}}
\label{apdx: proof 2p uniqueness}
\begin{proof}
Recall that we only need to prove the characteristic function has a unique zero point to show the uniqueness of equilibrium. We show $f(x)=0$ has a unique solution on $[0, \infty)$ by inspecting the geometry of the characteristic function $f(x)$. The following observations hold:
\begin{itemize}
    \item $f(0)<0, f(1)>0$, $\lim_{x\to \infty} f(x)=\infty$.
    \item $f(x)$ is convex on $x\geq 0$ by property 4 of  \Cref{lem:f-stability}.
\end{itemize}
Thus, $f(x) = 0$ admits a unique solution in $[0, \infty)$.
\end{proof}

\subsection{The proof of one-shot spotlight cost design}\label{appdx:one-shot-spotlight-cost}

\begin{restatable}
  {proposition}{leminvariantspcost}\label{lem:invariant-sp-cost}
Suppose for researcher of type $i$, the cost for publishing a regular paper on venue $j$ is fixed at $\cost_{i, j}$. Let the cost for publishing a spotlight paper be $\cost_{i, j}^S$. For venue $j$, fixing the cost $\cost_j$ of regular paper publications and the fraction $\frac{1}{\spfrac_j}$ of spotlight papers, the cost for publishing a spotlight paper can be set invariant of the venue average research impact and publication numbers. 
\end{restatable}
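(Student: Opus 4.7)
The plan is to extend Lemma~\ref{prop:best-response} to the setting in which venue~$j$ offers both a regular slot (with impact $\venue_j$ and cost $\cost_{i,j}$) and a spotlight slot (with perceived impact $\spadv(\spfrac_j)\,\venue_j$ and cost $\cost_{i,j}^S$), then impose the design constraint $\act_{i,j}^S = \act_{i,j}/\spfrac_j$ and show that the resulting expression for $\cost_{i,j}^S/\cost_{i,j}$ is a constant depending only on $\spfrac_j$ and $\spadv(\spfrac_j)$.

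First, I would set up the researcher's utility maximization problem with spotlight, treating each venue~$j$ as contributing two terms: $(\act_{i,j})^\alpha \venue_j^\beta$ from regular publications and $(\act_{i,j}^S)^\alpha (\spadv(\spfrac_j)\venue_j)^\beta$ from spotlights, subject to the budget $\sum_j (\cost_{i,j}\act_{i,j}+\cost_{i,j}^S\act_{i,j}^S)\le 1$. The same Lagrangian argument used to prove Lemma~\ref{prop:best-response} yields the closed form
\begin{equation*}
\act_{i,j} \,\propto\, \frac{\cost_{i,j}^{\frac{1}{\alpha-1}}\venue_j^{\frac{\beta}{1-\alpha}}}{\lambda_i^{\frac{1}{1-\alpha}}},\qquad \act_{i,j}^S \,\propto\, \frac{(\cost_{i,j}^S)^{\frac{1}{\alpha-1}}(\spadv(\spfrac_j)\venue_j)^{\frac{\beta}{1-\alpha}}}{\lambda_i^{\frac{1}{1-\alpha}}},
\end{equation*}
where the proportionality constant is the same for both (the budget multiplier $\lambda_i$). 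Dividing these two equations cancels $\lambda_i$ as well as $\venue_j$, giving
\begin{equation*}
\frac{\act_{i,j}^S}{\act_{i,j}} \;=\; \left(\frac{\cost_{i,j}^S}{\cost_{i,j}}\right)^{\frac{1}{\alpha-1}}\spadv(\spfrac_j)^{\frac{\beta}{1-\alpha}}.
\end{equation*}

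Next, I would impose the organizer's design constraint that a fraction $1/\spfrac_j$ of papers are spotlights, i.e.\ $\act_{i,j}^S/\act_{i,j}=1/\spfrac_j$, and solve the displayed identity for $\cost_{i,j}^S$. Using $1/(\alpha-1)=-1/(1-\alpha)$ and raising both sides to the $(1-\alpha)$ power, I obtain
\begin{equation*}
\frac{\cost_{i,j}^S}{\cost_{i,j}} \;=\; \spfrac_j^{\,1-\alpha}\,\spadv(\spfrac_j)^{\beta}.
\end{equation*}
The right-hand side depends only on the designer-chosen spotlight fraction $1/\spfrac_j$ and on the labeling-effect function $\spadv(\cdot)$; it involves neither the venue impact $\venue_j$ nor any action $\act_{i,\cdot}$ or $\act_{i,\cdot}^S$. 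This establishes the claim, and in fact yields a single multiplier that applies uniformly across researcher types.

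The main obstacle is not algebraic but conceptual: one must argue that this static choice of $\cost_{i,j}^S$ remains internally consistent under the game's best-response dynamics. Because both $\act_{i,j}$ and $\act_{i,j}^S$ depend on $\venue_j$ through the same factor $\venue_j^{\beta/(1-\alpha)}$ (the spotlight getting an extra factor $\spadv(\spfrac_j)^{\beta/(1-\alpha)}$), any shift in $\venue_j$ as researchers update scales the two quantities proportionally, leaving their ratio fixed. Thus the designed ratio $\act_{i,j}^S/\act_{i,j}=1/\spfrac_j$ is preserved round-by-round and in particular at equilibrium, confirming that the one-shot cost choice is indeed independent of the resulting impacts and publication counts.
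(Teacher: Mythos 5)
Your Lagrangian setup and the per-type ratio identity
\[
\frac{\act_{i,j}^S}{\act_{i,j}}=\Bigl(\frac{\cost_{i,j}^S}{\cost_{i,j}}\Bigr)^{\frac{1}{\alpha-1}}\spadv(\spfrac_j)^{\frac{\beta}{1-\alpha}}
\]
are correct and match the paper's closed forms. The gap is in the next step: you impose the spotlight fraction \emph{separately for each researcher type}, i.e.\ $\act_{i,j}^S/\act_{i,j}=1/\spfrac_j$ for every $i$. That forces the multiplier $\cost_{i,j}^S/\cost_{i,j}=\spfrac_j^{1-\alpha}\spadv(\spfrac_j)^{\beta}$ to be \emph{identical across types}, which is exactly the degenerate design the paper excludes: it violates \Cref{assumption:harder-spotlight} (the ratio must be strictly larger for lower types), and by \Cref{thm:harder-spotlight} a type-uniform multiplier makes the spotlight session's actual impact equal to the regular session's, contradicting Constraint \eqref{constraint:sp-better}. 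So while you do exhibit \emph{a} cost that is independent of $\vvenue$, it is not an admissible spotlight design in this model, and the proposition loses its intended force (it is invoked precisely to justify that the organizer's type-\emph{heterogeneous} selection rule, e.g.\ $r_{i,j}=a_j(N+1-i)^2$ in the simulations, can be fixed once and for all).

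The paper instead imposes the fraction on the \emph{aggregate} publication counts, $\frac{1}{\spfrac_j}=\frac{\vact_{:,j}^S\cdot\uvec}{\vact_{:,j}\cdot\uvec+\vact_{:,j}^S\cdot\uvec}$, summing the best responses over all types weighted by $\vden$. Because every term in numerator and denominator carries the same factor $\venue_j^{\beta/(1-\alpha)}$, that factor cancels, yielding the single fixed-point equation \eqref{eq:sp-frac-cost} in the \emph{vector} $\vspcostratio_{:,j}$ that does not involve $\vvenue$. This one scalar constraint leaves $n-1$ degrees of freedom for the organizer to choose type-dependent multipliers consistent with \Cref{assumption:harder-spotlight}, which is the actual content of the claim. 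To repair your argument, keep your ratio identity but apply it inside the population sum rather than type by type, and conclude invariance of the resulting equation in $\vspcostratio_{:,j}$ rather than solving for a single common ratio. (Your closing observation that the ratio $\act_{i,j}^S/\act_{i,j}$ is preserved under best-response updates is correct and is essentially the same cancellation the paper uses; it just needs to be deployed at the aggregate level.)
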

\begin{proof}
To prove the proposition, we first solve the researcher's utility maximization problem. Then we calculate the ratio of spotlight papers and regular papers and notice it's invariant of the dynamic venue average impact.

Define $\cost_{i, j}^S=\cost_{i, j}\cdot \spcostratio_{i, j}$, where $\spcostratio_{i, j}$ is the relative increase in cost for publishing a spotlight paper.

The researcher's utility maximization problem is as follows:
\begin{align*}
    \max_{\vact_i, \vact_{i}^S %\act_{1,t+1}^i, \cdots,\act_{k,t+1}^i, \act_{1,t+1}^{i,S}, \cdots,\act_{k,t+1}^{i,S}, 
    } &\ \left( (\vact_{i})^{\alpha} \cdot \vvenue
    ^{\beta}+(\vact_{i}^S)^{\alpha} \cdot ( \spadv(\bm{\spfrac}) \odot \vvenue)^{\beta}\right)^{\frac{1}{\beta}} \\
\text{s.t.}\qquad & \vact_{i} \cdot \vcost_{i} + \vact_{i}^S \cdot \cost_{i}^S \leq 1  \nonumber
\end{align*}
where $\alpha \in (0,1)$ and $\beta \geq 1$.

Set up the Lagrangian and we get the following solutions:
%\yifan{change notations here}
\begin{align}\label{regular_optimal-action}
    \act_{i, j} &= \frac{(\cost_{i, j})^{\frac{1}{\alpha-1}} \cdot \venue_{j}^{\frac{\beta}{1-\alpha}}}
    {\sum_{l=1}^k \cost_{i,l}^{\frac{\alpha}{\alpha-1}} \cdot \venue_l^{\frac{\beta}{1-\alpha}} \cdot (1+r_{i,l}^{\frac{\alpha}{\alpha-1}} \cdot \spadv(\spfrac_l)^{\frac{\beta}{1-\alpha}})}  
\end{align}

%\haichuan{division by zero problem, need to discuss whether $r_{i,j} = 0$}

\begin{align}\label{spotlight_optimal-action}
    \act_{i, j, S} &= 
    \frac{(\cost_{i, j, S})^{\frac{1}{\alpha-1}}  \cdot \spadv(\spfrac_j)^{\frac{\beta}{1-\alpha}}\cdot \venue_{j}^{\frac{\beta}{1-\alpha}}}   {\sum_{l=1}^k \cost_{i,l}^{\frac{\alpha}{\alpha-1}} \cdot \venue_l^{\frac{\beta}{1-\alpha}} \cdot (1+r_{i,l}^{\frac{\alpha}{\alpha-1}} \cdot \spadv(\spfrac_l)^{\frac{\beta}{1-\alpha}})}   \\
    &= \frac{(\cost_{i, j})^{\frac{1}{\alpha-1}} \cdot (\spcostratio_{i, j})^{\frac{1}{\alpha-1}} \cdot \spadv(\spfrac_j)^{\frac{\beta}{1-\alpha}}\cdot \venue_{j}^{\frac{\beta}{1-\alpha}}}   {\sum_{l=1}^k \cost_{i,l}^{\frac{\alpha}{\alpha-1}} \cdot \venue_l^{\frac{\beta}{1-\alpha}} \cdot (1+r_{i,l}^{\frac{\alpha}{\alpha-1}} \cdot \spadv(\spfrac_l)^{\frac{\beta}{1-\alpha}})}  \\
    &= a_{i, j} \cdot (\spcostratio_{i, j})^{\frac{1}{\alpha-1}} \cdot \spadv(\spfrac_j)^{\frac{\beta}{1-\alpha}}
\end{align}

Now, we can calculate the ratio of spotlight papers and regular papers and set it to $\frac{1}{\spfrac_j}$: 
\begin{align}\label{eq:fixed_ratio}
    \frac{1}{\spfrac_j} &= \frac{\vact_{:,j}^S \cdot\uvec}{\vact_{:, j}\cdot\uvec + \vact_{:,j}^S\cdot\uvec}\nonumber\\
    &= (\vcost_{:, j}^{\frac{1}{\alpha-1}} \odot \vden) \cdot (\vspcostratio_{: ,j })^{\frac{1}{\alpha-1}} \cdot \spadv(\spfrac_j)^{\frac{\beta}{1-\alpha}} \cdot (\venue_{j})^{\frac{\beta}{1-\alpha}}\nonumber\\
    &\bigg/\bigg[(\vcost_{:, j}^{\frac{1}{\alpha-1}} \odot \vden) \cdot\uvec\cdot (\venue_{j})^{\frac{\beta}{1-\alpha}}\\
    &+ (\vcost_{:, j}^{\frac{1}{\alpha-1}} \odot \vden) \cdot (\vspcostratio_{: ,j })^{\frac{1}{\alpha-1}} \cdot \spadv(\spfrac_j)^{\frac{\beta}{1-\alpha}} \cdot (\venue_{j})^{\frac{\beta}{1-\alpha}}\bigg]\nonumber\\
    % + (c_j^i)^{\frac{1}{\alpha-1}} \cdot (\venue_{j,t})^{\frac{\beta}{1-\alpha}} \cdot (\spcostratio_j^i)^{\frac{1}{\alpha-1}} \cdot \spadv(\spfrac_j)^{\frac{\beta}{1-\alpha}}}\\
    &= \frac{(\vcost_{:, j}^{\frac{1}{\alpha-1}} \odot \vden) \cdot (\vspcostratio_{: ,j })^{\frac{1}{\alpha-1}} \cdot \spadv(\spfrac_j)^{\frac{\beta}{1-\alpha}}}{(\vcost_{:, j}^{\frac{1}{\alpha-1}} \odot \vden) \cdot\uvec + (\vcost_{:, j}^{\frac{1}{\alpha-1}} \odot \vden) \cdot (\vspcostratio_{: ,j })^{\frac{1}{\alpha-1}} \cdot \spadv(\spfrac_j)^{\frac{\beta}{1-\alpha}}} \label{eq:sp-frac-cost}%\cdot [1+(\spcostratio_j^i)^{\frac{1}{\alpha-1}} \cdot \spfrac_j^{\frac{\beta}{1-\alpha}}]}
\end{align}
%\haichuan{can be simplified if we define entry-wise vector addition}
Note $\vspcostratio_{: ,j }$ is the fixed point of \Cref{eq:fixed_ratio}, which does not involve $\vvenue$;  hence $\spcostratio$ is invariant to the dynamic venue impact. 
% Notice that the ratio $\spcostratio$ of costs can be invariant of venue average impact which is dynamically changing. 
\end{proof}

\subsection{The proof of \Cref{lem:eq-sp-unique}}\label{appdx:eq-sp-unique}
\lemequniquespotlight*
\begin{proof}
Define $\cost_{i, j}^S=\cost_{i, j}\cdot \spcostratio_{i, j}$, where $\spcostratio_{i, j}\geq 1$ is the relative increase in cost for publishing a spotlight paper.

The proof follows the same idea as \Cref{thm:2p-uniqueness}. Consider the same construction of characteristic function $f$ and decompose $f$ into linear combination of functions $h_l$. The venue $j$ with spotlight has 
\begin{align*}
        h_j(x)=\bigg[x\cdot\cost_{H, 1}\left(\frac{\cost_{H,j}}{\cost_{H, 1}}\right)^{\frac{\alpha}{\alpha-1}}(1+\spadv(\spfrac_j)^{\frac{\beta}{1-\alpha}}\spcostratio_{H}^{\frac{\alpha}{\alpha-1}})\nonumber\\
       \quad -\cost_{L, 1}\left(\frac{\cost_{L,j}}{\cost_{L,1}}\right)^{\frac{\alpha}{\alpha-1}}(1+\spadv(\spfrac_j)^{\frac{\beta}{1-\alpha}} \spcostratio_{L}^{\frac{\alpha}{\alpha-1}})\bigg]\venue_j^{\frac{\beta}{1-\alpha}}    
\end{align*}

By defining $\Tilde{\cost}_{i, 1}=\cost_{i, 1}((1+\spadv(\spfrac_j)^{\frac{\beta}{1-\alpha}}\spcostratio_{i}^{\frac{\alpha}{\alpha-1}}))$ for $i\in\{L, H\}$, we see that $\Tilde{\cost}_{H, 1}>\Tilde{\cost}_{L, 1}$ and the same form of $h_j$ in proof of \Cref{thm:2p-uniqueness} follows. Thus, $h_j$ is convex. 
\end{proof}
% \vspace{-2mm}

\subsection{The proof of \Cref{thm:sp-impact-compare}}\label{appdx:threshold_effect_proof}
\thmspimpactcompare*
\begin{proof}
    
We prove this by inspecting the extra cost each type invests on spotlight papers.  If a high type spends more budget on publishing spotlight papers than a low type, the equilibrium is the same as scaling down the high type's budget by a $\eta<1$ factor on regular venues, and also the same as scaling down the fraction of high types in the population. By \Cref{thm:scale-eq}, the research impact of all regular venues are reduced.

    Define $\cost_{i, j}^S =\cost_{i, j}\cdot \spcostratio_{i, j}$, where $\spcostratio_{i, j}$ is the relative increase in cost for publishing a spotlight paper.

First notice that the relative increase in cost on spotlight paper is fixed once $\spfrac_j$ is fixed by \Cref{eq:sp-frac-cost}.

    Now we analyze the equilibrium ratio of cost that each type invests on spotlight papers. Define $x=\frac{\act_{H, 1}}{\act_{L, 1}}$.
    \begin{align}\label{eq:cost-ratio-sp}
        \frac{\act_{H, j}^S\cdot\cost_{H, j}^S}{\act_{L, j}^S\cdot\cost_{L, j}^S}=\frac{\act_{H, j}\cost_{H, j}}{\cost_{L, j}\cost_{L, j}}(\frac{\spcostratio_{H, j}}{\spcostratio_{L, j}})^{\frac{\alpha}{\alpha-1}}=x(\frac{\cost_{H, j}}{\cost_{L, j}}\cdot \frac{\spcostratio_{H, j}}{\spcostratio_{L, j}})^{\frac{\alpha}{\alpha-1}}.
    \end{align}
    We would like to know if $\frac{\act_{H, j}^S\cdot\cost_{H, j}^S}{\act_{L, j}^S\cdot\cost_{L, j}^S}>1$ or not. Plugging $x=(\frac{\cost_{H, j}}{\cost_{L, j}}\cdot \frac{\spcostratio_{H, j}}{\spcostratio_{L, j}})^{\frac{\alpha}{1-\alpha}}$ into the function $f$ we defined in proof of \Cref{lem:eq-sp-unique}, if $f(x)>0$, then the equilibrium $x<(\frac{\cost_{H, j}}{\cost_{L, j}}\cdot \frac{\spcostratio_{H, j}}{\spcostratio_{L, j}})^{\frac{\alpha}{1-\alpha}}$ has $\frac{\act_{H, j}^S\cdot\cost_{H, j}^S}{\act_{L, j}^S\cdot\cost_{L, j}^S}<1$, implying all venues have increased average impact. Otherwise, $f(x)<0$ implies $\frac{\act_{H, j}^S\cdot\cost_{H, j}^S}{\act_{L, j}^S\cdot\cost_{L, j}^S}>1$, and all venues have decreased average impact.

     Consider the characteristic function $f(x)$. For venue $j$ switching to spotlight labeling, by definition,   
    \begin{align*}
        f(x) = \cost_{H, 1}^{\frac{1}{\alpha-1}}\bigg[\sum_l h_l(x)+[x\cdot\cost_{H, 1}\left(\frac{\cost_{H,l}}{\cost_{H, 1}}\right)^{\frac{\alpha}{\alpha-1}}\spcostratio_{H,l}^{\frac{\alpha}{\alpha-1}}\\
        -\cost_{L, 1}\left(\frac{\cost_{L,l}}{\cost_{L,1}}\right)^{\frac{\alpha}{\alpha-1}}\spcostratio_{L,l}^{\frac{\alpha}{\alpha-1}}]\spadv(\spfrac_l)^{\frac{\beta}{1-\alpha}}\venue_l^{\frac{\beta}{1-\alpha}}\bigg],
    \end{align*}
     where 
\begin{align*}
    h_l(x)=\left[x\cdot\cost_{H, 1}\left(\frac{\cost_{H,l}}{\cost_{H, 1}}\right)^{\frac{\alpha}{\alpha-1}}-\cost_{L, 1}\cdot\left(\frac{\cost_{L,l}}{\cost_{L,1}}\right)^{\frac{\alpha}{\alpha-1}}\right]\venue_l^{\frac{\beta}{1-\alpha}}
\end{align*}

Define the only differing term for different $j$ as
\begin{align*}
    &h_{j, S}(
x)\\
=&\bigg[x\cdot\cost_{H, 1}\left(\frac{\cost_{H,j}}{\cost_{H, 1}}\right)^{\frac{\alpha}{\alpha-1}}\spcostratio_{H,j}^{\frac{\alpha}{\alpha-1}}-\cost_{L, 1}\left(\frac{\cost_{L,j}}{\cost_{L,1}}\right)^{\frac{\alpha}{\alpha-1}}\spcostratio_{L,j}^{\frac{\alpha}{\alpha-1}}\bigg] \\
& \cdot \spadv(\spfrac_j)^{\frac{\beta}{1-\alpha}}\venue_j^{\frac{\beta}{1-\alpha}}.
\end{align*} 
When $x=(\frac{\cost_{H, j}}{\cost_{L, j}}\cdot \frac{\spcostratio_{H, j}}{\spcostratio_{L, j}})^{\frac{\alpha}{1-\alpha}}$, 
\begin{align*}
    h_{j, S}(x) =\cost_{H, 1}^{\frac{1}{1-\alpha}}[\cost_{L,j}^\frac{\alpha}{\alpha-1}\spcostratio_{L, j}^{\frac{\alpha}{\alpha-1}} -\cost_{L,j}^{\frac{\alpha}{\alpha-1}}\spcostratio_{L, j}^{\frac{\alpha}{\alpha-1}}]  \spadv(\spfrac_j)^{\frac{\beta}{1-\alpha}}\venue_j^{\frac{\beta}{1-\alpha}}=0.
\end{align*}
The value of $x$ is decreasing in venue index $j$, together with the fact that $h_l(x)$ and $h_{l,S}(x)$ are linear functions of $x$, implies that there exists a threshold index $j_0$, such that for $j\geq j_0$, $f(x)\leq  0$, while for  $j<j_0$, $f(x)>0$. 
\end{proof}

\end{document}